\newtheorem{Theorem}{Theorem}[section]
\newtheorem{lem}[Theorem]{Lemma}
\newtheorem{Remark}[Theorem]{Remark}
\newtheorem{Corollary}[Theorem]{Corollary}
\newtheorem{Example}[Theorem]{Example}
\numberwithin{equation}{section}
\begin{document}
	
	\title{New upper bounds on the number of non-zero weights of constacyclic codes}
	\author{Li Chen, Yuqing Fu and Hongwei Liu\footnote{E-mail addresses:
			chenxiaoli@mails.ccnu.edu.cn (L. Chen),
			yuqingfu@mails.ccnu.edu.cn (Y. Fu),
			hwliu@ccnu.edu.cn (H. Liu)}}
	\date{\small
		School of Mathematics and Statistics, Central China Normal University, Wuhan, 430079, China\\
	}
	\maketitle
	{\noindent\small{\bf Abstract:} For any simple-root constacyclic code $\mathcal{C}$ over a finite field $\mathbb{F}_q$, as far as we know, the group $\mathcal{G}$ generated by the multiplier, the constacyclic shift and the scalar multiplications is the largest subgroup of the automorphism group ${\rm Aut}(\mathcal{C})$ of $\mathcal{C}$. In this paper, by calculating the number of $\mathcal{G}$-orbits of $\mathcal{C}\backslash\{\bf 0\}$, we give an explicit upper bound on the number of non-zero weights of $\mathcal{C}$ and present a necessary and sufficient condition for $\mathcal{C}$ to meet the upper bound. Some examples in this paper show that our upper bound is tight and better than the upper bounds in [Zhang and Cao, FFA, 2024]. In particular, our main results provide a new method to construct few-weight constacyclic codes. Furthermore, for the constacyclic code $\mathcal{C}$ belonging to two special types, we obtain a smaller upper bound on the number of non-zero weights of $\mathcal{C}$ by substituting $\mathcal{G}$ with a larger subgroup of ${\rm Aut}(\mathcal{C})$. The results derived in this paper generalize the main results in [Chen, Fu and Liu, IEEE-TIT, 2024]}.
	
	\vspace{1ex}
	{\noindent\small{\bf Keywords:}
		Constacyclic code; Hamming weight; upper bound; group action.}
	
	2020 \emph{Mathematics Subject Classification}:  94B05, 94B60
	
	\section{Introduction}
	Throughout this paper, let $\mathbb{F}_{q}$ denote the finite field with $q$ elements, where $q$ is a prime power.
	Let $n$ be a positive integer that is coprime to $q$.
	An $[n,k,d]$ linear code $\mathcal{C}$ over $\mathbb{F}_{q}$ is defined as a $k$-dimensional subspace of $\mathbb{F}_{q}^{n}$ with minimum (Hamming) distance $d$.
	Let $A_i$ be the number of codewords of (Hamming) weight $i$ in $\mathcal{C}$.
	The polynomial $1 + A_1 x + \cdots + A_n x^n$ is called the weight enumerator of $\mathcal{C}$
	and $(1,A_1,\cdots,A_n)$ is called the weight distribution (or weight spectrum) of $\mathcal{C}$.
	The weight distribution contains important information for estimating the probability and capability of error correction of a code.
	Therefore, the weight distribution attracts much attention in coding theory, and determining the weight distribution of linear codes has also become a hot topic.
	For a linear code $\mathcal{C}$, let $t$ be the number of nonzero $A_i$'s in the weight distribution.
	Then the code $\mathcal{C}$ is called a $t$-weight code.
	Linear codes with few weights are important in secret sharing \cite{25}, \cite{26}, authentication codes \cite{27}, association schemes \cite{28} and strongly regular graphs \cite{29}.
	
	Let $\mathbb{F}^{*}_q$ denote the multiplicative group of $\mathbb{F}_{q}$.
	For $\lambda \in \mathbb{F}^{*}_q$, a linear code $\mathcal{C}$ is called a $\lambda$-constacyclic code if $(\lambda a_{n-1},a_0,a_1,\cdots, a_{n-2})\in \mathcal{C}$ for every $\mathbf{c}=(a_0, a_1,\cdots,a_{n-1})\in \mathcal{C}$.
	It is well known that a $\lambda$-constacyclic code of length $n$ over $\mathbb{F}_{q}$ can be identified as an ideal of the quotient ring $\mathbb{F}_{q}[x]/\langle x^{n}-\lambda \rangle$ via the $\mathbb{F}_q$-module isomorphism $\pi : \mathbb{F}_{q}^{n} \rightarrow \mathcal{R}^{(q)}_{n,\lambda}:=\mathbb{F}_{q}[x]/\langle x^{n}-\lambda \rangle$ given by
	\vspace{-1.5ex}
	$$(a_0, a_1,\cdots,a_{n-1}) \mapsto a_{0}+a_{1}x+\cdots+a_{n-1}x^{n-1} \pmod{x^n - \lambda},\vspace{-0.75ex}$$
	where $\langle x^{n}-\lambda \rangle$ is an ideal of the polynomial ring $\mathbb{F}_{q}[x]$ generated by $x^{n}-\lambda$.
	If $\lambda = 1$, $\lambda$-constacyclic codes are just cyclic codes;
	and if $\lambda = -1$, $\lambda$-constacyclic codes are known as negacyclic codes.
	As we all know, a linear code $\mathcal{C}$ of length $n$ over $\mathbb{F}_{q}$ corresponds to an $\mathbb{F}_{q}$-subspace of the algebra $\mathcal{R}^{(q)}_{n,\lambda}$.
	Moreover, $\mathcal{C}$ is $\lambda$-constacyclic if and only if the corresponding subspace is an ideal of $\mathcal{R}^{(q)}_{n,\lambda}$.
	A $\lambda$-constacyclic code $\mathcal{C}$ is called irreducible if $\mathcal{C}$ is a minimal ideal of $\mathcal{R}^{(q)}_{n,\lambda}$.
	When $n$ is coprime to the characteristic of $\mathbb{F}_q$, a constacyclic code of length $n$ over $\mathbb{F}_q$ is called a simple-root constacyclic code; otherwise it is called a repeated-root constacyclic code.
	Constacyclic codes form an algebraically rich family of error-correcting codes, and are generalizations of cyclic and negacyclic codes.
	These codes can be efficiently encoded using shift registers and can be easily decoded due to their rich algebraic structure, which explain their preferred role in engineering.
	
	The number of non-zero weights of a linear code plays a crucial role in the theory of error-correcting codes, and its research topic has always attracted people's interest.
	In 1969, Assmus and Mattson \cite{3} derived a relationship between codes and designs in terms of the number of non-zero weights of a linear code.
	In 1973, Delsarte studied the number of distinct distances for a code $\mathcal{C}$. In the linear case,
	this reduces to studying the number of distinct weights of the given code \cite{8},  which is consistent with studying the number of non-zero weights.
	In that work, the author emphasizes the importance of this parameter by analyzing its relationships with the number of non-zero weights of the dual code, as well as the minimum distance of both the code and its dual.
	By analyzing these parameters, the author derived many interesting results about the properties of distance.
	In particular, these parameters are used to calculate the coset weight distributions of a code.
	In addition, the number of non-zero weights of a code has close connection with orthogonal arrays and combinatorial designs (see \cite{8}, \cite{9}).
	
	For a general linear code, it seems very difficult to obtain an explicit formula for the number of non-zero weights of the code. A more modest objective is to establish acceptable bounds on the number of non-zero weights of a linear code.
	Indeed, several recent works have looked into the upper and lower bounds on the number of non-zero weights of a linear code.
	Alderson \cite{1} determined necessary and sufficient conditions for the existence of linear full weight spectrum codes over $\mathbb{F}_{q}$, i.e., linear codes satisfying that there exist codewords of each weight less than or equal to the code length.
	Ding and Yang \cite{Ding13}  studied the weight distributions of irreducible cyclic codes, and established a lower and upper bound on the number of non-zero weights of these codes.
	Shi et al. \cite{19} conjectured that for a linear code of dimension $k$ over $\mathbb{F}_{q}$, the largest number of non-zero weights of this code is bounded from above by $(q^{k}-1)/(q-1)$, and proved that the bound is sharp for binary codes and for all $q$-ary codes of dimension $k = 2$.
	The conjecture was completely proved by Alderson and Neri \cite{2}.
	Shi et al. \cite{17} presented lower and upper bounds on the largest number of non-zero weights of cyclic codes, and gave sharper upper bounds for strongly cyclic codes, where the periods of their non-zero codewords are equal to the code length.
	Shi et al. \cite{18} investigated the largest number of non-zero weights of quasi-cyclic codes, and presented several lower and upper bounds on the largest number of non-zero weights of quasi-cyclic codes.

	Chen and Zhang \cite{7} observed that the number of non-zero weights of a linear code is bounded from above by the number of orbits of the automorphism group (or a subgroup of the automorphism group) acting on the code, with equality if and only if any two codewords with the same weight belong to the same orbit.
	Let $\mathcal{C}$ be a simple-root cyclic code of length $n$ over $\mathbb{F}_{q}$
	and let $\mathcal{G}$ be the subgroup of the automorphism group ${\rm Aut}(\mathcal{C})$ of $\mathcal{C}$ generated by the cyclic shift and the scalar multiplications.
	The authors derived an explicit upper bound on the number of non-zero weights of $\mathcal{C}$ by calculating the number of $\mathcal{G}$-orbits of $\mathcal{C}^{*}=\mathcal{C}\backslash\{0\}$, and established a necessary and sufficient condition for codes meeting the bound.
	Li and Shi \cite{20} established a tight upper bound on the number of non-zero weights of a simple-root quasi-cyclic code. Zhang and Cao \cite{21} established a tight upper bound on the number of non-zero weights of a simple-root constacyclic code. In \cite{22}, Chen, Fu and Liu improved the upper bound on the number of non-zero weights of a simple-root cyclic code $\mathcal{C}$ in \cite{7} by replacing $\mathcal{G}$ with larger subgroups of ${\rm Aut}(\mathcal{C})$.
	
	Motivated by the work \cite{22}, the objective of this paper is to establish a smaller upper bound than those in \cite{21} on the number of non-zero weights of a simple-root $\lambda$-constacyclic code $\mathcal{C}$.
	Using the observation of Chen and Zhang in \cite{7} and Burnside's lemma, the problem is transformed into finding larger subgroups of ${\rm Aut}(\mathcal{C})$ than that in \cite{21}.
	It is well known that both the $\lambda$-constacyclic shift and the scalar multiplications
	are automorphisms of $\mathcal{C}$.
	In addition, we note that the multiplier $\mu_{q}$ defined on $\mathcal{R}^{(q)}_{n,\lambda}$ by
	\vspace{-1ex} $$\mu_{q}\big(\sum_{i=0}^{n-1}a_{i}x^{i}\big)=\sum_{i=0}^{n-1} a_{i}x^{qi}~({\rm mod}~x^n-\lambda)$$
	is also an automorphism of $\mathcal{C}$. Let $\mathcal{G}'$ be the subgroup of ${\rm Aut}(\mathcal{C})$ generated by the $\lambda$-constacyclic shift and the scalar multiplications and $\mathcal{G}''$ be the subgroup of ${\rm Aut}(\mathcal{C})$ generated by the multiplier $\mu_{q}$, the $\lambda$-constacyclic shift and the scalar multiplications.
	To the best of our knowledge, $\mathcal{G}''$ is the largest subgroup of ${\rm Aut}(\mathcal{C})$ for any simple-root $\lambda$-constacyclic code $\mathcal{C}$ over $\mathbb{F}_{q}$. Clearly, $\mathcal{G}'$ is a subgroup of $\mathcal{G}''$, and the number of $\mathcal{G}''$-orbits of $\mathcal{C}^{*}$ is less than or equal to the number of $\mathcal{G}'$-orbits of $\mathcal{C}^{*}$.
	Therefore, we need to find the number of $\mathcal{G}''$-orbits of $\mathcal{C}^{*}$, which then naturally derive smaller upper bounds on the number of non-zero weights of $\mathcal{C}$.
	However, intuitively, the structure of $\mathcal{G}''$ is more complicate than that of $\mathcal{G}'$, which implies that  finding the number of $\mathcal{G}''$-orbits of $\mathcal{C}^{*}$ may be more difficult than finding the number of $\mathcal{G}'$-orbits of $\mathcal{C}^{*}$.
	Comparing the proofs in our paper and those in \cite{21}, we need more subtle calculations and have to get around new difficulties that have not arisen before.
	
	In this paper, by calculating the number of $\mathcal{G''}$-orbits of $\mathcal{C}^{*}$, we establish an explicit upper bound on the number of non-zero weights of $\mathcal{C}$ and present a necessary and sufficient condition for $\mathcal{C}$ to meet the upper bound. Many examples are presented to show that our upper bound is tight and strictly less than the upper bounds in \cite{21}.
	Moreover, comparing our results with those in \cite[subsection 3.2]{21}, our results remove the constrain that ${\rm gcd}\big( \frac{q-1}{r}, r \big) = 1$, i.e., our results hold for arbitrary simple-root $\lambda$-constacyclic code, where $r$ is the order of $\lambda$ in $\mathbb{F}_{q}^{*}$.
	In addition, for two special classes of constacyclic codes, we replace $\mathcal{G}''$ with lager subgroups of the automorphism groups of these codes, and then we obtain smaller upper bounds on the number of non-zero weights of these codes. The results derived in this paper improve and generalize some of the results in \cite{22} and \cite{21}.
	In particular, our main results provide a new method to construct few-weight constacyclic codes.
	
	This paper is organized as follows. In Section 2, we review some definitions and basic results about group action, $\lambda$-constacyclic codes and subgroups of the automorphism group of $\mathcal{R}^{(q)}_{n,\lambda}$. In Section 3, we propose the main results of this paper. This is divided into three subsections: in subsections 3.1 and 3.2, we present improved upper bounds on the number of non-zero weights of irreducible $\lambda$-constacyclic codes and general $\lambda$-constacyclic codes, respectively, by calculating the number of $\mathcal{G}''$-orbits of $\mathcal{C}\backslash \{\bf 0\}$. In subsection 3.3, for two special classes of constacyclic codes, we derive smaller upper bounds on the number of non-zero weights of these codes by replacing $\mathcal{G}''$ with lager subgroups of the automorphism groups of these codes. In Section 4, we conclude this paper with remarks and some possible future works.
	
	\section{Preliminaries}
	Throughout this paper,
	let $\mathbb{F}_{q}$ denote the finite field with $q$ elements  and let  $n>1$ be a positive integer relatively prime to $q$, where $q$ is a power of a prime.
	By $\mathbb{F}_{q}^{*}$, we denote the multiplicative group of $\mathbb{F}_{q}$.
	For $a\in \mathbb{F}_{q}^{*}$, ${\rm ord}(a)$ denotes the order of $a$ in $\mathbb{F}_{q}^{*}$.
	Let $\mathbb{Z}_{n}$ denote the residue class ring of the integer ring $\mathbb{Z}$ modulo $n$ and let $\mathbb{Z}_{n}^{*}$ be the group of units in $\mathbb{Z}_{n}$.
	For $b\in \mathbb{Z}_{n}^{*}$, ${\rm ord}_{n}(b)$ denotes the order of $b$ in $\mathbb{Z}_{n}^{*}$.
	As usual, let $|X|$ denote the cardinality of a finite set $X$.
	For integers $b_{1},b_{2},\cdots,b_{r}$, where $r\geq 2$ is a positive integer,
	${\rm gcd}(b_{1},b_{2},\cdots,b_{r})$ denotes the greatest common divisor of $b_{1},b_{2},\cdots,b_{r}$.
	Given two integers $b_{1}$ and $b_{2}$, if $b_{1}$ divides $b_{2}$  then we write $b_{1} |\, b_{2}$.
	For a positive integer $b$, $\varphi(b)$ is the Euler's function of $b$, which is the number of positive integers not exceeding $b$ and coprime to $b$.
	
	We begin by reviewing the notion of group action on a linear code.
	
	\subsection{Group action on a linear code}
	Suppose that a finite group $G$ acts on a nonempty finite set $X$. For each $x\in X$, $Gx=\{gx\,|\,g\in G\}$ is called an orbit of this group action containing $x$ (or simply a $G$-orbit). All the $G$-orbits partition $X$, that is, $X$ is the disjoint union of the $G$-orbits. For convenience, the set of all the orbits of $G$ on $X$ is denoted as
	$G\backslash X=\{Gx\,|\,x\in X\}.$
	
	Let $\mathcal{C}$ be a linear code and let $\mathcal{G}$ be a subgroup of ${\rm Aut}(\mathcal{C})$. The next lemma reveals that the number of non-zero weights of $\mathcal{C}$ is bounded from above by the number of $\mathcal{G}$-orbits of $\mathcal{C}^{*}=\mathcal{C}\backslash \{\bf 0\}$, with equality if and only if any two codewords of $\mathcal{C}^{*}$ with the same weight are in the same $\mathcal{G}$-orbit.
	
	\begin{lem}{\rm(\cite[Proposition II.2]{7})}\label{l2.1}
		Let $\mathcal{C}$ be a linear code of length $n$ over $\mathbb{F}_{q}$ with $\ell$ non-zero weights and let ${\rm Aut}(\mathcal{C})$ be the automorphism group of $\mathcal{C}$. Suppose that $\mathcal{G}$ is a subgroup of ${\rm Aut}(\mathcal{C})$. If the number of $\mathcal{G}$-orbits of $\mathcal{C}^{*}=\mathcal{C}\backslash \{\bf 0\}$ is equal to $N$, then $\ell \leq N$. Moreover, the equality holds if and only if for any two non-zero codewords ${\bf c}_{1},{\bf c}_{2}\in \mathcal{C}^{*}$ with the same weight, there exists an automorphism $A\in \mathcal{G}$ such that $A{\bf c}_{1}={\bf c}_{2}$.
	\end{lem}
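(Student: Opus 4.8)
The plan is to build an explicit surjection from the set $\mathcal{G}\backslash\mathcal{C}^{*}$ of orbits onto the set $W$ of non-zero weights occurring in $\mathcal{C}$; from this $\ell=|W|\le|\mathcal{G}\backslash\mathcal{C}^{*}|=N$ follows at once, and the equality condition can then be read off from the injectivity of this map.

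First I would recall the structural fact underpinning everything: every $A\in{\rm Aut}(\mathcal{C})$ is a monomial transformation of $\mathbb{F}_{q}^{n}$, that is, a permutation of the $n$ coordinates followed by multiplication of each coordinate by a non-zero scalar, and such maps are isometries of the Hamming metric. In particular ${\rm wt}(A\mathbf{c})={\rm wt}(\mathbf{c})$ for every $\mathbf{c}\in\mathbb{F}_{q}^{n}$. Consequently the Hamming weight is constant on each $\mathcal{G}$-orbit of $\mathcal{C}^{*}$: if $\mathbf{c}'=A\mathbf{c}$ with $A\in\mathcal{G}\subseteq{\rm Aut}(\mathcal{C})$, then ${\rm wt}(\mathbf{c}')={\rm wt}(\mathbf{c})$. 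Hence the assignment $\Phi:\mathcal{G}\mathbf{c}\mapsto{\rm wt}(\mathbf{c})$ is a well-defined map from $\mathcal{G}\backslash\mathcal{C}^{*}$ to $W$.

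Next I would observe that $\Phi$ is surjective: by the definition of $W$, for each $w\in W$ there is a codeword $\mathbf{c}\in\mathcal{C}^{*}$ with ${\rm wt}(\mathbf{c})=w$, and then $\Phi(\mathcal{G}\mathbf{c})=w$. Since $\ell=|W|$ and $N=|\mathcal{G}\backslash\mathcal{C}^{*}|$, surjectivity of $\Phi$ yields $\ell\le N$. For the equality assertion, note that $\ell=N$ holds if and only if $\Phi$ is also injective, i.e. if and only if no two distinct orbits carry the same weight. Spelling this out, $\Phi$ is injective precisely when, for all $\mathbf{c}_{1},\mathbf{c}_{2}\in\mathcal{C}^{*}$, the equality ${\rm wt}(\mathbf{c}_{1})={\rm wt}(\mathbf{c}_{2})$ forces $\mathcal{G}\mathbf{c}_{1}=\mathcal{G}\mathbf{c}_{2}$, and $\mathcal{G}\mathbf{c}_{1}=\mathcal{G}\mathbf{c}_{2}$ is in turn equivalent to the existence of some $A\in\mathcal{G}$ with $A\mathbf{c}_{1}=\mathbf{c}_{2}$. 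This is exactly the stated necessary and sufficient condition.

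There is essentially no hard analytic step here; once the key structural fact that code automorphisms are Hamming isometries is in hand, the rest is a bookkeeping argument about a surjection between two finite sets. The only point that deserves care is the convention for ${\rm Aut}(\mathcal{C})$: one must work with the monomial automorphism group (permutation automorphisms together with diagonal scalings that preserve $\mathcal{C}$), so that weight invariance is automatic. In the constacyclic setting of this paper this causes no difficulty, since the $\lambda$-constacyclic shift, the scalar multiplications, and the multiplier $\mu_{q}$ are all monomial maps, so the subgroups $\mathcal{G}'$ and $\mathcal{G}''$ consist of Hamming isometries and the lemma applies directly to them.
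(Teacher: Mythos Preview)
Your argument is correct and is exactly the standard proof of this fact. Note, however, that the paper does not supply its own proof of this lemma: it is quoted verbatim from \cite[Proposition~II.2]{7} and used as a black box, so there is nothing in the paper to compare your proposal against beyond the statement itself.
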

	
	\begin{lem}{\rm (Burnside's lemma) \cite[Theorem 2.113]{16}}
		Let $\mathcal{G}$ acts on a nonempty finite set $X$. Then the number of $\mathcal{G}$-orbits of $X$ is equal to
		\begin{equation}\label{e2.1}
			\frac{1}{|\mathcal{G}|}\sum_{g\in \mathcal{G}}|{\rm Fix}(g)|,
		\end{equation}
		where ${\rm Fix}(g)=\{x\in X\,|\,gx=x\}$.
	\end{lem}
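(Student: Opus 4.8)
The plan is to establish Burnside's lemma by the classical double-counting argument applied to the ``fixed-pair'' set $S=\{(g,x)\in \mathcal{G}\times X\mid gx=x\}$. First I would count $|S|$ by summing over the group elements: for each $g\in\mathcal{G}$ the number of $x$ with $gx=x$ is exactly $|{\rm Fix}(g)|$, so $|S|=\sum_{g\in\mathcal{G}}|{\rm Fix}(g)|$, which is precisely the expression appearing in the claimed identity. Then I would count $|S|$ by summing over the points of $X$ instead: for each $x\in X$ the number of $g\in\mathcal{G}$ with $gx=x$ equals the order of the stabilizer ${\rm Stab}(x)=\{g\in\mathcal{G}\mid gx=x\}$, giving $|S|=\sum_{x\in X}|{\rm Stab}(x)|$.

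The next step is to rewrite $|{\rm Stab}(x)|$ via the orbit-stabilizer theorem, namely $|\mathcal{G}x|\cdot|{\rm Stab}(x)|=|\mathcal{G}|$, so that $|{\rm Stab}(x)|=|\mathcal{G}|/|\mathcal{G}x|$ and hence $|S|=|\mathcal{G}|\sum_{x\in X}\frac{1}{|\mathcal{G}x|}$. Finally I would regroup this last sum according to the orbit decomposition of $X$ recalled above: within a single orbit $O\in\mathcal{G}\backslash X$ every term equals $\frac{1}{|O|}$ and there are $|O|$ such terms, so the contribution of $O$ is $1$; summing over all orbits gives $\sum_{x\in X}\frac{1}{|\mathcal{G}x|}=|\mathcal{G}\backslash X|$. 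Comparing the two evaluations of $|S|$ and dividing by $|\mathcal{G}|$ yields $|\mathcal{G}\backslash X|=\frac{1}{|\mathcal{G}|}\sum_{g\in\mathcal{G}}|{\rm Fix}(g)|$, which is the asserted formula.

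Since this is a textbook result quoted from \cite{16}, there is no genuine obstacle; the only ingredients one must have in hand are the orbit-stabilizer theorem and the fact that the $\mathcal{G}$-orbits partition $X$, both of which are elementary and already available in the preliminaries. As the lemma is only being recalled for later use, a bare citation would in principle suffice, but the self-contained double-counting proof sketched above is short enough to spell out in full.
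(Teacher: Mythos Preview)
Your proof is correct and complete: the double-counting argument on the set of fixed pairs, combined with the orbit-stabilizer theorem, is the standard textbook derivation of Burnside's lemma. The paper itself does not prove this lemma at all --- it merely states it with a citation to Rotman \cite{16} --- so your self-contained argument actually goes beyond what the paper provides, and you correctly anticipated this when noting that a bare citation would suffice.
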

	
	Suppose that both $\mathcal{G}$ and $\mathcal{G}'$ are subgroups of ${\rm Aut}(\mathcal{C})$ and that $\mathcal{G}$ is a subgroup of $\mathcal{G}'$.
	It is easy to see that the number of $\mathcal{G}'$-orbits of $\mathcal{C}^{*}$ is less than or equal to the number of $\mathcal{G}$-orbits of $\mathcal{C}^{*}$. This suggests that if we can find a larger subgroup $\mathcal{G}'$ of ${\rm Aut}(\mathcal{C})$ and count the number of $\mathcal{G}'$-orbits of $\mathcal{C}^{*}$, then we can obtain a smaller upper bound on the number of non-zero weights of $\mathcal{C}$.

	\subsection{$\lambda$-constacyclic codes and primitive idempotents}
	The quotient ring $\mathcal{R}^{(q)}_{n,\lambda}:=\mathbb{F}_{q}[x]/\langle x^{n}-\lambda \rangle$ is semi-simple when $\gcd(n, q) = 1$.
	It is well known that every irreducible $\lambda$-constacyclic code of length $n$ over $\mathbb{F}_{q}$ is generated uniquely by a primitive idempotent of $\mathcal{R}^{(q)}_{n,\lambda}$, and that every $\lambda$-constacyclic code of length $n$ over $\mathbb{F}_{q}$ is a direct sum of some irreducible $\lambda$-constacyclic codes of length $n$ over $\mathbb{F}_{q}$.
	Thus each $\lambda$-constacyclic code  of length $n$ over $\mathbb{F}_{q}$ can be generated uniquely by a idempotent of $\mathcal{R}^{(q)}_{n,\lambda}$. This idempotent is called the generating idempotent of the $\lambda$-constacyclic code.
	
	Let $r$ be the order of $\lambda$ in $\mathbb{F}^{*}_q$. Since $r\,|\,q-1$ and $\gcd(n,q)=1$, we have $\gcd(rn,q)=1$. Let $m$ be the order of $q$ in $\mathbb{Z}_{rn}^{*}$, that is, $m$ is the least positive integer such that $rn\,|\,q^m-1$.
	Let $\omega$ be a primitive element of $\mathbb{F}_{q^m}$ such that $\lambda = {\omega}^{\frac{q^m-1}{r}}$.
	Let $\zeta = {\omega}^{\frac{q^m-1}{rn}}$, then $\zeta$ is a primitive $rn$-th root of unity in $\mathbb{F}_{q^m}$ and ${\zeta}^n = \lambda$.
	Therefore, we have\vspace{-0.5ex}
	$$x^n - \lambda = \prod \limits_{i=0}\limits^{n-1}(x-\zeta^{1+ri}).$$
	It is easy to see that the set of all roots of $x^n - \lambda$ in $\mathbb{F}^{*}_{q^m}$ corresponds to a subset $1 + r \mathbb{Z}_{rn}$ of the residue class ring $\mathbb{Z}_{rn}$, which is defined as follows:\vspace{-0.5ex}
	$$1 + r \mathbb{Z}_{rn}=\{1+ri \mid i=0,1,\cdots,n-1\}.$$
	It is easy to see that $\mathbb{Z}^{*}_{rn} \cap (1 + r \mathbb{Z}_{rn})$ is a subgroup of $\mathbb{Z}^{*}_{rn}$.
	
	There is a one-to-one correspondence between the primitive idempotents of $\mathcal{R}^{(q)}_{n,\lambda}$
	and the $q$-cyclotomic cosets modulo $rn$.
	Assume that all the distinct $q$-cyclotomic cosets modulo $rn$ contained in the set $1 + r \mathbb{Z}_{rn}$ are given by \vspace{-1ex}
	\begin{align*}
		\Gamma_{0}&=\{1+r{a}_{0}=1,q,\cdots,q^{k_0-1}=q^{m-1}\},\\
		\Gamma_{1}&=\{1+r{a}_{1},(1+r{a}_{1})q,\cdots,(1+r{a}_{1})q^{k_1-1}\},\\
		\vdots\\
		\Gamma_{s}&=\{1+r{a}_{s},(1+r{a}_{s})q,\cdots,(1+r{a}_{s})q^{k_s-1}\},
	\end{align*}
	where $k_{i}$ is the cardinality of the $q$-cyclotomic coset $\Gamma_{i}$ for $0\leq i\leq s$ with $k_{0}= m$. It is easy to check that $\Gamma_{0}, \Gamma_{1}, \cdots, \Gamma_{s}$ partition the set $1 + r \mathbb{Z}_{rn}$. Hence, $x^n - \lambda$ can be decomposed into $x^n-\lambda=\prod_{i=0}^{s} m_i(x)$, where $m_i(x) = \prod_{j \in \Gamma_{i}}(x-{\zeta}^j)$ is irreducible over $\mathbb{F}_{q}$ and $m_0(x), m_1(x), \cdots , m_s(x)$ are pairwise coprime.\par
	
	The quotient ring $\mathbb{F}_{q^m}[x]/\langle x^{n}-\lambda\rangle$ has exactly $n$ primitive idempotents given by 
	\vspace{-1.25ex} $$e_{1+ri}=\frac{1}{n}\sum_{l=0}^{n-1}\zeta^{-(1+ri)l}x^{l},~~0\leq i\leq n-1.$$
	Moreover, $\mathcal{R}^{(q)}_{n,\lambda}$ has exactly $s+1$ primitive idempotents given by 
	\vspace{-1ex}
	$$\varepsilon_{t}=\sum_{j\in \Gamma_{t}}e_{j},~~0\leq t\leq s,\vspace{-1.25ex}$$
	and $\mathcal{R}^{(q)}_{n,\lambda}$ is the vector space direct sum of the minimal ideals
	$\mathcal{R}^{(q)}_{n,\lambda}\varepsilon_{t}$ for $0\leq t\leq s$, in symbols,
	\vspace{-0.75ex} $$\mathcal{R}^{(q)}_{n,\lambda}=\mathcal{R}^{(q)}_{n,\lambda}\varepsilon_{0}\bigoplus \mathcal{R}^{(q)}_{n,\lambda}\varepsilon_{1}\bigoplus \cdots \bigoplus \mathcal{R}^{(q)}_{n,\lambda}\varepsilon_{s}.\vspace{-0.25ex}$$
	Using the Discrete Fourier Transform, we have, for each $0\leq t\leq s$, \vspace{-0.5ex}
	$$\mathcal{R}^{(q)}_{n,\lambda}\varepsilon_{t}=\Big\{\sum_{j=0}^{k_{t}-1}\big(\sum_{v=0}^{k_{t}-1}c_{v}\zeta^{v (1+r {a}_{t})q^{j}}\big)e_{(1+r {a}_{t})q^{j}}~\Big|~c_{v}\in \mathbb{F}_{q}, 0\leq v \leq k_{t}-1\Big\}.\vspace{-0.5ex}$$

	\subsection{Subgroups of the automorphism group of $\mathcal{R}^{(q)}_{n,\lambda}$}
	Suppose that the cyclic group $\mathbb{F}_{q}^{*}$ is generated by $\xi$.
	It is easy see that both the $\lambda$-constacyclic shift and the scalar multiplication are  $\mathbb{F}_{q}$-vector space automorphisms of $\mathcal{R}^{(q)}_{n,\lambda}=\mathbb{F}_{q}[x]/\langle x^{n}-\lambda\rangle$, denoted by $\rho$ and $\sigma_{\xi}$, respectively:
	\vspace{-0.25ex}
	$$\rho:~\mathcal{R}^{(q)}_{n,\lambda} \rightarrow \mathcal{R}^{(q)}_{n,\lambda}, \hspace{1.5em} \rho\big(\sum_{i=0}^{n-1}f_{i}x^{i}\big)=\sum_{i=0}^{n-1}f_{i}x^{i+1}~({\rm mod}~x^n-\lambda)\vspace{-0.25ex}$$
	and \vspace{-0.25ex}
	$$\sigma_{\xi}:~\mathcal{R}^{(q)}_{n,\lambda} \rightarrow \mathcal{R}^{(q)}_{n,\lambda}, \hspace{1.25em} \sigma_{\xi}\big(\sum_{i=0}^{n-1}f_{i}x^{i}\big)=\sum_{i=0}^{n-1}\xi f_{i}x^{i}~({\rm mod}~x^n-\lambda).$$
	Clearly, the subgroups $\langle \rho \rangle$ and $\langle\sigma_{\xi} \rangle$ of ${\rm Aut}(\mathcal{R}^{(q)}_{n,\lambda})$ are of order $rn$ and $q-1$, respectively.
	It is easy to verify by the definition that for any $\lambda$-constacyclic code $\mathcal{C}$ of length $n$ over $\mathbb{F}_{q}$, $\langle \rho,\sigma_{\xi}\rangle$ is a subgroup of the automorphism group ${\rm Aut}(\mathcal{C})$ of $\mathcal{C}$.
	
	Let $a \in \mathbb{Z}^{*}_{rn} \cap (1 + r \mathbb{Z}_{rn})$. The multiplier $\mu_{a}$ defined on $\mathcal{R}^{(q)}_{n,\lambda}$ denoted by
	\vspace{-0.5ex} $$\hspace{-1ex}\mu_{a}:~\mathcal{R}^{(q)}_{n,\lambda}\rightarrow \mathcal{R}^{(q)}_{n,\lambda},\hspace{1.25em}
	\mu_{a}\big(\sum_{i=0}^{n-1}f_{i}x^{i}\big)=\sum_{i=0}^{n-1}f_{i}x^{ai}~({\rm mod}~x^n-\lambda)\vspace{-0.5ex}$$
	is a ring automorphism of $\mathcal{R}^{(q)}_{n,\lambda}$ (see \cite{24}).
	It's not hard to prove that the subgroup $\langle \mu_{a}\rangle$ of ${\rm Aut}(\mathcal{R}^{(q)}_{n,\lambda})$ generated by $\mu_{a}$ is of order ${\rm ord}_{rn}(a)$, where ${\rm ord}_{rn}(a)$ denotes the order of $a$ in $\mathbb{Z}_{rn}^{*}$.
	It is easy to verify by definition that for any $\lambda$-constacyclic code $\mathcal{C}$ of length $n$ over $\mathbb{F}_{q}$, the multiplier $\mu_{q}\in {\rm Aut}(\mathcal{C})$ and $\langle \mu_{q} \rangle$ is a subgroup of order $m$ of ${\rm Aut}(\mathcal{C})$, where $m={\rm ord}_{rn}(q)$.

	
	\begin{lem}\label{l2.2}
		With the notation given above, then the subgroup $\langle \mu_{a},\rho,\sigma_{\xi}\rangle$ of ${\rm Aut}(\mathcal{R}^{(q)}_{n,\lambda})$ is of order ${\rm ord}_{rn}(a)n(q-1)$, and each element of $\langle \mu_{a},\rho,\sigma_{\xi} \rangle$ can be written uniquely as a product $\mu_{a}^{r_{1}}\rho^{r_{2}}\sigma_{\xi}^{r_{3}}$, where $0\leq r_{1}\leq {\rm ord}_{rn}(a)-1$, $0\leq r_{2}\leq n-1$ and $0\leq r_{3}\leq q-2$.
	\end{lem}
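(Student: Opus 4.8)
The plan is to first record the three commutation relations among the generators and then use them to reduce an arbitrary element of $H:=\langle \mu_a,\rho,\sigma_\xi\rangle$ to the claimed normal form. From the definitions one checks directly that $\sigma_\xi$ commutes with both $\rho$ and $\mu_a$, so $\langle\sigma_\xi\rangle$ is central in $H$. Since $\mu_a$ is a \emph{ring} automorphism with $\mu_a(x)=x^{a}$ and $\rho$ is ``multiplication by $x$'', evaluating $(\mu_a\rho\mu_a^{-1})(f)=\mu_a(x)\cdot f$ gives $\mu_a\rho\mu_a^{-1}=\rho^{a}$, hence $\mu_a\rho^{c}=\rho^{ac}\mu_a$ for all $c$ and, taking $c=a^{-1}\bmod rn$, the ``move-left'' rule $\rho\mu_a=\mu_a\rho^{a^{-1}}$. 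Finally $\rho^{n}$ is multiplication by $x^{n}=\lambda$, i.e.\ $\rho^{n}=\sigma_\lambda\in\langle\sigma_\xi\rangle$; write $\sigma_\lambda=\sigma_\xi^{\,w_0}$ with $\lambda=\xi^{w_0}$. Throughout, $d:={\rm ord}_{rn}(a)$ is the order of $\mu_a$, as recalled before the lemma.

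\noindent\textbf{Existence of the normal form.} Given $g\in H$, use $\rho\mu_a=\mu_a\rho^{a^{-1}}$ repeatedly to push every $\mu_a^{\pm1}$ to the far left and centrality of $\sigma_\xi$ to push every $\sigma_\xi^{\pm1}$ to the far right, obtaining $g=\mu_a^{\,i}\rho^{\,j}\sigma_\xi^{\,k}$ for some integers $i,j,k$. Reducing $i$ modulo $d$ gives $0\le r_1\le d-1$. Writing $j=r_2+nt$ with $0\le r_2\le n-1$ and $0\le t\le r-1$, we get $\rho^{\,j}=\rho^{\,r_2}\rho^{\,nt}=\rho^{\,r_2}\sigma_\xi^{\,w_0 t}$, and since $\sigma_\xi$ commutes with $\rho$ this yields $g=\mu_a^{\,r_1}\rho^{\,r_2}\sigma_\xi^{\,k+w_0 t}$; reducing the last exponent modulo $q-1$ gives $0\le r_3\le q-2$. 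Hence every element of $H$ has such an expression, so $|H|\le d\,n\,(q-1)$.

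\noindent\textbf{Uniqueness and the matching lower bound.} Suppose $\mu_a^{r_1}\rho^{r_2}\sigma_\xi^{r_3}=\mu_a^{r_1'}\rho^{r_2'}\sigma_\xi^{r_3'}$ with the exponents in the stated ranges. Using centrality of $\sigma_\xi$, rearrange to $\mu_a^{\,r_1'-r_1}=\rho^{\,r_2-r_2'}\sigma_\xi^{\,r_3-r_3'}$ and evaluate both sides at $1\in\mathcal{R}^{(q)}_{n,\lambda}$: the left side is $1$ (as $\mu_a$ fixes $1$), while the right side is $\xi^{\,r_3-r_3'}x^{\,r_2-r_2'}\ (\mathrm{mod}\ x^{n}-\lambda)$, a nonzero scalar multiple of a monomial $x^{\ell}$ with $\ell\equiv r_2-r_2'\ (\mathrm{mod}\ n)$. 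Since $|r_2-r_2'|\le n-1$, comparing with the constant polynomial $1$ forces $r_2=r_2'$, then $\xi^{\,r_3-r_3'}=1$ and (as $|r_3-r_3'|\le q-2$) $r_3=r_3'$, and finally $\mu_a^{\,r_1'-r_1}=\mathrm{id}$ forces $d\mid r_1'-r_1$, i.e.\ $r_1=r_1'$. Thus the $d\,n\,(q-1)$ products $\mu_a^{r_1}\rho^{r_2}\sigma_\xi^{r_3}$ are pairwise distinct, giving $|H|\ge d\,n\,(q-1)$; combined with the previous paragraph, $|H|=d\,n\,(q-1)={\rm ord}_{rn}(a)\,n\,(q-1)$ and the expression of each element is unique.

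\noindent\textbf{Expected main obstacle.} The only subtle point is the passage from the ``naive'' range $0\le j\le rn-1$ for the exponent of $\rho$ (whose order is $rn$, not $n$) to the claimed range $0\le r_2\le n-1$; this rests entirely on the identity $\rho^{n}=\sigma_\lambda\in\langle\sigma_\xi\rangle$, which lets the surplus factor $\rho^{nt}$ be absorbed into $\langle\sigma_\xi\rangle$. The conjugation relation $\mu_a\rho\mu_a^{-1}=\rho^{a}$, together with the bookkeeping of $a^{-1}\bmod rn$ when moving $\mu_a$ leftward, is the other place to be careful; the remaining verifications (the two commutations with $\sigma_\xi$, and the evaluation at $1$) are routine.
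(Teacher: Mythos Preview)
Your proof is correct and follows essentially the same approach as the paper: both arguments rest on the same three ingredients --- the commutation relations $\mu_a\rho=\rho^a\mu_a$ and $\sigma_\xi$ central, the identity $\rho^n=\sigma_\lambda\in\langle\sigma_\xi\rangle$, and evaluation at the constant $1$ to separate $\langle\mu_a\rangle$ from $\langle\rho,\sigma_\xi\rangle$. The only organizational difference is that the paper proceeds in two stages (first proving unique decomposition in $\langle\rho,\sigma_\xi\rangle$ via the intersection $\langle\rho\rangle\cap\langle\sigma_\xi\rangle=\langle\rho^n\rangle$, then adjoining $\mu_a$), whereas you handle all three generators at once with an existence/uniqueness split; the underlying computations are the same.
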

	
	\begin{proof}
		we note that $\rho {\sigma}_{\xi} = {\sigma}_{\xi} \rho$, then $\langle \rho \rangle \langle {\sigma}_{\xi} \rangle= \langle {\sigma}_{\xi} \rangle \langle \rho \rangle$, which implies that $\langle \rho \rangle \langle {\sigma}_{\xi} \rangle$ is a subgroup of ${\rm Aut}(\mathcal{R}^{(q)}_{n,\lambda})$.
		For any polynomial  $f(x)=\sum_{i=0}^{n-1} f_i x^i \in \mathcal{R}^{(q)}_{n,\lambda}$,
		${\rho}^{n}(f(x)) = \sum_{i=0}^{n-1} \lambda f_i x^i = \sum_{i=0}^{n-1} {\xi}^{j} f_i x^i = {\sigma}_{\xi}^{j}(f(x)) \pmod{x^n - \lambda}$ for some $0 \leq j \leq q-2$.
		It follows that ${\rho}^{n} = {\sigma}_{\xi}^{j}$ for some $0 \leq j \leq q-2$, hence $\langle \rho \rangle \cap \langle {\sigma}_{\xi} \rangle = \langle {\rho}^{n} \rangle$.\par
		
		For any $\rho^{a_1} \in \langle \rho \rangle, {\sigma}_{\xi}^{a_2} \in \langle {\sigma}_{\xi} \rangle$ for some $0 \leq a_1 \leq rn-1$ and $0 \leq a_2 \leq q-2$.
		Let $a_1 = k n + n'$ and $a_2+ k j \equiv l \pmod{q-1}$, where $0 \leq k \leq r-1$, $0 \leq n' \leq n-1$ and $0 \leq l \leq q-2$.
		Then $${\rho}^{a_1}{\sigma}_{\xi}^{a_2} = {\rho}^{kn+n'} {\sigma}_{\xi}^{a_2} = {\rho}^{n'} {\sigma}_{\xi}^{a_2+ kj} = {\rho}^{n'} {\sigma}_{\xi}^{l} \in \langle \rho, {\sigma}_{\xi} \rangle.$$
		Thus each element of $\langle \rho, {\sigma}_{\xi} \rangle$ can be written as a product ${\rho}^{r_2} {\sigma}_{\xi}^{r_3}$ for some $0 \leq r_{2} \leq n-1$ and $0 \leq r_{3} \leq q-2$.\par
		
		For any $a \in \langle \rho, {\sigma}_{\xi} \rangle$, $a = {\rho}^{r_2} {\sigma}_{\xi}^{r_3}$ for some $0 \leq r_{2} \leq n-1$ and $0 \leq r_{3} \leq q-2$. If $a$ can also be written as $a = {\rho}^{r_2'} {\sigma}_{\xi}^{r_3'}$, where $0 \leq r_{2}' \leq n-1$ and $0 \leq r_{3}' \leq q-2$, then $\rho^{r_{2}-r_{2}'}=\sigma_{\xi}^{r_{3}'-r_{3}} \in \langle \rho \rangle \cap \langle {\sigma}_{\xi} \rangle = \langle {\rho}^n \rangle$ and $n\,|\,r_{2}-r_{2}'$.
		As $1-n \leq r_{2}-r_{2}' \leq n-1$, we have $r_{2} = r_{2}'$. Similarly $r_{3} = r_{3}'$.
		Thus, each element of $\langle \rho,\sigma_{\xi} \rangle$ can be written uniquely as a product $\rho^{r_{2}}\sigma_{\xi}^{r_{3}}$ for some $0 \leq r_{2} \leq n-1$ and $0 \leq r_{3} \leq q-2$.\par
		
		Similarly, we note that $\mu_{a}\rho=\rho^{a}\mu_{a}$, $\rho\mu_{a}=\mu_{a}\rho^{a^{-1}}$ and $\mu_{a}\sigma_{\xi}=\sigma_{\xi}\mu_{a}$, then $\langle \mu_{a} \rangle\langle\rho,\sigma_{\xi}\rangle=\langle\rho,\sigma_{\xi} \rangle\langle \mu_{a} \rangle$, which implies that $\langle \mu_{a} \rangle\langle\rho,\sigma_{\xi} \rangle$ is a subgroup of ${\rm Aut}(\mathcal{R}^{(q)}_{n,\lambda})$. Thus $\langle \mu_{a},\rho,\sigma_{\xi} \rangle=\langle \mu_{a} \rangle\langle\rho,\sigma_{\xi} \rangle$.
		Suppose $a \in \langle {\mu}_{a} \rangle \cap \langle \rho,{\sigma}_{\xi} \rangle$, then $a = {\mu}_{a}^{r_1} = {\rho}^{r_2} {\sigma}_{\xi}^{r_3}$ for some $0 \leq r_1 \leq {\rm ord}_{rn}(a)-1, 0 \leq r_2 \leq n-1$ and $0 \leq r_3 \leq q-2$. Let $f(x)=1 \in \mathcal{R}^{(q)}_{n,\lambda}$, then $1 \equiv{\mu}_{a}^{r_1}(f(x)) \equiv {\rho}^{r_2} {\sigma}_{\xi}^{r_3} (f(x)) \equiv {\xi}^{r_3} x^{r_2}~({\rm mod}~x^n-\lambda)$, and hence $r_2 = r_3 = 0$. So $\langle {\mu}_{a} \rangle \cap \langle \rho,{\sigma}_{\xi} \rangle = id$, where $id$ is the identity element of ${\rm Aut}(\mathcal{R}^{(q)}_{n,\lambda})$.
		The rest of the proof is similar to the previous one, so we can quickly conclude that each element of $\langle \mu_{a},\rho,\sigma_{\xi} \rangle$ can be written uniquely as a product $\mu_{a}^{r_{1}}\rho^{r_{2}}\sigma_{\xi}^{r_{3}}$ for some $0 \leq r_{1} \leq {\rm ord}_{rn}(a)-1$, $0 \leq r_{2} \leq n-1$ and $0 \leq r_{3} \leq q-2$.
	\end{proof}
	
	
	
	\section{Improved upper bounds}
	Let $\mathcal{C}$ be a simple-root $\lambda$-constacyclic code of length $n$ over $\mathbb{F}_{q}$ and let $\mathcal{G}$ be a subgroup of ${\rm Aut}(\mathcal{C})$. By Lemma \ref{l2.1}, the number of non-zero weights of $\mathcal{C}$ is bounded from above by the number of $\mathcal{G}$-orbits of $\mathcal{C}^{*}=\mathcal{C}\backslash \{\bf 0\}$.
	Zhang and Cao \cite{21} chose $\mathcal{G}=\langle \rho,\sigma_{\xi} \rangle$ and obtained an upper bound on the number of non-zero weights of $\mathcal{C}$ with the constraint of ${\rm gcd}\big( \frac{q-1}{r}, r \big) = 1$ by counting the number of $\langle \rho,\sigma_{\xi} \rangle$-orbits of $\mathcal{C}^{*}$.
	In this paper, we choose $\mathcal{G}$ to be a larger subgroup of ${\rm Aut}(\mathcal{C})$ which contains $\langle \rho,\sigma_{\xi} \rangle$ as a subgroup, and obtain an smaller upper bound on the number of non-zero weights of $\mathcal{C}$ than \cite{21} by counting the number of $\mathcal{G}$-orbits of $\mathcal{C}^{*}$.
	As remarked at the end of subsection 3.1, our results significantly improve the main results in \cite{21}.
	
	\subsection{An improved upper bound on the number of non-zero weights of an irreducible constacyclic code}
	We know from subsection 2.3 that for a $\lambda$-constacyclic code $\mathcal{C}$ of length $n$ over $\mathbb{F}_{q}$, the multiplier $\mu_{q}$, the constacyclic shift $\rho$ and the scalar multiplication $\sigma_{\xi}$ are all automorphisms of the code $\mathcal{C}$.
	We first assume that $\mathcal{C}$ is an irreducible $\lambda$-constacyclic code and we have the following result.
	
	\begin{Theorem}\label{t3.1}
		Let $\mathcal{C}$ be an $[n,k]$ irreducible $\lambda$-constacyclic code over $\mathbb{F}_{q}$.
		Suppose that the generating idempotent element $\varepsilon_{t}$ of $\mathcal{C}$ corresponds to the $q$-cyclotomic coset $\{1+r{a}_t,(1+r{a}_t)q,\cdots, (1+r{a}_t)q^{k-1}\}$. Then the number of $\langle \mu_{q},\rho,\sigma_{\xi} \rangle$-orbits of $\mathcal{C}^{*}=\mathcal{C}\backslash \{\bf 0\}$ is equal to
		$$\frac{1}{k}\sum_{h \mid k}\varphi\big(\frac{k}{h}\big)
		{\rm gcd}\Big(q^{h}-1,\frac{q^{k}-1}{q-1},\frac{(1+r{a}_t)(q^{k}-1)}{rn}\Big).$$
		In particular, the number of non-zero weights of $\mathcal{C}$ is less than or equal to the number of $\langle \mu_{q},\rho,\sigma_{\xi} \rangle$-orbits of $\mathcal{C}^{*}$, with equality if and only if for any two codewords ${\bf c}_{1},{\bf c}_{2}\in \mathcal{C}^{*}$ with the same weight, there exist integers $j_{1}$, $j_{2}$ and $j_{3}$ such that $\mu_{q}^{j_{1}}\rho^{j_{2}}(\xi^{j_{3}}{\bf c}_{1})={\bf c}_{2}$,
		where $0 \leq j_{1} \leq m-1$, $0 \leq j_{2} \leq n-1$ and $0 \leq j_{3} \leq q-2$.
	\end{Theorem}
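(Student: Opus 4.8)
The plan is to represent $\mathcal{C}$ concretely as the field $\mathbb{F}_{q^k}$ and to translate the three generating automorphisms into field operations. Put $\beta:=\zeta^{1+r{a}_t}$. Since $\zeta^{rn}=1$ and $\zeta^n=\lambda$, one gets $\beta^n=\lambda$, so $\beta$ is a root of $x^n-\lambda$ whose minimal polynomial over $\mathbb{F}_q$ is $m_t(x)=\prod_{j\in\Gamma_t}(x-\zeta^j)$ of degree $k$; hence $\mathbb{F}_q(\beta)=\mathbb{F}_{q^k}$. The evaluation map $g\varepsilon_t\mapsto g(\beta)$ is a well-defined $\mathbb{F}_q$-linear isomorphism $\psi\colon\mathcal{C}\to\mathbb{F}_{q^k}$: it is well defined and injective on $\mathcal{R}^{(q)}_{n,\lambda}\varepsilon_t$ because $\varepsilon_i(\beta)=\delta_{i,t}$, and it is onto by comparing dimensions. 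Under $\psi$ the constacyclic shift $\rho$ (multiplication by $x$) becomes multiplication by $\beta$, the scalar multiplication $\sigma_\xi$ becomes multiplication by $\xi\in\mathbb{F}_q^{*}$, and the multiplier $\mu_q$ becomes the Frobenius map $\phi\colon z\mapsto z^q$; the last point uses $\mu_q(\varepsilon_t)=\varepsilon_t$ (because $\Gamma_t$ is closed under multiplication by $q$) and $g(\beta)^q=\mu_q(g)(\beta)$ for $g\in\mathbb{F}_q[x]$.

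It follows that the image of $\langle\mu_q,\rho,\sigma_\xi\rangle$ in the symmetric group on $\mathcal{C}^{*}\cong\mathbb{F}_{q^k}^{*}$ is $\overline{\mathcal{G}}=\{\,z\mapsto\alpha z^{q^j}\mid\alpha\in H,\ 0\le j\le k-1\,\}$, where $H=\langle\beta,\xi\rangle\le\mathbb{F}_{q^k}^{*}$. Here $\phi$ normalizes multiplication-by-$H$ (a subgroup of a cyclic group is closed under $q$-th powers), and $\langle\phi\rangle$ has order $k$ on $\mathbb{F}_{q^k}^{*}$. Since the orbits of a group action depend only on the image of the action, the number of $\langle\mu_q,\rho,\sigma_\xi\rangle$-orbits of $\mathcal{C}^{*}$ equals the number of $\overline{\mathcal{G}}$-orbits of $\mathbb{F}_{q^k}^{*}$; and because the multiplication-by-$H$ orbits are exactly the cosets of $H$, this equals the number of $\langle\phi\rangle$-orbits on the coset set $\mathbb{F}_{q^k}^{*}/H$.

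I would then apply Burnside's lemma to $\langle\phi\rangle$ acting on $\mathbb{F}_{q^k}^{*}/H$, obtaining $\frac{1}{k}\sum_{j=0}^{k-1}|{\rm Fix}(\phi^j)|$, where $zH$ is fixed by $\phi^j$ exactly when $z^{q^j-1}\in H$. Counting such cosets is a computation in the cyclic group $\mathbb{F}_{q^k}^{*}$: the $(q^j-1)$-th power map has kernel of order $q^{\gcd(j,k)}-1$ and image the unique subgroup of that index, and intersections of subgroups of a cyclic group are governed by least common multiples. Carrying this through yields $|{\rm Fix}(\phi^j)|=\gcd\big(d_H,\,q^{\gcd(j,k)}-1\big)$ with $d_H=(q^k-1)/|H|$. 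Collecting equal values $h=\gcd(j,k)$, of which there are $\varphi(k/h)$ for each divisor $h$ of $k$, turns the sum into $\frac{1}{k}\sum_{h\mid k}\varphi(k/h)\gcd\big(d_H,\,q^h-1\big)$.

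The remaining task, and the one I expect to be the main obstacle, is to identify $d_H$ with the explicit greatest common divisor in the statement. Working in the additive group $\mathbb{Z}/(q^k-1)$ of exponents, $\langle\xi\rangle$ corresponds to the subgroup generated by $\frac{q^k-1}{q-1}$, while $\langle\beta\rangle$, of order $N:=rn/\gcd(rn,1+r{a}_t)$, corresponds to the subgroup generated by $\frac{q^k-1}{N}$, which coincides with the subgroup generated by $\frac{(1+r{a}_t)(q^k-1)}{rn}$ — this last identity uses that $\frac{(1+r{a}_t)(q^k-1)}{rn}$ is an integer (a consequence of $N={\rm ord}(\beta)\mid q^k-1$) and that it differs from $\frac{q^k-1}{N}$ by a factor coprime to $N$. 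Therefore the exponent subgroup of $H=\langle\beta,\xi\rangle$ is generated by $\gcd\big(\frac{q^k-1}{q-1},\,\frac{(1+r{a}_t)(q^k-1)}{rn}\big)$, so $d_H$ equals this gcd and $\gcd(d_H,q^h-1)$ is exactly the three-term gcd in the theorem; substituting this back gives the claimed formula. The "in particular" statement is then immediate from Lemma~\ref{l2.1} applied with $\mathcal{G}=\langle\mu_q,\rho,\sigma_\xi\rangle$, using the unique representation $\mu_q^{r_1}\rho^{r_2}\sigma_\xi^{r_3}$ of its elements from Lemma~\ref{l2.2} (with $a=q$, so $|\langle\mu_q\rangle|={\rm ord}_{rn}(q)=m$) and the fact that $\sigma_\xi^{j_3}$ is multiplication by $\xi^{j_3}$.
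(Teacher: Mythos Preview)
Your argument is correct and takes a genuinely different route from the paper. The paper applies Burnside's lemma directly to the full group $\langle\mu_q,\rho,\sigma_\xi\rangle$ of order $mn(q-1)$: it writes a generic nonzero codeword in the primitive-idempotent expansion, computes $\mu_q^{r_1}\rho^{r_2}\sigma_\xi^{r_3}(\mathbf{c})$ explicitly, reduces the fixed-point condition to the single equation $\xi^{r_3}\alpha^{q^{r_1}-1}=\zeta^{-(1+ra_t)q^{r_1}r_2}$ in $\mathbb{F}_{q^k}^{*}$, shows that each fixed set has size $0$ or $q^{\gcd(k,r_1)}-1$, and then carefully counts for each $r_1$ how many pairs $(r_2,r_3)$ give a nonzero contribution via the sets $S(r_1)$ and $R(r_1,r_2)$. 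The triple sum is then simplified to the displayed formula.

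Your approach instead passes through the field model $\psi:\mathcal{C}\stackrel{\sim}{\to}\mathbb{F}_{q^k}$, identifies the image of the action as the affine semilinear group $\{z\mapsto\alpha z^{q^j}:\alpha\in H,\ 0\le j\le k-1\}$ with $H=\langle\beta,\xi\rangle$, and then, because multiplication-by-$H$ is normal with cyclic quotient $\langle\phi\rangle$ of order $k$, reduces to Burnside for a cyclic group of order $k$ acting on the coset space $\mathbb{F}_{q^k}^{*}/H$. This yields $|{\rm Fix}(\phi^j)|=\gcd(d_H,q^{\gcd(k,j)}-1)$ in one line, and the identification of $d_H$ with $\gcd\big(\frac{q^k-1}{q-1},\frac{(1+ra_t)(q^k-1)}{rn}\big)$ via exponent subgroups is exactly right (the key point, which you state, is that $(1+ra_t)/\gcd(rn,1+ra_t)$ is coprime to $N$). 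Your route is cleaner and explains structurally why the answer is a divisor sum over $k$; the paper's more computational route has the advantage that its fixed-point description in terms of equations in $\mathbb{F}_{q^{k_{t_j}}}^{*}$ extends directly to the non-irreducible case (Lemma~\ref{l3.1} and Theorem~\ref{t3.2}), where there is no single field model.
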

	
	\begin{proof}
		It is enough to count the number of $\langle \mu_{q},\rho,\sigma_{\xi} \rangle$-orbits of  $\mathcal{C}^{*}$, since the rest of the statements are clear from Lemma \ref{l2.1}. It follows from Equation (\ref{e2.1}) and Lemma \ref{l2.2} that
		$$\big|\langle \mu_{q},\rho,\sigma_{\xi} \rangle\backslash \mathcal{C}^{*}\big|=\frac{1}{mn(q-1)}\sum_{r_{1}=0}^{m-1}\sum_{r_{2}=0}^{n-1}\sum_{r_{3}=0}^{q-2}\big|{\rm Fix}\big(\mu_{q}^{r_{1}}\rho^{r_{2}}\sigma_{\xi}^{r_{3}}\big)\big|,$$
		where ${\rm Fix}\big(\mu_{q}^{r_{1}}\rho^{r_{2}}\sigma_{\xi}^{r_{3}}\big)=\big\{{\bf c}\in \mathcal{C}^{*}
		\,\big|\,\mu_{q}^{r_{1}}\rho^{r_{2}}\sigma_{\xi}^{r_{3}}({\bf c})={\bf c}\big\}$.
		
		Take a typical non-zero element
		${\bf c}\!=\!\sum_{j=0}^{k-1}
		\big(\sum_{v=0}^{k-1}c_{v}
		\zeta^{v(1+r{a}_t)q^{j}}\big)e_{(1+r{a}_t)q^{j}}\!\in\! \mathcal{C}^{*}.$
		Note that $e_{(1+r{a}_t)q^{j}}\!=\!\frac{1}{n}\sum\limits_{l=0}^{n-1}\zeta^{-(1+r{a}_t)q^{j}l}x^{l}$ and $\rho^{r_{2}}\sigma_{\xi}^{r_{3}}(e_{(1+r{a}_t)q^{j}})=\xi^{r_{3}}\zeta^{(1+r{a}_t)q^{j}r_{2}}e_{(1+r{a}_t)q^{j}}$ (see \cite{21}). It follows that
		\begin{small}	\begin{align*}
				\mu_{q}^{r_{1}}\rho^{r_{2}}\sigma_{\xi}^{r_{3}}(e_{(1+r{a}_t)q^{j}})
				&=\xi^{r_{3}}\zeta^{(1+r{a}_t)q^{j}r_{2}}\mu_{q}^{r_{1}}(e_{(1+r{a}_t)q^{j}})\\
				&=\xi^{r_{3}}\zeta^{(1+r{a}_t)q^{j}r_{2}}\cdot \frac{1}{n}\sum_{l=0}^{n-1}\zeta^{-(1+r{a}_t)q^{j}l}x^{q^{r_{1}}l}\\
				&=\xi^{r_{3}}\zeta^{(1+r{a}_t)q^{j}r_{2}}\cdot \frac{1}{n}\sum_{l=0}^{n-1}\zeta^{-(1+r{a}_t)q^{j-r_{1}}q^{r_{1}}l}x^{q^{r_{1}}l}\\
				&=\xi^{r_{3}}\zeta^{(1+r{a}_t)q^{j}r_{2}}\cdot \frac{1}{n}\sum_{l=0}^{n-1}\zeta^{-(1+r{a}_t)q^{j-r_{1}}l}x^{l}\\
				&=\xi^{r_{3}}\zeta^{(1+r{a}_t)q^{j}r_{2}}e_{(1+r{a}_t)q^{j-r_{1}}},
			\end{align*}
		\end{small}
		where the subscript $(1+r{a}_t)q^{j-r_{1}}$ is calculated modulo $rn$ and the fourth equality holds because $\zeta^{-(1+r{a}_t)q^{j-r_{1}}n}x^n = 1$. Then we have
		\begin{small}		\begin{align*}
				\mu_{q}^{r_{1}}\rho^{r_{2}}\sigma_{\xi}^{r_{3}}({\bf c})
				&=\mu_{q}^{r_{1}}\rho^{r_{2}}\sigma_{\xi}^{r_{3}}\Big(\sum_{j=0}^{k-1}\big(\sum_{v=0}^{k-1}c_{v}\zeta^{v(1+r{a}_t)q^{j}}\big)e_{(1+r{a}_t)q^{j}}\Big)\\ &=\sum_{j=0}^{k-1}\Big(\sum_{v=0}^{k-1}c_{v}\zeta^{v(1+r{a}_t)q^{j}}\Big)\mu_{q}^{r_{1}}\rho^{r_{2}}\sigma_{\xi}^{r_{3}}(e_{(1+r{a}_t)q^{j}})\\			&=\sum_{j=0}^{k-1}\xi^{r_{3}}\zeta^{(1+r{a}_t)q^{j}r_{2}}\Big(\sum_{v=0}^{k-1}c_{v}\zeta^{v(1+r{a}_t)q^{j}}\Big)e_{(1+r{a}_t)q^{j-r_{1}}}\\			&=\sum_{j=0}^{k-1}\xi^{r_{3}}\zeta^{(1+r{a}_t)q^{j-r_{1}}q^{r_{1}}r_{2}}\Big(\sum_{v=0}^{k-1}c_{v}\zeta^{v(1+r{a}_t)q^{j}}\Big)^{q^{r_{1}}}e_{(1+r{a}_t)q^{j-r_{1}}}\\ &=\sum_{j=0}^{k-1}\xi^{r_{3}}\zeta^{(1+r{a}_t)q^{r_{1}+j}r_{2}}\Big(\sum_{v=0}^{k-1}c_{v}\zeta^{v(1+r{a}_t)q^{j}}\Big)^{q^{r_{1}}}e_{(1+r{a}_t)q^{j}}.
			\end{align*}
		\end{small}
		Hence
		$\mu_{q}^{r_{1}}\rho^{r_{2}}\sigma_{\xi}^{r_{3}}({\bf c})\!=\!{\bf c}$ if and only if $\xi^{r_{3}}\big(\sum_{v=0}^{k-1}c_{v}\zeta^{v(1+r{a}_t)q^{j}}\big)^{q^{r_{1}}-1}\!=\!\zeta^{-(1+r{a}_t)q^{r_{1}+j}r_{2}}$
		for $j=0,\!1,\!\cdots,\!k-1$,\! which is equivalent to $\xi^{r_{3}}\big(\sum_{v=0}^{k-1}c_{v}\zeta^{v(1+r{a}_t)}\big)^{q^{r_{1}}-1}=\zeta^{-(1+r{a}_t)q^{r_{1}}r_{2}}.$
		Since the minimal polynomial of $\zeta^{1+r{a}_t}$ over $\mathbb{F}_{q}$ is of degree $k$, the set
		$$\big\{c_{0}+c_{1}\zeta^{1+r{a}_t}+\cdots+c_{k-1}\zeta^{(k-1)(1+r{a}_t)}~\big|~c_{v}\in \mathbb{F}_{q}, 0 \leq v \leq k-1\big\}$$
		forms a subfield of $\mathbb{F}_{q^{m}}$ of size $q^{k}$. It follows that
		$$\big|{\rm Fix}\big(\mu_{q}^{r_{1}}\rho^{r_{2}}\sigma_{\xi}^{r_{3}}\big)\big|=\Big|\big\{\alpha\in \mathbb{F}_{q^{k}}^{*}~\big|~ \xi^{r_{3}}\alpha^{q^{r_{1}}-1}=\zeta^{-(1+r{a}_t)q^{r_{1}}r_{2}}\big\}\Big|.$$
		
		Suppose that
		$\xi^{r_{3}}\alpha^{q^{r_{1}}-1}=\zeta^{-(1+r{a}_t)q^{r_{1}}r_{2}}$ for some $\alpha\in \mathbb{F}_{q^{k}}^{*}$.
		Let $\mathbb{F}_{q^{k}}^{*}$ be generated by $\theta$.
		On the one hand, for any $\beta \!\in\! \big\langle \theta^{\frac{q^{k}-1}{q^{{\rm gcd}(k,r_{1})}-1}} \big\rangle$,
		we have $\xi^{r_{3}}(\alpha\beta)^{q^{r_{1}}-1}\!=\!
		\xi^{r_{3}}\alpha^{q^{r_{1}}-1}\!=\!\zeta^{-(1+r{a}_t)q^{r_{1}}r_{2}}$.
		On the other hand, suppose $\gamma\!\in\! \mathbb{F}_{q^{k}}^{*}$ satisfies $\xi^{r_{3}}\gamma^{q^{r_{1}}-1}\!=\!\zeta^{-(1+r{a}_t)q^{r_{1}}r_{2}}$,
		then $(\gamma \alpha^{-1})^{q^{r_{1}}-1}=1$, and so ${\rm ord}(\gamma \alpha^{-1})\,\big|\,{\rm gcd}(q^{k}-1,q^{r_{1}}-1)=q^{{\rm gcd}(k,r_{1})}-1$, which implies that $\gamma \alpha^{-1}\!\in\!\big\langle \theta^{\frac{q^{k}-1}{q^{{\rm gcd}(k,r_{1})}-1}} \big\rangle$. Hence $\gamma\!=\!\alpha\beta'$ for some $\beta' \!\in\! \big\langle \theta^{\frac{q^{k}-1}{q^{{\rm gcd}(k,r_{1})}-1}} \big\rangle$.
		It follows that\vspace{-0.5ex}
		\begin{align*}
			\big|{\rm Fix}\big(\mu_{q}^{r_{1}}\rho^{r_{2}}
			\sigma_{\xi}^{r_{3}}\big)\big|=0~{\rm or}~q^{{\rm gcd}(k,r_{1})}-1.
		\end{align*}
		Next, fixing $r_1~(0 \leq\! r_1 \!\leq m-1)$, we count the number of number pairs $(r_2,r_3)$ such that $\big|{\rm Fix}\big(\mu_{q}^{r_{1}}\rho^{r_{2}}\sigma_{\xi}^{r_{3}}\big)\big|\!\neq\!0$ , where $0 \leq r_{2} \leq n-1$, $0 \leq r_{3} \leq q-2$.
		
		It's not hard to see that $\{\alpha^{q^{r_{1}}-1}\,|\,\alpha\in \mathbb{F}_{q^{k}}^{*}\}=\langle \theta^{q^{r_{1}}-1}\rangle$, which is a cyclic subgroup of $\mathbb{F}_{q^{k}}^{*}$ of order $\frac{q^{k}-1}{q^{{\rm gcd}(k,r_{1})}-1}$.
		Since $\langle \xi \rangle$ is a cyclic subgroup of $\mathbb{F}_{q^{k}}^{*}$ of order $q-1$,
		we see that $\langle \xi \rangle \cap \langle \theta^{q^{r_{1}}-1} \rangle$ is a cyclic subgroup of $\mathbb{F}_{q^{k}}^{*}$ of order ${\rm gcd}\big(q-1,\frac{q^{k}-1}{q^{{\rm gcd}(k,r_{1})}-1}\big)$ and $\langle \xi \rangle \langle \theta^{q^{r_{1}}-1} \rangle$ is a cyclic subgroup of $\mathbb{F}_{q^{k}}^{*}$ of order $\frac{|\langle \xi \rangle||\langle \theta^{q^{r_{1}}-1} \rangle|}{|\langle \xi \rangle \cap \langle \theta^{q^{r_{1}}-1} \rangle|}$. As ${\rm ord}(\zeta^{-1})= rn$, ${\rm ord}(\zeta^{-(1+r{a}_t)q^{r_{1}}r_{2}})=\frac{rn}{{\rm gcd}(rn,(1+r{a}_t)q^{r_{1}}r_{2})}=\frac{rn}{{\rm gcd}(rn,(1+r{a}_t)r_{2})}$.
		Then we have
		\begin{small}		\begin{align*}
				\zeta^{-(1+r{a}_t)q^{r_{1}}r_{2}}\in \langle \xi \rangle\langle \theta^{q^{r_{1}}-1}\rangle~
				&\Leftrightarrow~\frac{rn}{{\rm gcd}(rn,(1+r{a}_t)r_{2})}\,\Big|\, |\langle \xi \rangle\langle \theta^{q^{r_{1}}-1} \rangle|\\
				&\Leftrightarrow~\frac{rn}{{\rm gcd}(rn,(1+r{a}_t)r_{2})\!\cdot\!{\rm gcd}\big(\frac{rn}{{\rm gcd}(rn,(1+r{a}_t)r_{2})},|\langle \xi \rangle\langle \theta^{q^{r_{1}}-1} \rangle|\big)}=1\\
				&\Leftrightarrow~\frac{rn}{{\rm gcd}\big(rn,(1+r{a}_t)r_{2}|\langle \xi \rangle\langle \theta^{q^{r_{1}}-1} \rangle|\big)}=1\\
				&\Leftrightarrow~rn\,\big|\,\big((1+r{a}_t)r_{2}|\langle \xi \rangle\langle \theta^{q^{r_{1}}-1} \rangle|\big)\\
				&\Leftrightarrow~\frac{rn}{{\rm gcd}\big(rn,(1+r{a}_t)|\langle \xi \rangle\langle \theta^{q^{r_{1}}-1} \rangle|\big)}\,\Big|\,r_{2}\\
				&\Leftrightarrow~\frac{n}{{\rm gcd}\big(n,\frac{(1+r{a}_t)|\langle \xi \rangle\langle \theta^{q^{r_{1}}-1} \rangle|}{r}\big)}\,\Big|\,r_{2}.
			\end{align*}
		\end{small}
		Let
		$$S(r_{1})=\big\{0 \leq z \leq n-1\,\big|\, \zeta^{-(1+r{a}_t)q^{r_{1}}z}\in \langle \xi \rangle\langle \theta^{q^{r_{1}}-1}\rangle\big\}.$$
		It follows from the above discussion that
		$|S(r_{1})|={\rm gcd}\big(n,\frac{(1+r{a}_t)|\langle \xi \rangle \langle \theta^{q^{r_{1}}-1} \rangle|}{r}\big).$
		Assume that $r_{2}\in S(r_{1})$, then there exists $r_{3}~(0 \leq r_{3}\leq q-2)$ such that
		$\zeta^{-(1+r{a}_t)q^{r_{1}}r_{2}}\in \xi^{r_{3}}\langle \theta^{q^{r_{1}}-1}\rangle$. Denote
		$$R(r_{1},r_{2})=\big\{0 \leq z \leq q-2\,\big|\,\zeta^{-(1+r{a}_t)q^{r_{1}}r_{2}}\in \xi^{z}\langle \theta^{q^{r_{1}}-1}\rangle\big\}.$$
		On the one hand, for any $\xi^{z'}\in \langle \xi \rangle \cap \langle \theta^{q^{r_{1}}-1} \rangle$, we have $\zeta^{-(1+r{a}_t)q^{r_{1}}r_{2}}\in \xi^{r_{3}}\langle \theta^{q^{r_{1}}-1}\rangle=\xi^{r_{3}+z'}\langle \theta^{q^{r_{1}}-1}\rangle$, and so $r_{3}+z'~({\rm mod}~q-1)\in R(r_{1},r_{2})$. On the other hand, suppose $z\in R(r_{1},r_{2})$, that is, $\zeta^{-(1+r{a}_t)q^{r_{1}}r_{2}}\in \xi^{z}\langle \theta^{q^{r_{1}}-1}\rangle$, then $\xi^{z-r_{3}}\in \langle \xi \rangle \cap \langle \theta^{q^{r_{1}}-1}\rangle$, and so $z \equiv r_{3}+z'~({\rm mod}~q-1)$, where $z'$ is an integer such that $\xi^{z-r_{3}}=\xi^{z'}\in \langle \xi \rangle \cap \langle \theta^{q^{r_{1}}-1}\rangle$. Hence
		$|R(r_{1},r_{2})|=|\langle \xi \rangle \cap \langle \theta^{q^{r_{1}}-1}\rangle|.$

		To sum up, we have
		\begin{small}		\begin{align*}
				&\big|\langle \mu_{q},\rho,\sigma_{\xi} \rangle \backslash \mathcal{C}^{*}\big|\\
				=&\frac{1}{mn(q-1)}\sum_{r_{1}=0}^{m-1}\sum_{r_{2}\in S(r_{1})}\sum_{r_{3}\in R(r_{1},r_{2})}(q^{{\rm gcd}(k,r_{1})}-1)\\
				=&\frac{1}{mn(q-1)}\sum_{r_{1}=0}^{m-1}|S(r_{1})|\!\cdot\!|R(r_{1},r_{2})|(q^{{\rm gcd}(k,r_{1})}-1)\\
				=&\frac{1}{mn(q-1)}\sum_{r_{1}=0}^{m-1}{\rm gcd}\Big(n|\langle \xi \rangle \cap \langle \theta^{q^{r_{1}}-1}\rangle|,\frac{(1+r{a}_t)|\langle \xi \rangle||\langle \theta^{q^{r_{1}}-1} \rangle|}{r}\Big)(q^{{\rm gcd}(k,r_{1})}-1)\\
				=&\frac{1}{mn(q-1)}\sum_{r_{1}=0}^{m-1}{\rm gcd}\Big(n(q-1),\frac{n(q^{k}-1)}{q^{{\rm gcd}(k,r_{1})}-1},\frac{(1+r{a}_t)(q-1)(q^{k}-1)}{r(q^{{\rm gcd}(k,r_{1})}-1)}\Big)(q^{{\rm gcd}(k,r_{1})}-1)\\
				=&\frac{1}{mn(q-1)}\sum_{r_{1}=0}^{m-1}{\rm gcd}\Big(n(q-1)(q^{{\rm gcd}(k,r_{1})}-1),n(q^{k}-1),\frac{(1+r{a}_t)(q-1)(q^{k}-1)}{r}\Big)\\
				=&\frac{1}{m}\sum_{r_{1}=0}^{m-1}{\rm gcd}\Big(q^{{\rm gcd}(k,r_{1})}-1,\frac{q^{k}-1}{q-1},\frac{(1+r{a}_t)(q^{k}-1)}{rn}\Big)\\
				=&\frac{1}{m}\!\cdot\! \frac{m}{k}\sum_{r_{1}=0}^{k-1}{\rm gcd}\Big(q^{{\rm gcd}(k,r_{1})}-1,\frac{q^{k}-1}{q-1},\frac{(1+r{a}_t)(q^{k}-1)}{rn}\Big)\\
				=&\frac{1}{k}\sum_{h\mid k}\varphi\big(\frac{k}{h}\big){\rm gcd}\Big(q^{h}-1,\frac{q^{k}-1}{q-1},\frac{(1+r{a}_t)(q^{k}-1)}{rn}\Big).
			\end{align*}
		\end{small}
		We have completed the proof of the theorem.
	\end{proof}
	
	\begin{Remark}\label{r.1}{\rm
			(1)~Let $\mathcal{C}$ be the irreducible $\lambda$-constacyclic code in Theorem \ref{t3.1}. In \cite[Lemma 3.5]{21}, the authors assumed ${\rm gcd}\big( \frac{q-1}{r}, r \big) = 1$ and gave the number of $\langle \rho,\sigma_{\xi} \rangle$-orbits of $\mathcal{C}^{*}=\mathcal{C}\backslash \{\bf 0\}$ as follows:
			\vspace{-0.25ex}
			\begin{equation}\label{E.1}
				{\rm gcd}\Big(\frac{q^{k}-1}{q-1},\frac{(1+r{a}_t)(q^{k}-1)}{rn}\Big).
			\end{equation}
			Indeed, (3.1) still holds when the restriction ${\rm gcd}\big( \frac{q-1}{r}, r \big) = 1$ is removed. This can be verified by letting $r_1=0$ in the proof of Theorem \ref{t3.1}.
			
			(2)~ By Theorem \ref{t3.1} and Equation (\ref{E.1}), we have
			\begin{small}		\begin{align*}
					&\big|\langle \rho,\sigma_{\xi} \rangle \backslash \mathcal{C}^{*}\big|-\big|\langle \mu_{q},\rho,\sigma_{\xi} \rangle \backslash \mathcal{C}^{*}\big|\\
					=&{\rm gcd}\Big(\frac{q^{k}-1}{q-1},\frac{(1+r{a}_t)(q^{k}-1)}{rn}\Big)
					-\frac{1}{k}\sum_{h\mid k}\varphi\big(\frac{k}{h}\big){\rm gcd}\Big(q^{h}-1,\frac{q^{k}-1}{q-1},\frac{(1+r{a}_t)(q^{k}-1)}{rn}\Big)\\
					=&\frac{1}{k}\sum_{h\mid k}\varphi\big(\frac{k}{h}\big)\left({\rm gcd}\Big(\frac{q^{k}-1}{q-1},\frac{(1+r{a}_t)(q^{k}-1)}{rn}\Big)-{\rm gcd}\Big(q^{h}-1,\frac{q^{k}-1}{q-1},\frac{(1+r{a}_t)(q^{k}-1)}{rn}\Big)\right).
				\end{align*}
			\end{small}
			It is easy to see that $\big|\langle \mu_{q},\!\rho,\!\sigma_{\xi} \rangle \backslash \mathcal{C}^{*}\big|
			\!\leq\! \big|\langle \rho,\!\sigma_{\xi} \rangle \backslash \mathcal{C}^{*}\big|$,
			\!with equality if and only if
			${\rm gcd}\big(q\!-\!1,\!\frac{q^{k}\!-\!1}{q\!-\!1},\!\frac{(1\!+\!r{a}_t)(q^{k}\!-\!1)}{rn}\big)\!=\!{\rm gcd}\big(\frac{q^{k}-1}{q-1},\frac{(1+r{a}_t)(q^{k}-1)}{rn}\big)$.
			It follows from Theorem \ref{t3.1} that \vspace{-0.5ex}
			\begin{align*}
				\big|\langle \mu_{q},\rho,\sigma_{\xi} \rangle \backslash \mathcal{C}^{*}\big|\geq \Big\lceil \frac{1}{k}\big(k-1+\big|\langle \rho,\sigma_{\xi} \rangle \backslash \mathcal{C}^{*}\big|\big)\Big\rceil
			\end{align*}
			and \vspace{-0.75ex}
			\begin{align*}		
				0 \leq \big|\langle \rho,\sigma_{\xi} \rangle \backslash \mathcal{C}^{*} \big|
				-\big|\langle \mu_{q},\rho,\sigma_{\xi} \rangle \backslash \mathcal{C}^{*}\big|
				\leq \Big\lfloor
				\frac{(k-1)\big(\big|\langle \rho,\sigma_{\xi} \rangle \backslash \mathcal{C}^{*}\big|-1\big)}{k}\Big\rfloor,
			\end{align*}
			where for a rational number $x$, $\lceil x\rceil$ denotes the smallest integer greater than or equal to $x$ and $\lfloor x\rfloor$ denotes the largest integer less than or equal to $x$.
			If $k>1$ is a prime,  $a_t=0$, $n=\frac{q^k-1}{rN}$ with $N>1$ and ${\rm gcd}(q-1,N)=1$,
			then $\big|\langle \rho,\sigma_{\xi} \rangle \backslash \mathcal{C}^{*}\big|-\big|\langle \mu_{q},\rho,\sigma_{\xi} \rangle \backslash \mathcal{C}^{*}\big|$ meets the above upper bound,
			which is equal to $\frac{(k-1)(N-1)}{k}$.}
	\end{Remark}
	
	We present an example to show that the upper bound in Theorem \ref{t3.1} is tight and in some cases strictly smaller than the one given in Remark \ref{r.1} (1), i.e., Equation (\ref{E.1}).
	
	\begin{Example}{\rm
			Take $q=3$, $n=8$ and $\lambda=-1$. All the distinct $3$-cyclotomic cosets modulo $16$ are given by
			$\Gamma_{0}=\{1,3,9,11\},~\Gamma_{1}=\{5,7,13,15\}.$
			Consider the irreducible negacyclic code
			$\mathcal{C}=\mathcal{R}^{(q)}_{n,\lambda}\varepsilon_{1}$,
			where the primitive idempotent $\varepsilon_{1}$ corresponds to $\Gamma_{1}$.
			Let $\ell$ be the number of non-zero weights of $\mathcal{C}$.
			By Equation (\ref{E.1}), we obtain $\ell \leq 5$.
			Using Theorem \ref{t3.1}, we have $\ell\leq 2$.
			Using the Magma software programming \cite{4},
			we see that the weight enumerator of $\mathcal{C}$ is $1+16x^3 +64x^6$, which implies that $\ell=2$.
			Then Theorem \ref{t3.1} ensures that all the non-zero codewords of $\mathcal{C}$ with the same weight are in the same $\langle \mu_{q},\rho,\sigma_{\xi} \rangle$-orbit.}
	\end{Example}
	
	The following two corollaries show that Theorem \ref{t3.1} can be used to construct some new few-weight
	irreducible $\lambda$-constacyclic codes.
	
	\begin{Corollary}\label{c3.1}
		Suppose that $q=2^{m'}$, where $m'>1$ is odd.
		Let $\mathcal{C}$ be an irreducible $\lambda$-constacyclic code of length $n$ over $\mathbb{F}_{q}$ whose generating idempotent $\varepsilon_{t}$ corresponds to the $q$-cyclotomic coset $\{1+r{a}_t,(1+r{a}_t)q,(1+r{a}_t)q^2,\cdots\}$.
		If $n=\frac{q^{2}-1}{3rN}$, where $N\,|\,\frac{q-1}{r}$, and
		${\rm gcd}\big(1+r{a}_t, \frac{q+1}{3}\big)=1$, then $\mathcal{C}$ is a one-weight or two-weight $\lambda$-constacyclic code.
		
	\end{Corollary}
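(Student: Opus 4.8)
The plan is to invoke Lemma~\ref{l2.1} and Theorem~\ref{t3.1}: since $\mathcal{C}$ is irreducible, $\mathcal{C}^{*}$ is non-empty and $\mathcal{C}$ has at least one non-zero weight, so it suffices to show that the number of $\langle \mu_{q},\rho,\sigma_{\xi}\rangle$-orbits of $\mathcal{C}^{*}$ given by Theorem~\ref{t3.1} is at most $2$. First I would record the arithmetic consequences of the hypotheses. Since $q=2^{m'}$ with $m'$ odd we have $q\equiv 2\pmod 3$, hence $3\mid q+1$ (so $\tfrac{q+1}{3}\in\mathbb{Z}$) and $3\nmid q-1$; moreover $q\pm 1$ are both odd, so $\gcd(q-1,q+1)=1$. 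From $N\mid\tfrac{q-1}{r}$ we get $rN\mid q-1$, and since $3\nmid q-1$ this gives $3rN\mid (q-1)(q+1)=q^{2}-1$; thus $n=\tfrac{q^{2}-1}{3rN}$ is a positive integer with $\gcd(n,q)=1$ (as $q^{2}-1$ is odd) and $n>1$, and $rn=\tfrac{q^{2}-1}{3N}$ divides $q^{2}-1$. Hence $m={\rm ord}_{rn}(q)$ divides $2$, and since the size $k$ of the $q$-cyclotomic coset of $1+ra_{t}$ divides $m$, we must have $k\in\{1,2\}$.

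If $k=1$, the formula in Theorem~\ref{t3.1} reduces to $\gcd\!\big(q-1,\,1,\,\ast\big)=1$, so $\mathcal{C}$ is a one-weight code. If $k=2$, then using $\tfrac{q^{2}-1}{rn}=3N$ the third entry $\tfrac{(1+ra_{t})(q^{2}-1)}{rn}$ equals $3N(1+ra_{t})$, and Theorem~\ref{t3.1} gives the orbit count
$$\frac{1}{2}\Big[\gcd\big(q^{2}-1,\;q+1,\;3N(1+ra_{t})\big)+\gcd\big(q-1,\;q+1,\;3N(1+ra_{t})\big)\Big].$$
Because $q+1\mid q^{2}-1$, the first $\gcd$ equals $\gcd\big(q+1,\,3N(1+ra_{t})\big)$; because $\gcd(q-1,q+1)=1$, the second $\gcd$ equals $1$. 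Also $N\mid q-1$ forces $\gcd(N,q+1)\mid\gcd(q-1,q+1)=1$, so $\gcd\big(q+1,\,3N(1+ra_{t})\big)=\gcd\big(q+1,\,3(1+ra_{t})\big)$. The key remaining step is to prove $\gcd\big(q+1,\,3(1+ra_{t})\big)=3$: divisibility by $3$ is immediate from $3\mid q+1$, and for the reverse I would compare $p$-adic valuations. For a prime $p\neq 3$ one has $v_{p}\big(\tfrac{q+1}{3}\big)=v_{p}(q+1)$, so the hypothesis $\gcd\big(1+ra_{t},\tfrac{q+1}{3}\big)=1$ forces $p\nmid\gcd\big(q+1,3(1+ra_{t})\big)$; for $p=3$, if $v_{3}(q+1)=1$ then trivially $v_{3}(\gcd)=1$, while if $v_{3}(q+1)\ge 2$ then $3\mid\tfrac{q+1}{3}$, so the same hypothesis forces $3\nmid 1+ra_{t}$ and again $v_{3}(\gcd)=1$. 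Substituting back, the orbit count equals $\tfrac{1}{2}(3+1)=2$, so by Lemma~\ref{l2.1} the code has at most two non-zero weights; combined with irreducibility, $\mathcal{C}$ is a one-weight or two-weight $\lambda$-constacyclic code.

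The only genuinely delicate part is the final $\gcd$ computation, which hinges on tracking the $3$-adic valuation of $q+1$ and using $\gcd\big(1+ra_{t},\tfrac{q+1}{3}\big)=1$ at exactly the right place; everything else is routine manipulation of the formula in Theorem~\ref{t3.1} together with the elementary divisibility properties of $q\pm 1$ that hold when $q$ is an odd power of $2$.
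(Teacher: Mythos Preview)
Your proof is correct and follows essentially the same route as the paper: reduce to $k\in\{1,2\}$ and evaluate the orbit-count formula of Theorem~\ref{t3.1}. The one substantive difference is that the paper \emph{rules out} $k=1$ by deriving a contradiction from $rn\mid(1+ra_t)(q-1)$ (which would force $\tfrac{q+1}{3}\mid N\mid q-1$, impossible since $\gcd(q-1,q+1)=1$ and $q\ge 8$), thereby establishing that $k=2$ always holds; you instead simply treat $k=1$ as a live case and observe that the formula then gives orbit count~$1$. Both arguments prove the stated corollary, but the paper's version yields the extra information $\dim\mathcal{C}=2$. Your $p$-adic justification of $\gcd\big(q+1,3(1+ra_t)\big)=3$ is more explicit than the paper's, which asserts $\gcd\big(q+1,3N(1+ra_t)\big)=3$ without spelling out the $3$-adic step.
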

	
	\begin{proof}
		Let $k_{t}$ be the least positive integer such that $(1+r{a}_t)q^{k_{t}}\equiv 1+r{a}_t \pmod{rn}$, equivalently, $q^{k_{t}}\equiv 1~\big({\rm mod}~\frac{rn}{{\rm gcd}(rn,1+r{a}_t)}\big)$.
		Note that $q^{2}\equiv 1~\big({\rm mod}~\frac{rn}{{\rm gcd}(rn,1+r{a}_t)}\big)$, then $k_{t}\,|\,2$, and so $k_{t}=1$ or $2$. If $k_{t}=1$, then $rn\,|\,(1+r{a}_t)(q-1)$, that is, $\frac{(q+1)(q-1)}{3N}\,|\,(1+r{a}_t)(q-1)$, implying $\frac{q+1}{3}\,|\,(1+r{a}_t)N$.
		Since ${\rm gcd}\big(1+r{a}_t,\frac{q+1}{3})=1$, we have $\frac{q+1}{3}\,|\,N$, which is impossible because $N\,|\,(q-1)$, $\frac{q+1}{3}\,|\,(q+1)$ and ${\rm gcd}(q-1,q+1)=1$. So $k_{t}=2$. It follows from Theorem \ref{t3.1} that
		\begin{small}
			\begin{align*}
				\big|\langle \mu_{q},\rho,\sigma_{\xi} \rangle \backslash \mathcal{C}^{*}\big|
				=&\frac{1}{2}\sum_{h\mid 2}\varphi(\frac{2}{h}){\rm gcd}\Big(q^{h}-1,\frac{q^{2}-1}{q-1},\frac{(1+r{a}_t)(q^{2}-1)}{rn}\Big)\\
				=&\frac{1}{2}\left[\varphi(2){\rm gcd}\Big(q-1,q+1,\frac{(1+r{a}_t)(q^{2}-1)}{rn}\Big)
				+\varphi(1){\rm gcd}\Big(q^{2}-1,q+1,\frac{(1+r{a}_t)(q^{2}-1)}{rn}\Big)\right]\\
				=&\frac{1}{2}\left[\varphi(2) +\varphi(1)\cdot {\rm gcd}\Big(q+1,\frac{(1+r{a}_t)(q^{2}-1)}{rn}\Big)\right]\\
				=&\frac{1}{2}\Big[1 + {\rm gcd}\big(q+1,3N(1+r{a}_t)\big)\Big]\\
				=&\frac{1}{2}(1+3)=2.
			\end{align*}
		\end{small}
		Therefore, the number of non-zero weights of $\mathcal{C}$ is less than or equal to $2$, that is, $\mathcal{C}$ is a one-weight or two-weight $\lambda$-constacyclic code.
	\end{proof}
	
	\begin{Example}{\rm
			Take $q=32$, $n=11$, $\lambda=\theta$,
			where $\theta$ is a primitive element of $\mathbb{F}_{32}$.
			Consider the irreducible $\lambda$-constacyclic code $\mathcal{C}=\mathcal{R}^{(q)}_{n,\lambda}\varepsilon_{0}$, where the primitive idempotent $\varepsilon_{0}$ corresponds to the $32$-cyclotomic coset $\Gamma_{0}=\{1, 32\}$.
			According to Corollary \ref{c3.1}, $\mathcal{C}$ is a one-weight or two-weight $\lambda$-constacyclic code.
			In fact, by use of Magma \cite{4}, the weight enumerator of $\mathcal{C}$ is $1+341x^{10}+682x^{11}$, that is, $\mathcal{C}$ is a two-weight $\lambda$-constacyclic code. Therefore, Theorem \ref{t3.1} guarantees that any two codewords of $\mathcal{C}$ with the same weight are in the same $\langle\mu_{q},\rho,\sigma_{\xi} \rangle$-orbit.}
	\end{Example}
	
	\begin{Corollary}\label{c3.2}
		Suppose that $(q,k)\neq (2,3)$ and that $k$ and $2k+1$ are odd primes satisfying ${\rm gcd}(q-1,k)=1$, ${\rm gcd}(q-1,2k+1)=1$ and $q^{k}\equiv 1~({\rm mod}~2k+1)$.
		Let $\mathcal{C}$ be an irreducible $\lambda$-constacyclic code of length $n$ over $\mathbb{F}_{q}$ whose generating idempotent $\varepsilon_{t}$ corresponds to the $q$-cyclotomic coset $\{1+r{a}_t,(1+r{a}_t)q,(1+r{a}_t)q^2,\cdots\}$.
		If $n=\frac{q^{k}-1}{(2k+1)rN}$, where $N\,|\,\frac{q-1}{r}$,
		and ${\rm gcd}\big(1+r{a}_t, \frac{q^{k}-1}{(2k+1)(q-1)}\big)=1$, then $\mathcal{C}$ is a one-, two- or three-weight $\lambda$-constacyclic code.
	\end{Corollary}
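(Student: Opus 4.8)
The plan is to follow the strategy of the proof of Corollary~\ref{c3.1}: first pin down the dimension $k_{t}$ of $\mathcal{C}$ (equivalently, the size of its defining $q$-cyclotomic coset), and then substitute $k_{t}$ into the orbit-count formula of Theorem~\ref{t3.1}. Throughout I abbreviate $M=\frac{q^{k}-1}{q-1}=1+q+\cdots+q^{k-1}$.

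First I would determine $k_{t}$. By definition, $k_{t}$ is the least positive integer with $q^{k_{t}}\equiv 1~\big({\rm mod}~\frac{rn}{{\rm gcd}(rn,\,1+ra_{t})}\big)$. Since $\frac{rn}{{\rm gcd}(rn,\,1+ra_{t})}$ divides $rn=\frac{q^{k}-1}{(2k+1)N}$, which in turn divides $q^{k}-1$, we get $q^{k}\equiv 1~\big({\rm mod}~\frac{rn}{{\rm gcd}(rn,\,1+ra_{t})}\big)$, so $k_{t}\mid k$; as $k$ is prime, $k_{t}=1$ or $k_{t}=k$. If $k_{t}=1$ then $\mathcal{C}$ is a one-dimensional $\mathbb{F}_{q}$-code, so any two non-zero codewords are non-zero scalar multiples of one another and hence have equal weight; thus $\mathcal{C}$ is a one-weight code and there is nothing more to prove. (This degenerate case can actually arise, e.g.\ when $(q,k)=(4,2)$, so it genuinely must be addressed; the excluded pair $(q,k)=(2,3)$ would force $n=1$.) Hence from now on I may assume $k_{t}=k$.

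Next I would set up the elementary arithmetic. Since $q\equiv 1\pmod{q-1}$ we have $M\equiv k\pmod{q-1}$, so ${\rm gcd}(M,q-1)={\rm gcd}(k,q-1)=1$. Because $q^{k}\equiv 1\pmod{2k+1}$, ${\rm gcd}(2k+1,q-1)=1$ and $2k+1$ is prime, we get $2k+1\mid M$; put $M'=\frac{M}{2k+1}=\frac{q^{k}-1}{(2k+1)(q-1)}$, a positive integer. From $M'\mid M$ and $N\mid\frac{q-1}{r}\mid q-1$ it follows that ${\rm gcd}(M',N)\mid{\rm gcd}(M,q-1)=1$, while by hypothesis ${\rm gcd}(M',\,1+ra_{t})=1$; moreover $\frac{(1+ra_{t})(q^{k}-1)}{rn}=(1+ra_{t})(2k+1)N$. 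I would then substitute $k_{t}=k$ into Theorem~\ref{t3.1}: since $k$ is prime the only divisors $h\mid k$ are $1$ and $k$, so
$$\big|\langle\mu_{q},\rho,\sigma_{\xi}\rangle\backslash\mathcal{C}^{*}\big|=\frac{1}{k}\Big[\varphi(k)\,{\rm gcd}\big(q-1,\,M,\,(1+ra_{t})(2k+1)N\big)+\varphi(1)\,{\rm gcd}\big(q^{k}-1,\,M,\,(1+ra_{t})(2k+1)N\big)\Big].$$
In the $h=1$ term, ${\rm gcd}(q-1,M,(1+ra_{t})(2k+1)N)$ divides ${\rm gcd}(q-1,M)=1$, hence equals $1$. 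In the $h=k$ term, ${\rm gcd}(q^{k}-1,M)=M$ since $M\mid q^{k}-1$, so ${\rm gcd}(q^{k}-1,M,(1+ra_{t})(2k+1)N)={\rm gcd}\big((2k+1)M',\,(2k+1)(1+ra_{t})N\big)=(2k+1)\,{\rm gcd}\big(M',(1+ra_{t})N\big)=(2k+1)\,{\rm gcd}(M',N)=2k+1$. Therefore $\big|\langle\mu_{q},\rho,\sigma_{\xi}\rangle\backslash\mathcal{C}^{*}\big|=\frac{1}{k}\big[\varphi(k)+\varphi(1)(2k+1)\big]=\frac{(k-1)+(2k+1)}{k}=3$, and Theorem~\ref{t3.1} then gives that $\mathcal{C}$ has at most three non-zero weights, i.e.\ $\mathcal{C}$ is a one-, two-, or three-weight $\lambda$-constacyclic code.

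The argument is entirely elementary, so the main obstacle is bookkeeping with the three-entry gcd's rather than any conceptual difficulty. The two things to get right are: factoring $2k+1$ out of $M$ correctly, and invoking each coprimality hypothesis in the right place — ${\rm gcd}(q-1,k)=1$ to kill the $h=1$ term, and ${\rm gcd}(1+ra_{t},M')=1$ together with the derived ${\rm gcd}(M',N)=1$ to evaluate the $h=k$ term to exactly $2k+1$. The one step lying outside the orbit-count formula, and which therefore must be treated separately, is the degenerate case $k_{t}=1$, handled by the one-dimensional argument above.
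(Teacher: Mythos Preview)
Your argument is correct and follows the same template as the paper's proof of Corollary~\ref{c3.1} (which is all the paper claims for Corollary~\ref{c3.2}): determine $k_{t}$ and then plug into Theorem~\ref{t3.1}. The orbit-count computation for $k_{t}=k$ is clean and accurate.

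One small inaccuracy: your parenthetical example $(q,k)=(4,2)$ does not satisfy the hypotheses, since $k$ is required to be an \emph{odd} prime. In fact the case $k_{t}=1$ is vacuous here. Following the pattern of Corollary~\ref{c3.1} exactly, $k_{t}=1$ would force $rn\mid(1+ra_{t})(q-1)$, i.e.\ $M'\mid(1+ra_{t})N$; your own coprimality facts ${\rm gcd}(M',1+ra_{t})=1$ and ${\rm gcd}(M',N)=1$ then give $M'=1$, so $1+q+\cdots+q^{k-1}=2k+1$, which for an odd prime $k\geq3$ forces $(q,k)=(2,3)$, the excluded pair. This is precisely where that exclusion is used. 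Your alternative handling (``dimension one $\Rightarrow$ one-weight'') is logically harmless, but it leaves the hypothesis $(q,k)\neq(2,3)$ apparently unused; ruling out $k_{t}=1$ directly explains its role.
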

	
	\begin{proof}
		The proof is similar to that of Corollary \ref{c3.1}, and we omit it here.
	\end{proof}

	\begin{Example}{\rm
			Take $q=3$, $n=11$ and $\lambda=-1$.
			Consider the irreducible negacyclic code $\mathcal{C}=\mathcal{R}_{n,\lambda}^{(q)}\varepsilon_{1}$, where the primitive idempotent $\varepsilon_{1}$ corresponds to the $3$-cyclotomic coset $\Gamma_{1}=\{7,13,17,19,21\}$. According to Corollary \ref{c3.2}, $\mathcal{C}$ is a one-, two- or three-weight $\lambda$-constacyclic code. By use of Magma \cite{4}, the weight enumerator of $\mathcal{C}$ is $1+132x^{6}+110x^{9}$, that is, $\mathcal{C}$ is a two-weight $\lambda$-constacyclic code.}
	\end{Example}
	
	\subsection{An improved upper bound on the number of non-zero weights of a general constacyclic code}
	We now turn to consider the action of $\langle\mu_{q},\rho,\sigma_{\xi} \rangle$ on a general $\lambda$-constacyclic code $\mathcal{C}$. Let $j_{1},j_{2},\cdots,j_{u}$ be positive integers and let $t_{j_{1}},t_{j_{2}},\cdots,t_{j_{u}}$ be integers with $0\leq t_{j_{1}}<t_{j_{2}}<\cdots<t_{j_{u}}\leq s$.
	Suppose that the irreducible $\lambda$-constacyclic code $\mathcal{R}^{(q)}_{n,\lambda}\varepsilon_{t_{j_{i}}}$ corresponds to the $q$-cyclotomic coset $\{1+ra_{t_{j_{i}}},(1+ra_{t_{j_{i}}})q,\cdots,(1+ra_{t_{j_{i}}})q^{k_{t_{j_{i}}}-1}\}$ for $1\leq i\leq u$. Define
	$$\mathcal{C}_{j_{1},j_{2},\cdots, j_{u}}^{\sharp}=\mathcal{R}^{(q)}_{n,\lambda}\varepsilon_{t_{j_{1}}}\backslash\{{\bf 0}\}\bigoplus \mathcal{R}^{(q)}_{n,\lambda}\varepsilon_{t_{j_{2}}}\backslash\{{\bf 0}\}\bigoplus \cdots \bigoplus \mathcal{R}^{(q)}_{n,\lambda}\varepsilon_{t_{j_{u}}}\backslash\{{\bf 0}\}.$$
	The following lemma gives an upper bound on the number of $\langle\mu_{q},\rho,\sigma_{\xi} \rangle$-orbits of $\mathcal{C}_{j_{1},j_{2},\cdots, j_{u}}^{\sharp}$.
	
	\begin{lem}\label{l3.1}
		With the notation given above, then the number of $\langle\mu_{q},\rho,\sigma_{\xi} \rangle$-orbits of $\mathcal{C}_{j_{1},j_{2},\cdots, j_{u}}^{\sharp}$ is less than or equal to
		\begin{align*}
			&\frac{1}{mn(q-1)}\sum_{h=0}^{m-1}{\rm gcd}\Big(n,\frac{(1+ra_{t_{j_{1}}})II_{t_{j_{1}}}}{r\!\cdot\!{\rm gcd}(I,I_{t_{j_{1}}})},\cdots,\frac{(1+ra_{t_{j_{u}}})II_{t_{j_{u}}}}{r\!\cdot\!{\rm gcd}(I,I_{t_{j_{u}}})},\frac{(a_{t_{j_{2}}}\!-\!a_{t_{j_{1}}})I_{t_{j_{1}}}I_{t_{j_{2}}}}{{\rm gcd}(I_{t_{j_{1}}},I_{t_{j_{2}}})},\cdots,\\
			&\frac{(a_{t_{j_{u}}}\!-\!a_{t_{j_{1}}})I_{t_{j_{1}}}I_{t_{j_{u}}}}{{\rm gcd}(I_{t_{j_{1}}},I_{t_{j_{u}}})},\cdots,
			\frac{(a_{t_{j_{u}}}\!-\!a_{t_{j_{u-1}}})I_{t_{j_{u-1}}}I_{t_{j_{u}}}}{{\rm gcd}(I_{t_{j_{u-1}}},I_{t_{j_{u}}})}\Big)
			\!\cdot\!{\rm gcd}(I,I_{t_{j_{1}}},\cdots,I_{t_{j_{u}}})\!\cdot\!
			\prod_{i=1}^{u}(q^{{\rm gcd}(k_{t_{j_{i}}},h)}-1),
		\end{align*}
		where $I=q-1$, $I_{t_{j_{i}}}=\frac{q^{k_{t_{j_{i}}}}-1}{q^{{\rm gcd}(k_{t_{j_{i}}},h)}-1}$ for $i=1,2,\cdots,u$.
	\end{lem}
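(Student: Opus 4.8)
The plan is to count $\langle\mu_{q},\rho,\sigma_{\xi}\rangle$-orbits of $\mathcal{C}^{\sharp}_{j_{1},\dots,j_{u}}$ by applying Burnside's lemma, exactly as in the proof of Theorem \ref{t3.1}, but now tracking the joint action on $u$ blocks simultaneously. A typical element of $\mathcal{C}^{\sharp}_{j_{1},\dots,j_{u}}$ is a tuple $({\bf c}_{1},\dots,{\bf c}_{u})$ with ${\bf c}_{i}\in\mathcal{R}^{(q)}_{n,\lambda}\varepsilon_{t_{j_{i}}}\setminus\{{\bf 0}\}$, each ${\bf c}_{i}$ being described (via the Discrete Fourier Transform) by a nonzero element $\alpha_{i}$ of the subfield $\mathbb{F}_{q^{k_{t_{j_{i}}}}}\subseteq\mathbb{F}_{q^{m}}$, where $\alpha_{i}=\sum_{v}c^{(i)}_{v}\zeta^{v(1+ra_{t_{j_{i}}})}$. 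Reusing the single-block computation from Theorem \ref{t3.1}, the automorphism $\mu_{q}^{r_{1}}\rho^{r_{2}}\sigma_{\xi}^{r_{3}}$ fixes $({\bf c}_{1},\dots,{\bf c}_{u})$ if and only if, for every $i$,
$$\xi^{r_{3}}\alpha_{i}^{q^{r_{1}}-1}=\zeta^{-(1+ra_{t_{j_{i}}})q^{r_{1}}r_{2}}.$$
So I would first fix $r_{1}$ (write $h$ for $\gcd(k_{t_{j_i}},r_{1})$ inside each block and note $q^{r_1}-1$ and $q^{k_{t_{j_i}}}-1$ have gcd $q^{\gcd(k_{t_{j_i}},r_1)}-1$), count for each feasible pair $(r_{2},r_{3})$ the number of tuples $(\alpha_{1},\dots,\alpha_{u})$ satisfying the $u$ equations, then count the feasible pairs $(r_{2},r_{3})$, and finally sum over $r_{1}=0,\dots,m-1$ and divide by $mn(q-1)$.

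The per-block analysis is already done: given a feasible $(r_{1},r_{2},r_{3})$, the number of admissible $\alpha_{i}$ is either $0$ or $q^{\gcd(k_{t_{j_i}},r_{1})}-1$, so the number of admissible tuples is $0$ or $\prod_{i=1}^{u}\big(q^{\gcd(k_{t_{j_i}},r_{1})}-1\big)$. The combinatorial heart is to count the pairs $(r_{2},r_{3})$, with $0\le r_{2}\le n-1$, $0\le r_{3}\le q-2$, for which \emph{all} $u$ equations are simultaneously solvable. Solvability of the $i$-th equation means $\xi^{r_3}\zeta^{-(1+ra_{t_{j_i}})q^{r_1}r_2}$ lies in the image subgroup $\langle\theta_i^{q^{r_1}-1}\rangle$ of $\mathbb{F}_{q^{k_{t_{j_i}}}}^{*}$ under $x\mapsto x^{q^{r_1}-1}$; here $I_{t_{j_i}}=\frac{q^{k_{t_{j_i}}}-1}{q^{\gcd(k_{t_{j_i}},r_1)}-1}$ is the order of that image. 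Following the reasoning in Theorem \ref{t3.1}, one eliminates $r_{3}$ by observing that, once the $r_{2}$-conditions hold, the values of $r_3$ that work form a coset of $\langle\xi\rangle\cap\bigcap_i\langle\theta_i^{q^{r_1}-1}\rangle$ (of size $\gcd(I,I_{t_{j_1}},\dots,I_{t_{j_u}})$, since inside $\mathbb{F}_{q^m}^*$ these are the unique subgroups of orders $I=q-1$ and $I_{t_{j_i}}$). What remains is a condition on $r_{2}$ alone: a simultaneous system of divisibility/congruence conditions modulo $rn$, one from each block of the form ``$\frac{rn}{\gcd(\,\cdot\,)}\mid(1+ra_{t_{j_i}})r_2$'' together with compatibility conditions of the form ``$(a_{t_{j_{i'}}}-a_{t_{j_i}})$ times something divides something'' that arise from requiring the same $r_3$ to work for blocks $i$ and $i'$. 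The number of $r_{2}\in\{0,\dots,n-1\}$ satisfying this system is a single gcd, which after the bookkeeping is exactly
$${\rm gcd}\Big(n,\tfrac{(1+ra_{t_{j_{1}}})II_{t_{j_{1}}}}{r\cdot{\rm gcd}(I,I_{t_{j_{1}}})},\dots,\tfrac{(1+ra_{t_{j_{u}}})II_{t_{j_{u}}}}{r\cdot{\rm gcd}(I,I_{t_{j_{u}}})},\tfrac{(a_{t_{j_{2}}}-a_{t_{j_{1}}})I_{t_{j_{1}}}I_{t_{j_{2}}}}{{\rm gcd}(I_{t_{j_{1}}},I_{t_{j_{2}}})},\dots,\tfrac{(a_{t_{j_{u}}}-a_{t_{j_{u-1}}})I_{t_{j_{u-1}}}I_{t_{j_{u}}}}{{\rm gcd}(I_{t_{j_{u-1}}},I_{t_{j_{u}}})}\Big).$$
Multiplying the three factors (admissible-tuple count, $r_3$-coset size, $r_2$-count), summing over $r_{1}$, and dividing by $mn(q-1)$ gives the claimed formula.

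The main obstacle is the reduction to ``an inequality rather than an equality'': the joint condition that a \emph{single} $(r_{2},r_{3})$ makes all $u$ equations solvable is genuinely more restrictive than each being separately solvable, yet the formula given is only claimed as an upper bound, so I would not need to show the $r_2$-count is exactly that gcd — it suffices to show the set of valid $r_{2}$ is \emph{contained} in the set cut out by those divisibility conditions. This is where the ``$(a_{t_{j_{i'}}}-a_{t_{j_i}})$'' cross-terms enter: if $r_3$ is forced to a common value across blocks, then comparing the $i$-th and $i'$-th equations forces $\zeta^{(1+ra_{t_{j_{i'}}})q^{r_1}r_2}/\zeta^{(1+ra_{t_{j_i}})q^{r_1}r_2}\in\langle\theta_i^{q^{r_1}-1}\rangle\langle\theta_{i'}^{q^{r_1}-1}\rangle$, and the exponent difference is $r(a_{t_{j_{i'}}}-a_{t_{j_i}})q^{r_1}r_2$; translating ``order of this root of unity divides the order of the product subgroup'' into a divisibility on $r_{2}$ and simplifying the resulting gcds (cancelling the factor $r$, using $\gcd(rn,\cdot)=r\gcd(n,\cdot/r)$ where applicable) is the delicate bookkeeping step. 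I expect the simplification of the subgroup orders — writing $|\langle\xi\rangle\langle\theta_i^{q^{r_1}-1}\rangle| = (q-1)I_{t_{j_i}}/{\rm gcd}(I,I_{t_{j_i}})$ and its multi-block analogues — and keeping track of which $\gcd$'s can be pulled inside a larger $\gcd$ to be the most error-prone part; since only an upper bound is asserted, any step where I over-count the valid $(r_2,r_3)$ is harmless, which is the saving grace that makes the proof manageable.
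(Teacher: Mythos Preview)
Your proposal is correct and follows essentially the same route as the paper's own proof: Burnside's lemma, the fixed-point characterization $\xi^{r_3}\alpha_i^{q^{r_1}-1}=\zeta^{-(1+ra_{t_{j_i}})q^{r_1}r_2}$ for all $i$, the dichotomy $|{\rm Fix}|=0$ or $\prod_i(q^{\gcd(k_{t_{j_i}},r_1)}-1)$, an exact count $|R'(r_1,r_2)|=\gcd(I,I_{t_{j_1}},\dots,I_{t_{j_u}})$ for the admissible $r_3$'s, and then only an \emph{upper bound} on $|S'(r_1)|$ obtained by relaxing the joint solvability condition to the list of necessary divisibilities you wrote (the per-block conditions $\zeta^{-(1+ra_{t_{j_i}})q^{r_1}r_2}\in\langle\xi\rangle\langle\theta_{t_{j_i}}^{q^{r_1}-1}\rangle$ plus the cross-block conditions $\zeta^{-r(a_{t_{j_{i'}}}-a_{t_{j_i}})q^{r_1}r_2}\in\langle\theta_{t_{j_i}}^{q^{r_1}-1}\rangle\langle\theta_{t_{j_{i'}}}^{q^{r_1}-1}\rangle$). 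You have correctly pinpointed that the inequality in the lemma comes precisely from this one-way implication on $S'(r_1)$, which is exactly how the paper argues.
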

	
	\begin{proof}
		According to Equation (\ref{e2.1}) and Lemma \ref{l2.2}, we see that
		$$\big|\langle\mu_{q},\rho,\sigma_{\xi} \rangle\backslash \mathcal{C}_{j_{1},j_{2},\cdots, j_{u}}^{\sharp}\big|=\frac{1}{mn(q-1)}\sum_{r_{1}=0}^{m-1}\sum_{r_{2}=0}^{n-1}\sum_{r_{3}=0}^{q-2}\big|{\rm Fix}\big(\mu_{q}^{r_{1}}\rho^{r_{2}}\sigma_{\xi}^{r_{3}}\big)\big|,$$
		where ${\rm Fix}\big(\mu_{q}^{r_{1}}\rho^{r_{2}}\sigma_{\xi}^{r_{3}}\big)=\big\{{\bf c}\in \mathcal{C}_{j_{1},j_{2},\cdots, j_{u}}^{\sharp}
		\,\big|\,\mu_{q}^{r_{1}}\rho^{r_{2}}\sigma_{\xi}^{r_{3}}({\bf c})={\bf c}\big\}$.
		
		Let ${\bf c}={\bf c}_{t_{j_{1}}}+{\bf c}_{t_{j_{2}}}+\cdots+{\bf c}_{t_{j_{u}}}\in \mathcal{C}_{j_{1},j_{2},\cdots, j_{u}}^{\sharp}$, where ${\bf c}_{t_{j_{i}}}\in \mathcal{R}^{(q)}_{n,\lambda}\varepsilon_{t_{j_{i}}}\backslash\{{\bf 0}\}$ for $i=1,2,\cdots,u$.
		Suppose that
		$${\bf c}_{t_{j_{i}}}=\sum_{j=0}^{k_{t_{j_{i}}}-1}\big(\sum_{v=0}^{k_{t_{j_{i}}}-1} c_{t_{j_{i}},v}\zeta^{v(1+ra_{t_{j_{i}}})q^{j}}\big)e_{(1+ra_{t_{j_{i}}})q^{j}}
		~~{\rm for}~i=1,2,\cdots,u.$$
		Then we have
		$\mu_{q}^{r_{1}}\rho^{r_{2}}\sigma_{\xi}^{r_{3}}({\bf c})
		=\mu_{q}^{r_{1}}\rho^{r_{2}}\sigma_{\xi}^{r_{3}}({\bf c}_{t_{j_{1}}})+\cdots
		+\mu_{q}^{r_{1}}\rho^{r_{2}}\sigma_{\xi}^{r_{3}}({\bf c}_{t_{j_{u}}})$, where
		\begin{align*}
			\mu_{q}^{r_{1}}\rho^{r_{2}}\sigma_{\xi}^{r_{3}}({\bf c}_{t_{j_{i}}})
			=\sum_{j=0}^{k_{t_{j_{i}}}-1}\xi^{r_{3}}\zeta^{(1+ra_{t_{j_{i}}})q^{r_{1}+j}r_{2}}
			\big(\sum_{v=0}^{k_{t_{j_{i}}}-1}c_{t_{j_{i}},v}\zeta^{v(1+ra_{t_{j_{i}}})q^{j}}\big)^{q^{r_{1}}}e_{(1+ra_{t_{j_{i}}})q^{j}}
			~~{\rm for}~i=1,2,\cdots,u.
		\end{align*}
		It follows that
		\begin{align*}
			\mu_{q}^{r_{1}}\rho^{r_{2}}\sigma_{\xi}^{r_{3}}({\bf c})={\bf c}
			&\Leftrightarrow~ \mu_{q}^{r_{1}}\rho^{r_{2}}\sigma_{\xi}^{r_{3}}({\bf c}_{t_{j_{i}}})={\bf c}_{t_{j_{i}}}~{\rm for}~~i=1,2,\cdots,u\\
			&\Leftrightarrow~\xi^{r_{3}}\big(\sum_{v=0}^{k_{t_{j_{i}}}-1}c_{t_{j_{i}},v}\zeta^{v(1+ra_{t_{j_{i}}})}\big)^{q^{r_{1}}-1}
			=\zeta^{-(1+ra_{t_{j_{i}}})q^{r_{1}}r_{2}}~~{\rm for}~i=1,2,\cdots,u.
		\end{align*}
		Since the minimal polynomial of $\zeta^{1+ra_{t_{j_{i}}}}~(1 \leq i \leq u)$ over $\mathbb{F}_{q}$ is of degree $k_{t_{j_{i}}}$, the set
		$$\Big\{c_{t_{j_{i}},0}+c_{t_{j_{i}},1}\zeta^{1+ra_{t_{j_{i}}}}
		+\cdots+c_{t_{j_{i}},k_{t_{j_{i}}}-1}\zeta^{(k_{t_{j_{i}}}-1)(1+ra_{t_{j_{i}}})}~\Big|~
		c_{t_{j_{i}},v}\in \mathbb{F}_{q},~ 1 \leq l \leq k_{t_{j_{i}}}-1\Big\}$$
		forms a subfield of $\mathbb{F}_{q^{m}}$ with $q^{k_{t_{j_{i}}}}$ elements.
		So it turns out that
		\begin{align*}
			&\big|{\rm Fix}\big(\mu_{q}^{r_{1}}\rho^{r_{2}}\sigma_{\xi}^{r_{3}}\big)\big|\\
			=&\Big|\big\{(\alpha_{t_{j_{1}}},\cdots,\alpha_{t_{j_{u}}})\!\in \mathbb{F}_{q^{k_{t_{j_{1}}}}}^{*}\!\times \cdots \times \mathbb{F}_{q^{k_{t_{j_{u}}}}}^{*}\!~\big|~\xi^{r_{3}}\alpha_{t_{j_{i}}}^{q^{r_{1}}-1}=\zeta^{-(1+ra_{t_{j_{i}}})q^{r_{1}}r_{2}},~i=1,2,\cdots,u \big\}\Big|.
		\end{align*}
		It's easy to prove that
		\begin{align*}
			\big|{\rm Fix}\big(\mu_{q}^{r_{1}}\rho^{r_{2}}\sigma_{\xi}^{r_{3}}\big)\big|=0~{\rm or} ~\prod_{i=1}^{u}(q^{{\rm gcd}(k_{t_{j_{i}}},r_{1})}-1).
		\end{align*}
		Next, fixing $r_1~\!(0 \leq \!r_1 \!\leq m-1)$, we count the number of number pairs $(r_2,r_3)$ such that $\big|{\rm Fix}\big(\mu_{q}^{r_{1}}\rho^{r_{2}}\sigma_{\xi}^{r_{3}}\big)\big|\!\neq\! 0$ , where $0 \leq r_{2} \leq n-1$, $0 \leq r_{3} \leq q-2$.
		
		For $1 \leq i \leq u$, let $\theta_{t_{j_{i}}}$ be a generator of $\mathbb{F}_{q^{k_{t_{j_{i}}}}}^{*}$.
		\!It's not hard to see that $\big\{\alpha_{t_{j_{i}}}^{q^{r_{1}}-1}\mid\alpha_{t_{j_{i}}}\!\in\! \mathbb{F}_{q^{k_{t_{j_{i}}}}}^{*}\big\}\!=\!\langle \theta_{t_{j_{i}}}^{q^{r_{1}}-1}\rangle$,
		which is a cyclic subgroup of $\mathbb{F}_{q^{m}}^{*}$ of order $I_{t_{j_{i}}}=\frac{q^{k_{t_{j_{i}}}}-1}{q^{{\rm gcd}(k_{t_{j_{i}}},r_{1})}-1}$.
		Since $\langle \xi \rangle$ is a cyclic subgroup of $\mathbb{F}_{q^{m}}^{*}$ of order $I=q-1$,
		$\langle \xi \rangle \langle \theta_{t_{j_{i}}}^{q^{r_{1}}-1} \rangle$ is a cyclic subgroup of $\mathbb{F}_{q^{m}}^{*}$ of order $\frac{II_{t_{j_{i}}}}{{\rm gcd}(I,I_{t_{j_{i}}})}$ for $1 \leq i \leq u$.
		For $1\leq i< i'\leq u$, $\langle \theta_{t_{j_{i}}}^{q^{r_{1}}-1}  \rangle \langle \theta_{t_{j_{i'}}}^{q^{r_{1}}-1} \rangle$ is a cyclic subgroup of $\mathbb{F}_{q^{m}}^{*}$ of order $\frac{I_{t_{j_{i}}}I_{t_{j_{i'}}}}{{\rm gcd}(I_{t_{j_{i}}},I_{t_{j_{i'}}})}$.
		Let
		$$S'(r_{1})=\Big\{0 \leq z \leq n-1~\Big|~\exists~0 \leq v \leq q-2 ~s.t.~ \zeta^{-(1+ra_{t_{j_{i}}})q^{r_{1}}z}\in \xi^{v}\langle \theta_{t_{j_{i}}}^{q^{r_{1}}-1}  \rangle ~{\rm for~all~}1 \leq i \leq u\Big\}.$$
		Then we see that
		\begin{align*}
			r_{2}\in S'(r_{1})~\Rightarrow~&\zeta^{-(1+ra_{t_{j_{i}}})q^{r_{1}}r_{2}}\in \langle \xi\rangle\langle \theta_{t_{j_{i}}}^{q^{r_{1}}-1}  \rangle~{\rm for}~1\leq i\leq u,\\
			& {\rm and}~\zeta^{-r(a_{t_{j_{i'}}}-a_{t_{j_{i}}})q^{r_{1}}r_{2}}\in \langle \theta_{t_{j_{i}}}^{q^{r_{1}}-1}  \rangle\langle \theta_{t_{j_{i'}}}^{q^{r_{1}}-1} \rangle~{\rm for}~1\leq i< i'\leq u\\
			\Leftrightarrow~& \frac{n}{{\rm gcd}\big(n,\frac{(1+ra_{t_{j_{i}}})II_{t_{j_{i}}}}{r\cdot {\rm gcd}(I,I_{t_{j_{i}}})}\big)}~\Big|~ r_{2}~{\rm for}~1\leq i\leq u,\\
			&{\rm and}~\frac{n}{{\rm gcd}\big(n,\frac{(a_{t_{j_{i'}}}-a_{t_{j_{i}}})I_{t_{j_{i}}}I_{t_{j_{i'}}}}{{\rm gcd}(I_{t_{j_{i}}},I_{t_{j_{i'}}})}\big)}~\Big|~ r_{2}~{\rm for}~1\leq i< i'\leq u\\
			\Leftrightarrow~&\frac{n}{{\rm gcd}(d_{1},d_{2})}~\Big|~ r_{2},
		\end{align*}
		where $$d_{1}={\rm gcd}\Big(n,\frac{(1+ra_{t_{j_{1}}})II_{t_{j_{1}}}}{r\!\cdot\!
			{\rm gcd}(I,I_{t_{j_{1}}})},\cdots,\frac{(1+ra_{t_{j_{u}}})II_{t_{j_{u}}}}{r\!\cdot\!
			{\rm gcd}(I,I_{t_{j_{u}}})}\Big)$$
		and
		\begin{align*}d_{2}={\rm gcd}\Big(n,\frac{(a_{t_{j_{2}}}\!-\!a_{t_{j_{1}}})I_{t_{j_{1}}}I_{t_{j_{2}}}}{ {\rm gcd}(I_{t_{j_{1}}},I_{t_{j_{2}}})},
			\cdots,\frac{(a_{t_{j_{u}}}\!-\!a_{t_{j_{1}}})I_{t_{j_{1}}}I_{t_{j_{u}}}}{ {\rm gcd}(I_{t_{j_{1}}},I_{t_{j_{u}}})},
			\cdots, \frac{(a_{t_{j_{u}}}\!-\!a_{t_{j_{u-1}}})I_{t_{j_{u-1}}}I_{t_{j_{u}}}}{ {\rm gcd}(I_{t_{j_{u-1}}},I_{t_{j_{u}}})}\Big).
		\end{align*}
		Hence
		\begin{align*}
			|S'(r_{1})|\leq {\rm gcd}(d_{1},d_{2})&= {\rm gcd}\Big(n,\frac{(1+ra_{t_{j_{1}}})II_{t_{j_{1}}}}{r\!\cdot\! {\rm gcd}(I,I_{t_{j_{1}}})},\cdots,\frac{(1+ra_{t_{j_{u}}})II_{t_{j_{u}}}}{r\!\cdot\!{\rm gcd}(I,I_{t_{j_{u}}})},
			\frac{(a_{t_{j_{2}}}\!-\!a_{t_{j_{1}}})I_{t_{j_{1}}}I_{t_{j_{2}}}}{{\rm gcd}(I_{t_{j_{1}}},I_{t_{j_{2}}})},\\
			&\quad \cdots,\frac{(a_{t_{j_{u}}}\!-\!a_{t_{j_{1}}})I_{t_{j_{1}}}I_{t_{j_{u}}}}{{\rm gcd}(I_{t_{j_{1}}},I_{t_{j_{u}}})}, \cdots,\frac{(a_{t_{j_{u}}}\!-\!a_{t_{j_{u-1}}})I_{t_{j_{u-1}}}I_{t_{j_{u}}}}{{\rm gcd}(I_{t_{j_{u-1}}},I_{t_{j_{u}}})}\Big).
		\end{align*}
		Suppose $r_{2}\in S'(r_{1})$, and let
		$$R'(r_{1},r_{2})=\Big\{0 \leq z \leq q-2~\Big|~\zeta^{-(1+ra_{t_{j_{i}}})q^{r_{1}}r_{2}}\in \xi^{z}\langle \theta_{t_{j_{i}}}^{q^{r_{1}}-1}  \rangle ~{\rm for~all~}1\leq i\leq u\Big\}.$$
		Similar to the proof of Theorem \ref{t3.1}, we obtain that
		$|R'(r_{1},r_{2})|={\rm gcd}(I,I_{t_{j_{1}}},\cdots,I_{t_{j_{u}}}).$
		
		We conclude that
		\begin{small}	\begin{align*}
				&\big|\langle\mu_{q},\rho,\sigma_{\xi} \rangle\backslash \mathcal{C}_{j_{1},j_{2},\cdots, j_{u}}^{\sharp}\big|\\
				=&\frac{1}{mn(q-1)}\sum_{r_{1}=0}^{m-1}\sum_{r_{2}\in S'(r_{1})}\sum_{r_{3}\in R'(r_{1},r_{2})}\prod_{i=1}^{u}(q^{{\rm gcd}(k_{t_{j_{i}}},r_{1})}-1)\\
				=&\frac{1}{mn(q-1)}\sum_{r_{1}=0}^{m-1}|S'(r_{1})|\!\cdot\!|R'(r_{1},r_{2})|\prod_{i=1}^{u}(q^{{\rm gcd}(k_{t_{j_{i}}},r_{1})}-1)\\
				\leq &\frac{1}{mn(q-1)}\sum_{r_{1}=0}^{m-1}{\rm gcd}\Big(n,\frac{(1+ra_{t_{j_{1}}})II_{t_{j_{1}}}}{r\!\cdot\!{\rm gcd}(I,I_{t_{j_{1}}})},\cdots,\frac{(1+ra_{t_{j_{u}}})II_{t_{j_{u}}}}{r\!\cdot\!{\rm gcd}(I,I_{t_{j_{u}}})},\frac{(a_{t_{j_{2}}}\!-\!a_{t_{j_{1}}})I_{t_{j_{1}}}I_{t_{j_{2}}}}{{\rm gcd}(I_{t_{j_{1}}},I_{t_{j_{2}}})},\cdots,\\
				&\frac{(a_{t_{j_{u}}}\!-\!a_{t_{j_{1}}})I_{t_{j_{1}}}I_{t_{j_{u}}}}{{\rm gcd}(I_{t_{j_{1}}},I_{t_{j_{u}}})},\cdots,
				\frac{(a_{t_{j_{u}}}\!-\!a_{t_{j_{u-1}}})I_{t_{j_{u-1}}}I_{t_{j_{u}}}}{{\rm gcd}(I_{t_{j_{u-1}}},I_{t_{j_{u}}})}\Big)
				\!\cdot\!{\rm gcd}(I,I_{t_{j_{1}}},\cdots,I_{t_{j_{u}}})\!\cdot\!
				\prod_{i=1}^{u}(q^{{\rm gcd}(k_{t_{j_{i}}},r_{1})}-1).
			\end{align*}
		\end{small}
		The proof is then completed.
	\end{proof}
	
	Based on Lemma \ref{l3.1}, an upper bound on the number of non-zero weights of $\mathcal{C}$ is derived as follows.
	\begin{Theorem}\label{t3.2}
		Let $\mathcal{C}$ be a $\lambda$-constacyclic code of length $n$ over $\mathbb{F}_{q}$. Suppose that
		$$\mathcal{C}=\mathcal{R}^{(q)}_{n,\lambda}\varepsilon_{t_{1}}\bigoplus \mathcal{R}^{(q)}_{n,\lambda}\varepsilon_{t_{2}}\bigoplus \cdots \bigoplus \mathcal{R}^{(q)}_{n,\lambda}\varepsilon_{t_{w}},$$
		where $0\leq t_{1}< t_{2}< \cdots< t_{w}\leq s$, and the primitive idempotent $\varepsilon_{t_{j}}$ corresponds to the $q$-cyclotomic coset $\{1+ra_{t_{j}},(1+ra_{t_{j}})q,\cdots,(1+ra_{t_{j}})q^{k_{t_{j}}-1}\}$ for each $1\leq j\leq w$.
		Then the number of $\langle\mu_{q},\rho,\sigma_{\xi} \rangle$-orbits of $\mathcal{C}^{*}=\mathcal{C}\backslash \{\bf 0\}$ is less than or equal to
		$$\sum_{\substack{\{j_{1},j_{2},\cdots,j_{u}\}\subseteq \{1,2,\cdots,w\}\\ 1\leq j_{1}< j_{2}<\cdots<j_{u}\leq w}}N_{j_{1},j_{2},\cdots, j_{u}},$$
		where the value of $N_{j_{1},j_{2},\cdots, j_{u}}$ is
		\begin{align*}
			&\frac{1}{mn(q-1)}\sum_{h=0}^{m-1}
			{\rm gcd}\Big(n,\frac{(1+ra_{t_{j_{1}}})II_{t_{j_{1}}}}{r\!\cdot\!{\rm gcd}(I,I_{t_{j_{1}}})},
			\cdots,\frac{(1+ra_{t_{j_{u}}})II_{t_{j_{u}}}}{r\!\cdot\!{\rm gcd}(I,I_{t_{j_{u}}})},
			\frac{(a_{t_{j_{2}}}\!-\!a_{t_{j_{1}}})I_{t_{j_{1}}}I_{t_{j_{2}}}}{{\rm gcd}(I_{t_{j_{1}}},I_{t_{j_{2}}})},\cdots,\\
			&\frac{(a_{t_{j_{u}}}\!-\!a_{t_{j_{1}}})I_{t_{j_{1}}}I_{t_{j_{u}}}}
			{{\rm gcd}(I_{t_{j_{1}}},I_{t_{j_{u}}})},\cdots,
			\frac{(a_{t_{j_{u}}}\!-\!a_{t_{j_{u-1}}})I_{t_{j_{u-1}}}I_{t_{j_{u}}}}
			{{\rm gcd}(I_{t_{j_{u-1}}},I_{t_{j_{u}}})}\Big)\!\cdot\! {\rm gcd}(I,I_{t_{j_{1}}},\cdots,I_{t_{j_{u}}})\!\cdot\!\prod_{i=1}^{u}(q^{{\rm gcd}(k_{t_{j_{i}}},h)}\!-\!1)
		\end{align*}
		with $I=q-1$ and $I_{t_{j_{i}}}=\frac{q^{k_{t_{j_{i}}}}-1}{q^{{\rm gcd}(k_{t_{j_{i}}},h)}-1}$ for $1 \leq i \leq u$.
		
		In particular, the number of non-zero weights of $\mathcal{C}$ is less than or equal to the number of $\langle\mu_{q},\rho,\sigma_{\xi} \rangle$-orbits of $\mathcal{C}^{*}$.
	\end{Theorem}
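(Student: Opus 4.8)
The plan is to reduce the statement to Lemma~\ref{l3.1} by decomposing $\mathcal{C}^{*}$ into pieces on which $\langle\mu_{q},\rho,\sigma_{\xi}\rangle$ acts, one piece for each possible support pattern of a codeword with respect to the direct-sum decomposition of $\mathcal{C}$, and then adding up the orbit counts.

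The crucial preliminary step is to record that the group $\langle\mu_{q},\rho,\sigma_{\xi}\rangle$ stabilises every minimal ideal $\mathcal{R}^{(q)}_{n,\lambda}\varepsilon_{t_{j}}$. Indeed, $\rho$ is nothing but multiplication by $x$ and $\sigma_{\xi}$ is multiplication by the scalar $\xi$ in $\mathcal{R}^{(q)}_{n,\lambda}$, so both preserve every ideal; and $\mu_{q}$ is a ring automorphism of $\mathcal{R}^{(q)}_{n,\lambda}$ which, by the computation $\mu_{q}(e_{(1+ra_{t})q^{j}})=e_{(1+ra_{t})q^{j-1}}$ carried out in the proof of Theorem~\ref{t3.1}, merely permutes the primitive idempotents $e_{\ell}$ with $\ell$ inside a fixed $q$-cyclotomic coset; hence $\mu_{q}(\varepsilon_{t_{j}})=\varepsilon_{t_{j}}$ and $\mu_{q}$ carries the minimal ideal $\mathcal{R}^{(q)}_{n,\lambda}\varepsilon_{t_{j}}$ onto itself. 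Consequently $\langle\mu_{q},\rho,\sigma_{\xi}\rangle$ acts on each $\mathcal{R}^{(q)}_{n,\lambda}\varepsilon_{t_{j}}\backslash\{{\bf 0}\}$, and therefore on each set $\mathcal{C}^{\sharp}_{j_{1},j_{2},\dots,j_{u}}$.

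Next I would write down the partition
\[
\mathcal{C}^{*}=\bigsqcup_{\substack{\{j_{1},j_{2},\dots,j_{u}\}\subseteq\{1,2,\dots,w\}\\ 1\leq j_{1}<j_{2}<\cdots<j_{u}\leq w}}\mathcal{C}^{\sharp}_{j_{1},j_{2},\dots,j_{u}},
\]
obtained by writing a non-zero codeword uniquely as ${\bf c}={\bf c}_{t_{1}}+{\bf c}_{t_{2}}+\cdots+{\bf c}_{t_{w}}$ with ${\bf c}_{t_{j}}\in\mathcal{R}^{(q)}_{n,\lambda}\varepsilon_{t_{j}}$ and recording the non-empty index set $\{j:{\bf c}_{t_{j}}\neq{\bf 0}\}$. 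Since each block of this partition is $\langle\mu_{q},\rho,\sigma_{\xi}\rangle$-invariant, every orbit is contained in exactly one block, so the orbit counts add:
\[
\bigl|\langle\mu_{q},\rho,\sigma_{\xi}\rangle\backslash\mathcal{C}^{*}\bigr|=\sum_{\substack{\{j_{1},j_{2},\dots,j_{u}\}\subseteq\{1,2,\dots,w\}\\ 1\leq j_{1}<j_{2}<\cdots<j_{u}\leq w}}\bigl|\langle\mu_{q},\rho,\sigma_{\xi}\rangle\backslash\mathcal{C}^{\sharp}_{j_{1},j_{2},\dots,j_{u}}\bigr|.
\]
Applying Lemma~\ref{l3.1} to each summand (with $t_{j_{1}}<t_{j_{2}}<\cdots<t_{j_{u}}$ playing the role of the cyclotomic-coset representatives there) bounds $\bigl|\langle\mu_{q},\rho,\sigma_{\xi}\rangle\backslash\mathcal{C}^{\sharp}_{j_{1},\dots,j_{u}}\bigr|$ from above by $N_{j_{1},j_{2},\dots,j_{u}}$, and summing yields the asserted bound on $\bigl|\langle\mu_{q},\rho,\sigma_{\xi}\rangle\backslash\mathcal{C}^{*}\bigr|$. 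The last sentence of the theorem is then immediate from Lemma~\ref{l2.1}, because $\langle\mu_{q},\rho,\sigma_{\xi}\rangle$ is a subgroup of ${\rm Aut}(\mathcal{C})$.

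I do not expect a genuine obstacle: the theorem is essentially an assembly of Lemma~\ref{l3.1} over all support patterns. The only points needing care are the stabilisation of each minimal ideal by the group (without which the blocks of the partition would fail to be invariant and the orbit counts would not add) and the fact that the conclusion is merely an inequality; the latter is inherited from Lemma~\ref{l3.1}, where only an upper bound for $|S'(r_{1})|$ is available, so equality in Theorem~\ref{t3.2} cannot be expected in general.
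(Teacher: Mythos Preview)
Your proposal is correct and follows essentially the same approach as the paper: decompose $\mathcal{C}^{*}$ into the disjoint $\langle\mu_{q},\rho,\sigma_{\xi}\rangle$-invariant pieces $\mathcal{C}^{\sharp}_{j_{1},\dots,j_{u}}$, sum the orbit counts, and apply Lemma~\ref{l3.1} to each piece. The paper's proof is terser and simply asserts the disjoint-union decomposition and the additivity of orbit counts without justifying the invariance of the blocks, whereas you explicitly verify that $\mu_{q}$, $\rho$, and $\sigma_{\xi}$ each stabilise every minimal ideal; this extra care is appropriate and fills a small gap left implicit in the paper.
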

	
	\begin{proof}
		Note that
		$$\mathcal{C}^{*}=\bigcup_{\substack{\{j_{1},j_{2},\cdots,j_{u}\}\subseteq \{1,2,\cdots,w\}\\ 1\leq j_{1}< j_{2}<\cdots<j_{u}\leq w}}\mathcal{C}_{j_{1},j_{2},\cdots, j_{u}}^{\sharp}$$
		is a disjoint union. Thus
		$$\big|\langle\mu_{q},\rho,\sigma_{\xi} \rangle\backslash \mathcal{C}^{*}\big|=\sum_{\substack{\{j_{1},j_{2},\cdots,j_{u}\}\subseteq \{1,2,\cdots,w\}\\ 1\leq j_{1}< j_{2}<\cdots<j_{u}\leq w}}
		\big|\langle\mu_{q},\rho,\sigma_{\xi} \rangle\backslash \mathcal{C}_{j_{1},j_{2},\cdots, j_{u}}^{\sharp}\big|.$$
		The desired result then follows from Lemma \ref{l3.1}.	
	\end{proof}
	
	\begin{Remark}\label{Re.1}{\rm
			(1) Let $\mathcal{C}$ be the $\lambda$-constacyclic code in Theorem \ref{t3.2}. In \cite[Lemma 3.6]{21}, the authors assumed ${\rm gcd}\big( \frac{q-1}{r}, r \big) = 1$ and then presented that
			\begin{align*}
				\big|\langle\rho,\sigma_{\xi} \rangle\backslash \mathcal{C}^{*}\big|=&\sum_{\substack{\{j_{1},j_{2},\cdots,j_{u}\}\subseteq \{1,2,\cdots,w\}\\ 1\leq j_{1}< j_{2}<\cdots<j_{u}\leq w}}\frac{1}{rn(q-1)}{\rm gcd}(n,1+ra_{t_{j_1}},\cdots,1+ra_{t_{j_u}})\\
				&\cdot {\rm gcd}\Big(q-1,\frac{rn}{{\rm gcd}(n,1+ra_{t_{j_1}})},
				\cdots,\frac{rn}{{\rm gcd}(n,1+ra_{t_{j_u}})}\Big)
				\!\cdot\!\prod\limits_{i=1}^{u}(q^{k_{t_{j_{i}}}}-1).
			\end{align*}
			The right-hand side of the equation above is in fact an upper bound of $\big|\langle\rho,\sigma_{\xi} \rangle\backslash \mathcal{C}^{*}\big|$ rather than the value of $\big|\langle\rho,\sigma_{\xi} \rangle\backslash \mathcal{C}^{*}\big|$. This is because in the proof of Theorem \ref{t3.2} the following two conditions were considered to be equivalent:\\
			\quad 1) $\rho^{z}({\bf c}_{t_{j_{i}}})=\xi^{h}{\bf c}_{t_{j_{i}}}$ for all $i=1,2,\cdots,u$,\\
			\quad 2) $\frac{q-1}{{\rm gcd}(rh,q-1)}\big|\frac{rn}{{\rm gcd}(n,1+r{a}_{t_{j_i}})}$ for all $i=1,2,\cdots,u$, where $1 \leq h \leq \frac{q-1}{r}$.\\
			However, condition 2) does not imply condition 1) as condition 1) is also dependent on the choice of integer $z$.

			In the proof of Lemma \ref{l3.1}, let $r_1=0$, then we can obtain
			\begin{small}
				\begin{equation}\label{E.2}
					\big|\langle\rho,\sigma_{\xi} \rangle\backslash \mathcal{C}^{*}\big|
					\!=\! \sum_{\substack{\{j_{1},j_{2},\cdots,j_{u}\}\subseteq \{1,2,\cdots,w\}\\ 1\leq j_{1}< j_{2}<\cdots<j_{u}\leq w}}
					\frac{\prod\limits_{i=1}^{u}(q^{k_{t_{j_{i}}}}\!-\!1)}{rn(q\!-\!1)}
					{\rm gcd}\big(rn,\!(1\!+\!ra_{t_{j_{1}}})(q\!-\!1),\!
					r(a_{t_{j_{2}}}\!-\!a_{t_{j_{1}}}),\!\cdots,\!r(a_{t_{j_{u}}}\!-\!a_{t_{j_{1}}})\big).
				\end{equation}
			\end{small}
			Furthermore, the above equation does not require ${\rm gcd}\big( \frac{q-1}{r}, r \big) = 1$.
			
			(2) We assert that the upper bound on the number of non-zero weights of $\mathcal{C}$ given by Theorem \ref{t3.2} is better than the upper bound $\big|\langle\rho,\sigma_{\xi} \rangle\backslash \mathcal{C}^{*}\big|$, and of course better than the upper bound given by  \cite[Lemma 3.6]{21}. To prove this, it suffices to show that
			$$N_{j_{1},j_{2},\cdots, j_{u}}\leq \frac{1}{rn(q-1)}{\rm gcd}\Big(rn,(1+a_{t_{j_{1}}})(q-1),
			r(a_{t_{j_{2}}}\!-\!a_{t_{j_{1}}}),
			\cdots,r(a_{t_{j_{u}}}\!-\!a_{t_{j_{1}}})\Big)\cdot\prod\limits_{i=1}^{u}(q^{k_{t_{j_{i}}}}-1).$$
			Indeed,
			\begin{small}
				\begin{align*}
					N_{j_{1},j_{2},\cdots, j_{u}}
					&\leq \frac{1}{mn(q-1)}\sum_{r_{1}=0}^{m-1}{\rm gcd}
					\Big(n,\frac{(1+ra_{t_{j_{1}}})II_{t_{j_{1}}}}{r\!\cdot\!{\rm gcd}(I,I_{t_{j_{1}}})},\cdots,
					\frac{(1+ra_{t_{j_{u}}})II_{t_{j_{u}}}}{r\!\cdot\!{\rm gcd}(I,I_{t_{j_{u}}})},
					\frac{(a_{t_{j_{2}}}\!-\!a_{t_{j_{1}}})I_{t_{j_{1}}}I_{t_{j_{2}}}}{{\rm gcd}(I_{t_{j_{1}}},I_{t_{j_{2}}})},\cdots, \\
					&\quad\frac{(a_{t_{j_{u}}}\!-\!a_{t_{j_{1}}})I_{t_{j_{1}}}I_{t_{j_{u}}}}{{\rm gcd}(I_{t_{j_{1}}},I_{t_{j_{u}}})},\!\cdots\!,
					\frac{(a_{t_{j_{u}}}\!-\!a_{t_{j_{u-1}}})I_{t_{j_{u-1}}}I_{t_{j_{u}}}}{{\rm gcd}(I_{t_{j_{u-1}}},I_{t_{j_{u}}})}\Big)
					\!\cdot\!{\rm gcd}(I\!,I_{t_{j_{1}}}\!,\cdots\!,I_{t_{j_{u}}})
					\cdot\! \prod_{i=1}^{u}(q^{{\rm gcd}(k_{t_{j_{i}}},r_{1})}\!-\!1)\\
					&\leq \frac{1}{rmn(q-1)}\sum_{r_{1}=0}^{m-1}
					{\rm gcd}\Big(rn,(1+ra_{t_{j_{1}}})II_{t_{j_{1}}},\cdots,
					(1+ra_{t_{j_{u}}})II_{t_{j_{u}}},r(a_{t_{j_{2}}}\!-\!a_{t_{j_{1}}})I_{t_{j_{1}}}I_{t_{j_{2}}},\cdots,\\
					&\quad r(a_{t_{j_{u}}}\!-\!a_{t_{j_{1}}})I_{t_{j_{1}}}I_{t_{j_{u}}},
					\cdots, r(a_{t_{j_{u}}}\!-\!a_{t_{j_{u-1}}})
					I_{t_{j_{u-1}}}I_{t_{j_{u}}}\Big)\cdot \prod_{i=1}^{u}(q^{{\rm gcd}(k_{t_{j_{i}}},r_{1})}-1)\\
					&\leq \frac{1}{rmn(q-1)}\sum_{r_{1}=0}^{m-1}{\rm gcd}
					\Big(rn,(1+ra_{t_{j_{1}}})I,\cdots,(1+ra_{t_{j_{u}}})I,
					r(a_{t_{j_{2}}}\!-\!a_{t_{j_{1}}}),\cdots, \\
					&\quad r(a_{t_{j_{u}}}\!-\!a_{t_{j_{1}}}),\cdots,r(a_{t_{j_{u}}}\!-\!a_{t_{j_{u-1}}})\Big)
					\!\cdot\! I_{t_{j_{1}}} I_{t_{j_{2}}}\cdots I_{t_{j_{u}}}
					\!\cdot\! \prod_{i=1}^{u}(q^{{\rm gcd}(k_{t_{j_{i}}},r_{1})}-1)\\
					&=\frac{1}{rmn(q-1)}\sum_{r_{1}=0}^{m-1}\!{\rm gcd}
					\Big(rn,(1+ra_{t_{j_{1}}})(q-1),\!\cdots\!,\!(1+ra_{t_{j_{u}}})(q-1),\!
					r(a_{t_{j_{2}}}\!-\!a_{t_{j_{1}}}),\\
					&\quad \cdots, r(a_{t_{j_{u}}}\!-\!a_{t_{j_{1}}}),\cdots,r(a_{t_{j_{u}}}\!-\!a_{t_{j_{u-1}}})\Big)
					\cdot \prod\limits_{i=1}^{u}(q^{k_{t_{j_{i}}}}-1)\\
					&=\frac{1}{rn(q-1)}
					{\rm gcd}\Big(rn,(1+ra_{t_{j_{1}}})(q-1),\cdots,(1+ra_{t_{j_{u}}})(q-1),
					r(a_{t_{j_{2}}}\!-\!a_{t_{j_{1}}}),\\
					&\quad \cdots,r(a_{t_{j_{u}}}\!-\!a_{t_{j_{1}}}),\cdots, r(a_{t_{j_{u}}}\!-\!a_{t_{j_{u-1}}})\Big)
					\cdot \prod\limits_{i=1}^{u}(q^{k_{t_{j_{i}}}}-1)\\
					&=\frac{1}{rn(q-1)}
					{\rm gcd}\Big(rn,(1+ra_{t_{j_{1}}})(q-1), r(a_{t_{j_{2}}}\!-\!a_{t_{j_{1}}}),\cdots,r(a_{t_{j_{u}}}\!-\!a_{t_{j_{1}}})\Big)
					\cdot \prod\limits_{i=1}^{u}(q^{k_{t_{j_{i}}}}-1).
				\end{align*}
			\end{small}
			This completes the proof.}
	\end{Remark}
	
	We discussed above the action of the group $\langle\mu_{q},\rho,\sigma_{\xi} \rangle$ on a general $\lambda$-constacyclic code $\mathcal{C}$, giving an upper bound on the number of $\langle\mu_{q},\rho,\sigma_{\xi} \rangle$-orbits of $\mathcal{C}^{*}=\mathcal{C}\backslash \{\bf 0\}$.
	Now let's look at some special cases. In these cases, we can explicitly give a formula for the number of $\langle\mu_{q},\rho,\sigma_{\xi} \rangle$-orbits of $\mathcal{C}^{*}$.
	
	\begin{Theorem}\label{t3.3}
		Let 		$\mathcal{C}=\mathcal{R}^{(q)}_{n,\lambda}\varepsilon_{t_{1}}\bigoplus\mathcal{R}^{(q)}_{n,\lambda}\varepsilon_{t_{2}},$
		where $0\leq t_{1} < t_{2}\leq s$,
		and the primitive idempotent $\varepsilon_{t_{i}}$ corresponds to the $q$-cyclotomic coset $\{1+ra_{t_{i}},(1+ra_{t_{i}})q,\cdots,(1+ra_{t_{i}})q^{k_{t_{i}}-1}\}$ for $i=1,2$.
		If $k_{t_{1}}\,|\,k_{t_{2}}$, then the number of $\langle\mu_{q},\rho,\sigma_{\xi} \rangle$-orbits of $\mathcal{C}^{*}=\mathcal{C}\backslash \{\bf 0\}$ is equal to
		$s_{t_{1}}+s_{t_{2}}+s_{t_{1},t_{2}},$
		where
		$$s_{t_{i}}=\frac{1}{k_{t_{i}}}\sum_{h\mid k_{t_{i}}}\varphi(\frac{k_{t_{i}}}{h}){\rm gcd}\big(q^{h}-1,\frac{q^{k_{t_{i}}}-1}{q-1},\frac{(1+ra_{t_{i}})(q^{k_{t_{i}}}-1)}{rn}\big)~~{\rm for}~i=1,2,$$
		and
		\begin{small}
			\begin{align*}
				s_{t_{1},t_{2}}&=\frac{1}{m}\sum_{h=0}^{m-1}{\rm gcd}\Big((q^{{\rm gcd}(k_{t_{1}},h)}-1)\cdot {\rm gcd}\big(q^{{\rm gcd}(k_{t_{2}},h)}-1,\frac{(q^{k_{t_{1}}}-1)(q^{{\rm gcd}(k_{t_{2}},h)}-1)}{(q-1)(q^{{\rm gcd}(k_{t_{1}},h)}-1)},\frac{(1+ra_{t_{2}})(q^{k_{t_{2}}}-1)}{rn},\\
				&\quad~\frac{(1+ra_{t_{1}})(q^{k_{t_{1}}}-1)(q^{{\rm gcd}(k_{t_{2}},h)}-1)}{rn(q^{{\rm gcd}(k_{t_{1}},h)}-1)}\big), \frac{(a_{t_{2}}-a_{t_{1}})(q^{k_{t_{1}}}-1)(q^{k_{t_{2}}}-1)}{n(q-1)}\Big).
			\end{align*}
		\end{small}
		
		\noindent In particular, the number of non-zero weights of $\mathcal{C}$ is less than or equal to the number of $\langle\mu_{q},\rho,\sigma_{\xi} \rangle$-orbits of $\mathcal{C}^{*}$, with equality if and only if for any two codewords ${\bf c}_{1},{\bf c}_{2}\in \mathcal{C}^{*}$ with the same weight, there exist integers $j_{1}$, $j_{2}$ and $ j_{3}$ such that $\mu_{q}^{j_{1}}\rho^{j_{2}}(\xi^{j_{3}}{\bf c}_{1})={\bf c}_{2}$,
		where $0 \leq j_{1} \leq m-1$, $0 \leq j_{2} \leq n-1$ and $0 \leq j_{3} \leq q-2$.
	\end{Theorem}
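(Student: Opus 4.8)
The plan is to follow exactly the strategy of Theorem \ref{t3.1} and Lemma \ref{l3.1}, but to exploit the extra hypothesis $k_{t_1}\mid k_{t_2}$ to turn the \emph{inequality} of Lemma \ref{l3.1} into an \emph{equality} for the two-component piece. Write $\mathcal{C}^{*}=(\mathcal{R}^{(q)}_{n,\lambda}\varepsilon_{t_1}\backslash\{\mathbf{0}\})\;\dot\cup\;(\mathcal{R}^{(q)}_{n,\lambda}\varepsilon_{t_2}\backslash\{\mathbf{0}\})\;\dot\cup\;\mathcal{C}^{\sharp}_{1,2}$, so that $|\langle\mu_q,\rho,\sigma_\xi\rangle\backslash\mathcal{C}^{*}|=s_{t_1}+s_{t_2}+|\langle\mu_q,\rho,\sigma_\xi\rangle\backslash\mathcal{C}^{\sharp}_{1,2}|$. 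The first two summands are given by Theorem \ref{t3.1} applied to each irreducible constituent, yielding $s_{t_1}$ and $s_{t_2}$ verbatim. The remaining statements about the upper bound and the equality condition are immediate from Lemma \ref{l2.1}. So the whole content is the computation of $s_{t_1,t_2}:=|\langle\mu_q,\rho,\sigma_\xi\rangle\backslash\mathcal{C}^{\sharp}_{1,2}|$.

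For that computation I would run Burnside's lemma with $|\langle\mu_q,\rho,\sigma_\xi\rangle|=mn(q-1)$ exactly as in Lemma \ref{l3.1} with $u=2$. Fixing $r_1=h$, the fixed-point analysis shows $|{\rm Fix}(\mu_q^{h}\rho^{r_2}\sigma_\xi^{r_3})|$ is either $0$ or $(q^{\gcd(k_{t_1},h)}-1)(q^{\gcd(k_{t_2},h)}-1)$, and that it is nonzero precisely when both membership conditions $\zeta^{-(1+ra_{t_i})q^{h}r_2}\in\xi^{r_3}\langle\theta_{t_i}^{q^h-1}\rangle$ ($i=1,2$) hold for a common $r_3$. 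The key point where $k_{t_1}\mid k_{t_2}$ enters: since $\gcd(k_{t_1},h)\mid\gcd(k_{t_2},h)$, we have $q^{\gcd(k_{t_1},h)}-1\mid q^{\gcd(k_{t_2},h)}-1$, hence $\langle\theta_{t_1}^{q^h-1}\rangle\subseteq\langle\theta_{t_2}^{q^h-1}\rangle$ inside $\mathbb{F}_{q^m}^{*}$ (both are cyclic, and orders divide). This containment makes the two conditions \emph{nested} rather than independent, and I expect that it lets one solve the pair of congruences with no loss — the $r_2$ that satisfy the conjunction form a full arithmetic progression, not merely a subset of one, so $|S'(h)|$ equals the relevant $\gcd$ on the nose. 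Concretely I would show $|S'(h)|=\gcd\!\big(d_1(h),d_2(h)\big)$ with equality, where $d_1$ collects the two "$\langle\xi\rangle\langle\theta_{t_i}^{q^h-1}\rangle$'' divisibility conditions and $d_2$ the single "$\langle\theta_{t_1}^{q^h-1}\rangle\langle\theta_{t_2}^{q^h-1}\rangle = \langle\theta_{t_2}^{q^h-1}\rangle$'' condition on the difference $a_{t_2}-a_{t_1}$; the containment collapses the cross-term so only one difference-condition survives. Then $|R'(h,r_2)|=\gcd(I,I_{t_1},I_{t_2})=|\langle\xi\rangle\cap\langle\theta_{t_2}^{q^h-1}\rangle|$ (again using $\langle\theta_{t_1}^{q^h-1}\rangle\subseteq\langle\theta_{t_2}^{q^h-1}\rangle$), exactly as in the proof of Theorem \ref{t3.1}.

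Assembling: $s_{t_1,t_2}=\frac{1}{mn(q-1)}\sum_{h=0}^{m-1}|S'(h)|\cdot|R'(h,r_2)|\cdot(q^{\gcd(k_{t_1},h)}-1)(q^{\gcd(k_{t_2},h)}-1)$, and then I would push the factors $n(q-1)$ and the $(q^{\gcd(k_{t_i},h)}-1)$'s inside the $\gcd$'s — the same bookkeeping as at the end of the proof of Theorem \ref{t3.1}, rewriting $I_{t_i}=\frac{q^{k_{t_i}}-1}{q^{\gcd(k_{t_i},h)}-1}$ and clearing denominators — until the expression matches the stated closed form for $s_{t_1,t_2}$; the nested structure is what lets the doubly-indexed $\gcd$ in Lemma \ref{l3.1} be rewritten as the singly-nested $\gcd$ displayed in the theorem. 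The main obstacle is precisely this equality $|S'(h)|=\gcd(d_1(h),d_2(h))$: in Lemma \ref{l3.1} only "$\leq$'' is claimed because the simultaneous membership for \emph{general} components need not be achievable by a single $r_2$, and one must check carefully that the divisibility $k_{t_1}\mid k_{t_2}$ removes exactly this obstruction for $u=2$ — i.e. that the solution set of the conjunction is nonempty and is a coset of a subgroup of $\mathbb{Z}_n$ of the predicted size. The rest is the (lengthy but routine) $\gcd$-algebra already rehearsed in Theorem \ref{t3.1}.
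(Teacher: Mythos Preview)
Your plan matches the paper's proof: the same three-piece decomposition with Theorem \ref{t3.1} handling the two irreducible pieces, and the same key observation that $k_{t_1}\mid k_{t_2}$ forces $\langle\theta_{t_1}^{q^{r_1}-1}\rangle\subseteq\langle\theta_{t_2}^{q^{r_1}-1}\rangle$, which upgrades the inequality $|S'(r_1)|\le\gcd(d_1,d_2)$ of Lemma \ref{l3.1} to an equality and then feeds into the same $\gcd$ manipulations. One small slip to fix: $|R'(h,r_2)|=\gcd(I,I_{t_1},I_{t_2})=\gcd(I,I_{t_1})$ equals $|\langle\xi\rangle\cap\langle\theta_{t_1}^{q^h-1}\rangle|$, not $|\langle\xi\rangle\cap\langle\theta_{t_2}^{q^h-1}\rangle|$ (the intersection is governed by the \emph{smaller} subgroup), and this is precisely the factor the paper carries through its computation.
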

	\begin{proof}
		Let $s_{t_{1}}$, $s_{t_{2}}$ and $s_{t_{1},t_{2}}$ denote the number of $\langle\mu_{q},\rho,\sigma_{\xi} \rangle$-orbits of $\mathcal{R}^{(q)}_{n,\lambda}\varepsilon_{t_{1}}\backslash \{\bf 0\}$, $\mathcal{R}^{(q)}_{n,\lambda}\varepsilon_{t_{2}}\backslash \{\bf 0\}$ and $\mathcal{R}^{(q)}_{n,\lambda}\varepsilon_{t_{1}}\backslash \{{\bf 0}\}\bigoplus\mathcal{R}^{(q)}_{n,\lambda}\varepsilon_{t_{2}}\backslash \{{\bf 0}\}$, respectively.
		Then $\big|\langle\mu_{q},\rho,\sigma_{\xi} \rangle\backslash \mathcal{C}^{*}\big|=s_{t_{1}}+s_{t_{2}}+s_{t_{1},t_{2}}.$
		It follows from Theorem \ref{t3.1} that
		$$s_{t_{i}}=\frac{1}{k_{t_{i}}}\sum_{h\mid k_{t_{i}}}\varphi(\frac{k_{t_{i}}}{h}){\rm gcd}\big(q^{h}-1,\frac{q^{k_{t_{i}}}-1}{q-1},\frac{(1+ra_{t_{i}})(q^{k_{t_{i}}}-1)}{rn}\big)~~{\rm for}~i=1,2.$$
		In the proof of Theorem \ref{t3.2}, let $u=2$, $j_1=1$ and $j_2=2$.
		Since $k_{t_{1}}\,|\,k_{t_{2}}$, we have $(q^{k_{t_{1}}}-1)\,|\,(q^{k_{t_{2}}}-1)$, then $\mathbb{F}_{q^{k_{t_{1}}}}^{*}\leq \mathbb{F}_{q^{k_{t_{2}}}}^{*}$, that is, $\mathbb{F}_{q^{k_{t_{1}}}}^{*}$ is a subgroup of $\mathbb{F}_{q^{k_{t_{2}}}}^{*}$.
		Let $\mathbb{F}_{q^{k_{t_{1}}}}^{*}=\langle  \theta_{t_{1}} \rangle$ and $\mathbb{F}_{q^{k_{t_{2}}}}^{*}=\langle  \theta_{t_{2}} \rangle$, then $\theta_{t_{1}}=\theta_{t_{2}}^{i}$ for some non-negative integer $i$.
		Thus $\langle \theta_{t_{1}}^{q^{r_{1}}-1}  \rangle=\langle \theta_{t_{2}}^{i(q^{r_{1}}-1)}  \rangle \leq \langle \theta_{t_{2}}^{q^{r_{1}}-1}  \rangle$.
		Then one can easily check that
		\begin{align*}
			r_{2}\in S'(r_{1})
			~&\Leftrightarrow~\zeta^{-(1+ra_{t_{i}})q^{r_{1}}r_{2}}\in \langle\xi\rangle\langle \theta_{t_{i}}^{q^{r_{1}}-1}\rangle~{\rm for}~i=1,2 ~{\rm and}~\zeta^{-r(a_{t_{2}}-a_{t_{1}})q^{r_{1}}r_{2}}\in \langle \theta_{t_{2}}^{q^{r_{1}}-1}  \rangle\\
			&\Leftrightarrow~\zeta^{-(1+ra_{t_{1}})q^{r_{1}}r_{2}}\in \langle\xi\rangle\langle \theta_{t_{1}}^{q^{r_{1}}-1}\rangle~{\rm and}~\zeta^{-r(a_{t_{2}}-a_{t_{1}})q^{r_{1}}r_{2}}\in \langle \theta_{t_{2}}^{q^{r_{1}}-1}  \rangle.
		\end{align*}
		We see from the proof of Lemma \ref{l3.1} that
		\begin{small}
			\begin{align*}
				s_{t_{1},t_{2}}
				=&\frac{1}{mn(q-1)}\sum_{r_{1}=0}^{m-1}|S'(r_{1})|\!\cdot\!|R'(r_{1},r_{2})|
				\cdot\!\prod_{i=1}^{2}(q^{{\rm gcd}(k_{t_{i}},r_{1})}-1)\\
				=&\frac{1}{mn(q-1)}\sum_{r_{1}=0}^{m-1}{\rm gcd}\Big(n,
				(a_{t_{2}}-a_{t_{1}})|\langle \theta_{t_{2}}^{q^{r_{1}}-1}\rangle|,
				\frac{(1+ra_{t_{1}})(q-1)|\langle \theta_{t_{1}}^{q^{r_{1}}-1}\rangle|}{r\!\cdot\!{\rm gcd}(q-1,|\langle \theta_{t_{1}}^{q^{r_{1}}-1}\rangle|)}\Big)\cdot{\rm gcd}\big(q-1,|\langle \theta_{t_{1}}^{q^{r_{1}}-1}\rangle|\big)\\
				&\cdot\prod_{i=1}^{2}\big(q^{{\rm gcd}(k_{t_{i}},r_{1})}-1\big)\\
				=&\frac{1}{mn(q-1)}\sum_{r_{1}=0}^{m-1}
				{\rm gcd}\Big(n(q-1),n|\langle\theta_{t_{1}}^{q^{r_{1}}-1}\rangle|,
				(a_{t_{2}}-a_{t_{1}})(q-1)|\langle \theta_{t_{2}}^{q^{r_{1}}-1}\rangle|,
				\frac{(1+ra_{t_{1}})(q-1)|\langle \theta_{t_{1}}^{q^{r_{1}}-1}\rangle|}{r},\\
				&(a_{t_{2}}-a_{t_{1}})|\langle \theta_{t_{1}}^{q^{r_{1}}-1}\rangle||\langle \theta_{t_{2}}^{q^{r_{1}}-1}\rangle|\Big) \cdot\prod_{i=1}^{2}\big(q^{{\rm gcd}(k_{t_{i}},r_{1})}-1\big)\\
				=&\frac{1}{mn(q-1)}\sum_{r_{1}=0}^{m-1}{\rm gcd}\Big(n(q-1),n|\langle \theta_{t_{1}}^{q^{r_{1}}-1}\rangle|,\frac{(1+ra_{t_{1}})(q-1)|\langle \theta_{t_{1}}^{q^{r_{1}}-1}\rangle|}{r},\frac{(1+ra_{t_{2}})(q-1)|\langle \theta_{t_{2}}^{q^{r_{1}}-1}\rangle|}{r},\\
				&(a_{t_{2}}-a_{t_{1}})|\langle \theta_{t_{1}}^{q^{r_{1}}-1}\rangle||\langle \theta_{t_{2}}^{q^{r_{1}}-1}\rangle|\Big)
				\cdot \prod_{i=1}^{2}\big(q^{{\rm gcd}(k_{t_{i}},r_{1})}-1\big)\\
				=&\frac{1}{m}\sum_{r_{1}=0}^{m-1}{\rm gcd}\Big(\prod_{i=1}^{2}(q^{{\rm gcd}(k_{t_{i}},r_{1})}-1),\frac{(q^{k_{t_{1}}}-1)(q^{{\rm gcd}(k_{t_{2}},r_{1})}-1)}{q-1},\frac{(1+ra_{t_{1}})(q^{k_{t_{1}}}-1)(q^{{\rm gcd}(k_{t_{2}},r_{1})}-1)}{rn},\\
				&\frac{(1+ra_{t_{2}})(q^{k_{t_{2}}}-1)(q^{{\rm gcd}(k_{t_{1}},r_{1})}-1)}{rn}, \frac{(a_{t_{2}}-a_{t_{1}})(q^{k_{t_{1}}}-1)(q^{k_{t_{2}}}-1)}{n(q-1)}\Big)\\
				=&\frac{1}{m}\sum_{r_{1}=0}^{m-1}{\rm gcd}\Big((q^{{\rm gcd}(k_{t_{1}},r_{1})}-1)\!\cdot\!{\rm gcd}\big(q^{{\rm gcd}(k_{t_{2}},r_{1})}-1,\frac{(q^{k_{t_{1}}}-1)(q^{{\rm gcd}(k_{t_{2}},r_{1})}-1)}{(q-1)(q^{{\rm gcd}(k_{t_{1}},r_{1})}-1)},\frac{(1+ra_{t_{2}})(q^{k_{t_{2}}}-1)}{rn},\\
				&\frac{(1+ra_{t_{1}})(q^{k_{t_{1}}}-1)(q^{{\rm gcd}(k_{t_{2}},r_{1})}-1)}{rn(q^{{\rm gcd}(k_{t_{1}},r_{1})}-1)}\big), \frac{(a_{t_{2}}-a_{t_{1}})(q^{k_{t_{1}}}-1)(q^{k_{t_{2}}}-1)}{n(q-1)}\Big).
			\end{align*}
		\end{small}
		
		\noindent We are done.
	\end{proof}

	The following two conclusions are direct corollaries of Theorem \ref{t3.3}.
	
	\begin{Corollary}\label{c3.3}
		Let $\mathcal{C}=\mathcal{R}^{(q)}_{n,\lambda}\varepsilon_{t_{1}}\bigoplus\mathcal{R}^{(q)}_{n,\lambda}\varepsilon_{t_{2}},$ where $0\leq t_{1} < t_{2}\leq s$, and the primitive idempotent $\varepsilon_{t_{i}}$ corresponds to the $q$-cyclotomic coset $\{1+ra_{t_{i}},(1+ra_{t_{i}})q,\cdots,(1+ra_{t_{i}})q^{k_{t_{i}}-1}\}$ for $i=1,2$. If $k_{t_{1}}=1$ and $k_{t_{2}}=k$, then the number of $\langle\mu_{q},\rho,\sigma_{\xi} \rangle$-orbits of $\mathcal{C}^{*}=\mathcal{C}\backslash \{\bf 0\}$ is equal to
		\begin{small}
			$$1+\frac{1}{k}\sum_{h\mid k}\varphi\big(\frac{k}{h}\big)\left(
			{\rm gcd}\Big(q^{h}-1,\frac{q^{k}-1}{q-1},\frac{(1+ra_{t_{2}})(q^{k}-1)}{rn}\Big)
			+{\rm gcd}\Big(q^{h}-1,\frac{(a_{t_2}-a_{t_1})(q^{k}-1)}{n}\Big)\right).$$
		\end{small}
		
		\noindent In particular, the number of non-zero weights of $\mathcal{C}$ is less than or equal to the number of $\langle\mu_{q},\rho,\sigma_{\xi} \rangle$-orbits of $\mathcal{C}^{*}$, with equality if and only if for any two codewords ${\bf c}_{1},{\bf c}_{2}\in \mathcal{C}^{*}$ with the same weight, there exist integers $j_{1}$, $j_{2}$ and $ j_{3}$ such that $\mu_{q}^{j_{1}}\rho^{j_{2}}(\xi^{j_{3}}{\bf c}_{1})={\bf c}_{2}$,
		where $0 \leq j_{1} \leq m-1$, $0 \leq j_{2} \leq n-1$ and $0 \leq j_{3} \leq q-2$.
		
	\end{Corollary}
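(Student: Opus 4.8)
The plan is to apply Theorem \ref{t3.3} with $k_{t_{1}}=1$ and $k_{t_{2}}=k$; this is legitimate since $k_{t_{1}}=1$ trivially divides $k_{t_{2}}=k$. Theorem \ref{t3.3} then writes $\big|\langle\mu_{q},\rho,\sigma_{\xi}\rangle\backslash\mathcal{C}^{*}\big|=s_{t_{1}}+s_{t_{2}}+s_{t_{1},t_{2}}$, and it remains only to identify the three summands. For $s_{t_{1}}$: with $k_{t_{1}}=1$ the sole divisor $h$ of $k_{t_{1}}$ is $h=1$, and $\tfrac{q^{1}-1}{q-1}=1$ sits inside the $\rm gcd$, so $s_{t_{1}}=\varphi(1)\,{\rm gcd}\big(q-1,1,\tfrac{(1+ra_{t_{1}})(q-1)}{rn}\big)=1$, which accounts for the leading $1$ in the statement. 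For $s_{t_{2}}$: substituting $k_{t_{2}}=k$ into the displayed formula for $s_{t_{i}}$ in Theorem \ref{t3.3} gives verbatim the first $\rm gcd$-sum, $\tfrac{1}{k}\sum_{h\mid k}\varphi(\tfrac{k}{h}){\rm gcd}\big(q^{h}-1,\tfrac{q^{k}-1}{q-1},\tfrac{(1+ra_{t_{2}})(q^{k}-1)}{rn}\big)$. So the whole task reduces to showing that the cross term $s_{t_{1},t_{2}}$ equals the second $\rm gcd$-sum, $\tfrac{1}{k}\sum_{h\mid k}\varphi(\tfrac{k}{h}){\rm gcd}\big(q^{h}-1,\tfrac{(a_{t_{2}}-a_{t_{1}})(q^{k}-1)}{n}\big)$.

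Rather than manipulating the closed form of $s_{t_{1},t_{2}}$ printed in Theorem \ref{t3.3}, I would return to the proof of Lemma \ref{l3.1} (as used inside Theorem \ref{t3.3}) and specialise the sets $S'(r_{1})$ and $R'(r_{1},r_{2})$ to $u=2$, $k_{t_{1}}=1$. The crucial observation is that $\mathbb{F}_{q^{k_{t_{1}}}}^{*}=\mathbb{F}_{q}^{*}=\langle\xi\rangle$, so $\langle\xi\rangle\langle\theta_{t_{1}}^{q^{r_{1}}-1}\rangle=\mathbb{F}_{q}^{*}$, and moreover $\zeta^{-(1+ra_{t_{1}})q^{r_{1}}r_{2}}\in\mathbb{F}_{q}^{*}$ for \emph{every} $r_{2}$: indeed $k_{t_{1}}=1$ means precisely $rn\mid(1+ra_{t_{1}})(q-1)$, whence ${\rm ord}\big(\zeta^{-(1+ra_{t_{1}})q^{r_{1}}r_{2}}\big)=\tfrac{rn}{{\rm gcd}(rn,(1+ra_{t_{1}})r_{2})}$ divides $\tfrac{rn}{{\rm gcd}(rn,1+ra_{t_{1}})}$, which divides $q-1$. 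Consequently the condition cutting out $S'(r_{1})$ collapses to the single requirement $\zeta^{-r(a_{t_{2}}-a_{t_{1}})q^{r_{1}}r_{2}}\in\langle\theta_{t_{2}}^{q^{r_{1}}-1}\rangle$, and the argument of Theorem \ref{t3.1} (with $\zeta^{-1}$ replaced by $\zeta^{-r(a_{t_{2}}-a_{t_{1}})}$) yields $|S'(r_{1})|={\rm gcd}\big(n,\tfrac{(a_{t_{2}}-a_{t_{1}})(q^{k}-1)}{q^{{\rm gcd}(k,r_{1})}-1}\big)$; likewise $|R'(r_{1},r_{2})|={\rm gcd}\big(q-1,\,1,\,\tfrac{q^{k}-1}{q^{{\rm gcd}(k,r_{1})}-1}\big)=1$ whenever $r_{2}\in S'(r_{1})$, since $I_{t_{1}}=\tfrac{q^{1}-1}{q^{{\rm gcd}(1,r_{1})}-1}=1$.

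With these counts and $\prod_{i=1}^{2}(q^{{\rm gcd}(k_{t_{i}},r_{1})}-1)=(q-1)(q^{{\rm gcd}(k,r_{1})}-1)$, substituting into $s_{t_{1},t_{2}}=\tfrac{1}{mn(q-1)}\sum_{r_{1}=0}^{m-1}|S'(r_{1})|\,|R'(r_{1},r_{2})|\prod_{i=1}^{2}(q^{{\rm gcd}(k_{t_{i}},r_{1})}-1)$ gives, after the cancellation of $q-1$ and the step ${\rm gcd}\big(n(q^{{\rm gcd}(k,r_{1})}-1),(a_{t_{2}}-a_{t_{1}})(q^{k}-1)\big)=n\,{\rm gcd}\big(q^{{\rm gcd}(k,r_{1})}-1,\tfrac{(a_{t_{2}}-a_{t_{1}})(q^{k}-1)}{n}\big)$ — valid because $n\mid(a_{t_{2}}-a_{t_{1}})(q^{k}-1)$, which follows from $rn$ dividing both $(1+ra_{t_{2}})(q^{k}-1)$ and $(1+ra_{t_{1}})(q^{k}-1)$ — the expression $s_{t_{1},t_{2}}=\tfrac{1}{m}\sum_{r_{1}=0}^{m-1}{\rm gcd}\big(q^{{\rm gcd}(k,r_{1})}-1,\tfrac{(a_{t_{2}}-a_{t_{1}})(q^{k}-1)}{n}\big)$. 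Finally I would invoke the elementary regrouping $\tfrac{1}{m}\sum_{r_{1}=0}^{m-1}f\big({\rm gcd}(k,r_{1})\big)=\tfrac{1}{k}\sum_{h\mid k}\varphi(\tfrac{k}{h})f(h)$ — exactly the one used at the end of the proof of Theorem \ref{t3.1}, valid since $k\mid m$ (every coset size divides $m={\rm ord}_{rn}(q)$) and each divisor $h$ of $k$ occurs as ${\rm gcd}(k,r_{1})$ exactly $\varphi(k/h)$ times in one period — which turns $s_{t_{1},t_{2}}$ into the second $\rm gcd$-sum. Adding the three pieces yields the claimed formula, and the ``in particular'' clause is immediate from Lemma \ref{l2.1}.

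The step I expect to be the main obstacle is the reduction of $S'(r_{1})$: one has to be certain that the $t_{1}$-component imposes \emph{no} constraint on $r_{2}$ — i.e. the always-membership $\zeta^{-(1+ra_{t_{1}})q^{r_{1}}r_{2}}\in\mathbb{F}_{q}^{*}$ — and that $R'(r_{1},r_{2})$ contributes a factor of exactly $1$ rather than something larger. If one instead tries to derive the corollary purely formally from the printed closed form of $s_{t_{1},t_{2}}$ in Theorem \ref{t3.3}, the same difficulty resurfaces as the need to show that the apparently extra term involving $1+ra_{t_{2}}$ is absorbed into ${\rm gcd}\big(q^{{\rm gcd}(k,r_{1})}-1,\tfrac{(a_{t_{2}}-a_{t_{1}})(q^{k}-1)}{n}\big)$, which leans on both $rn\mid(1+ra_{t_{1}})(q-1)$ and the identity $1+ra_{t_{2}}=(1+ra_{t_{1}})+r(a_{t_{2}}-a_{t_{1}})$ and is noticeably more delicate than the route through $S'(r_{1})$ sketched above.
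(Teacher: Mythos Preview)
Your proof is correct and follows the same overall architecture as the paper: invoke Theorem~\ref{t3.3}, compute $s_{t_1}=1$ and $s_{t_2}$ verbatim, then simplify $s_{t_1,t_2}$. The difference lies in how $s_{t_1,t_2}$ is handled. The paper works purely with the closed form displayed in Theorem~\ref{t3.3}: it substitutes $k_{t_1}=1$, $k_{t_2}=k$, then performs a chain of $\gcd$ manipulations, using precisely the two facts you flag at the end --- $rn\mid(1+ra_{t_1})(q-1)$ to drop the term $\tfrac{(1+ra_{t_1})(q-1)(q^{\gcd(k,r_1)}-1)}{rn}$ as a multiple of $q^{\gcd(k,r_1)}-1$, and the identity $1+ra_{t_2}=(1+ra_{t_1})+r(a_{t_2}-a_{t_1})$ to swap $\tfrac{(1+ra_{t_2})(q-1)(q^k-1)}{rn}$ for $\tfrac{(a_{t_2}-a_{t_1})(q-1)(q^k-1)}{n}$ --- arriving at $\tfrac{1}{m}\sum_{r_1}\gcd\big(q^{\gcd(k,r_1)}-1,\tfrac{(a_{t_2}-a_{t_1})(q^k-1)}{n}\big)$.

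You instead drop down to the sets $S'(r_1)$, $R'(r_1,r_2)$ from the proof of Theorem~\ref{t3.3} and exploit $k_{t_1}=1$ structurally: the $t_1$-constraint on $S'(r_1)$ becomes vacuous and $I_{t_1}=1$ forces $|R'(r_1,r_2)|=1$, so the count reduces immediately to a single-variable $\gcd$. This is a genuine alternative route; it uses the same underlying divisibility $rn\mid(1+ra_{t_1})(q-1)$ but applies it one level earlier, avoiding the somewhat opaque $\gcd$ algebra. Your anticipation in the final paragraph that the closed-form route ``is noticeably more delicate'' is exactly what the paper does, and your assessment is fair --- your approach is cleaner, while the paper's has the minor advantage of not reopening the proof of Theorem~\ref{t3.3}.
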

	
	\begin{proof}
		According to Theorem \ref{t3.3} and its proof, we have
		$\big|\langle\mu_{q},\rho,\sigma_{\xi} \rangle\backslash \mathcal{C}^{*}\big|=s_{t_{1}}+s_{t_{2}}+s_{t_{1},t_{2}},$
		where
		$$
		s_{t_{1}}={\rm gcd}\Big(q-1,1,\frac{(1+ra_{t_{1}})(q-1)}{rn}\Big)=1,\quad
		s_{t_{2}}=\frac{1}{k}\sum_{h\mid k}\varphi\big(\frac{k}{h}\big){\rm gcd}\Big(q^{h}-1,\frac{q^{k}-1}{q-1},\frac{(1+ra_{t_{2}})(q^{k}-1)}{rn}\Big)
		$$
		and
		\begin{small}
			\begin{align*}
				s_{t_{1},t_{2}}
				=&\frac{1}{m}\sum_{r_{1}=0}^{m-1}{\rm gcd}\Big((q\!-\!1)\!\cdot\!{\rm gcd}\big(q^{{\rm gcd}(k,r_{1})}\!-\!1,\frac{(q^{{\rm gcd}(k,r_{1})}\!-\!1)}{q\!-\!1},\frac{(1\!+\!ra_{t_{1}})(q^{{\rm gcd}(k,r_{1})}\!-\!1)}{rn},\frac{(1\!+\!ra_{t_{2}})(q^{k}\!-\!1)}{rn}\big),\\
				&\frac{(a_{t_2}\!-\!a_{t_1})(q^{k}\!-\!1)}{n}\Big)\\
				=&\frac{1}{m}\sum_{r_{1}=0}^{m-1}{\rm gcd}\Big(q^{{\rm gcd}(k,r_{1})}\!-\!1,\frac{(1\!+\!ra_{t_{1}})(q\!-\!1)(q^{{\rm gcd}(k,r_{1})}\!-\!1)}{rn},\frac{(1\!+\!ra_{t_{2}})(q\!-\!1)(q^{k}\!-\!1)}{rn},
				\frac{(a_{t_2}\!-\!a_{t_1})(q^{k}\!-\!1)}{n}\Big)\\
				=&\frac{1}{m}\sum_{r_{1}=0}^{m-1}{\rm gcd}\Big(q^{{\rm gcd}(k,r_{1})}\!-1,\frac{(1\!+\!ra_{t_{1}})(q\!-\!1)(q^{{\rm gcd}(k,r_{1})}\!-\!1)}{rn},\frac{(a_{t_2}\!-\!a_{t_1})(q\!-\!1)(q^{k}\!-\!1)}{n},
				\frac{(a_{t_2}\!-\!a_{t_1})(q^{k}\!-\!1)}{n}\Big)\\
				=&\frac{1}{m}\sum_{r_{1}=0}^{m-1}{\rm gcd}\Big(q^{{\rm gcd}(k,r_{1})}-1,\frac{(a_{t_2}-a_{t_1})(q^{k}-1)}{n}\Big)\\
				=&\frac{1}{k}\sum_{h|k}\varphi\big(\frac{k}{h}\big){\rm gcd}\Big(q^{h}-1,\frac{(a_{t_2}-a_{t_1})(q^{k}-1)}{n}\Big).
			\end{align*}
		\end{small}
		We have completed the proof of the corrollary.
	\end{proof}

	We present an example to illustrate that the upper bound in Corollary \ref{c3.3} improves the upper bound $\big|\langle\rho,\sigma_{\xi} \rangle\backslash \mathcal{C}^{*}\big|$ as stated in Equation (\ref{E.2}).
	
	\begin{Example}\label{e3.4}{\rm
			Take $q=5$, $n=39$ and $\lambda=-1$.
			Let $\ell$ be the number of non-zero weights of the negacyclic code
			$\mathcal{C}=\mathcal{R}_{n,\lambda}^{(q)}\varepsilon_{0}\bigoplus\mathcal{R}_{n,\lambda}^{(q)}\varepsilon_{9}$,
			where the primitive idempotents $\varepsilon_{0}$ and $\varepsilon_{9}$ correspond to the $5$-cyclotomic 			cosets $\Gamma_{0}=\{1, 5, 25, 47\}$ and $\Gamma_{9}=\{39\}$, respectively.
			\!By Equation (\ref{E.2}), we have $\ell\!\leq\!21$. By Corollary \ref{c3.3}, we have $\ell\!\leq\!13$.
			After using Magma \cite{4}, the weight enumerator of $\mathcal{C}$ is
			$1+156x^{25}+468x^{28}+156x^{30}+780x^{31}+780x^{32}+312x^{33}+156x^{34}+312x^{35}+4x^{39}$,
			which implies that the exact value of $\ell$ is 9.}
	\end{Example}
	
	\begin{Corollary}\label{c3.4}
		Let $\mathcal{C}=\mathcal{R}^{(q)}_{n,\lambda}\varepsilon_{t_{1}}\bigoplus\mathcal{R}^{(q)}_{n,\lambda}\varepsilon_{t_{2}},$
		where $0\leq t_{1} < t_{2}\leq s$, and the primitive idempotent $\varepsilon_{t_{i}}$ corresponds to the $q$-cyclotomic coset $\{1+ra_{t_{i}},(1+ra_{t_{i}})q,\cdots,(1+ra_{t_{i}})q^{k_{t_{i}}-1}\}$ for $i=1,2$. If $k_{t_{1}}=k_{t_{2}}=k$, then the number of $\langle\mu_{q},\rho,\sigma_{\xi} \rangle$-orbits of $\mathcal{C}^{*}=\mathcal{C}\backslash \{\bf 0\}$ is equal to
		\begin{small}
			\begin{align*}
				&\frac{1}{k}\sum_{h\mid k}\varphi\big(\frac{k}{h}\big)\left(\!
				{\rm gcd}\big(q^{h}-1,\frac{q^{k}-1}{q-1},\frac{(1+ra_{t_{1}})(q^{k}-1)}{rn}\big)+
				{\rm gcd}\big(q^{h}-1,\frac{q^{k}-1}{q-1},\frac{(1+ra_{t_{2}})(q^{k}-1)}{rn}\big)\right. \\
				&\left.\!+{\rm gcd}\big(\!(q^{h}-1)\!\cdot\!{\rm gcd}\big(q^{h}-1,\!\frac{q^{k}-1}{q-1},\!\frac{(1+ra_{t_{1}})(q^{k}-1)}{rn},\!\frac{(1+ra_{t_{2}})(q^{k}-1)}{rn}\big),\!
				\frac{(a_{t_2}-a_{t_1})(q^{k}-1)^{2}}{n(q-1)}\!\big)\!\right).
			\end{align*}
		\end{small}
		
		\noindent In particular, the number of non-zero weights of $\mathcal{C}$ is less than or equal to the number of $\langle\mu_{q},\rho,\sigma_{\xi} \rangle$-orbits of $\mathcal{C}^{*}$, with equality if and only if for any two codewords ${\bf c}_{1},{\bf c}_{2}\in \mathcal{C}^{*}$ with the same weight, there exist integers $j_{1}$, $j_{2}$ and $ j_{3}$ such that $\mu_{q}^{j_{1}}\rho^{j_{2}}(\xi^{j_{3}}{\bf c}_{1})={\bf c}_{2}$,
		where $0 \leq j_{1} \leq m-1$, $0 \leq j_{2} \leq n-1$ and $0 \leq j_{3} \leq q-2$.
	\end{Corollary}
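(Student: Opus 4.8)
The plan is to specialize Theorem \ref{t3.3} to the case $k_{t_1}=k_{t_2}=k$. Since $k_{t_1}\mid k_{t_2}$ holds trivially here, Theorem \ref{t3.3} applies and gives $\big|\langle\mu_q,\rho,\sigma_\xi\rangle\backslash\mathcal{C}^*\big|=s_{t_1}+s_{t_2}+s_{t_1,t_2}$. The terms $s_{t_1}$ and $s_{t_2}$ are already in exactly the shape of the first two summands in the asserted formula, so only $s_{t_1,t_2}$ requires simplification.

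First I would substitute $k_{t_1}=k_{t_2}=k$ into the expression for $s_{t_1,t_2}$ from Theorem \ref{t3.3}. Writing $g=\gcd(k,h)$, one has $q^{\gcd(k_{t_1},h)}-1=q^{\gcd(k_{t_2},h)}-1=q^{g}-1$, which collapses several factors: $\frac{(q^{k_{t_1}}-1)(q^{\gcd(k_{t_2},h)}-1)}{(q-1)(q^{\gcd(k_{t_1},h)}-1)}$ reduces to $\frac{q^{k}-1}{q-1}$, and $\frac{(1+ra_{t_1})(q^{k_{t_1}}-1)(q^{\gcd(k_{t_2},h)}-1)}{rn(q^{\gcd(k_{t_1},h)}-1)}$ reduces to $\frac{(1+ra_{t_1})(q^{k}-1)}{rn}$. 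Hence the inner gcd in $s_{t_1,t_2}$ becomes $\gcd\big(q^{g}-1,\frac{q^{k}-1}{q-1},\frac{(1+ra_{t_1})(q^{k}-1)}{rn},\frac{(1+ra_{t_2})(q^{k}-1)}{rn}\big)$, while the final argument becomes $\frac{(a_{t_2}-a_{t_1})(q^{k}-1)^{2}}{n(q-1)}$, matching the third summand's interior in the corollary.

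Next I would convert the averaging $\frac{1}{m}\sum_{h=0}^{m-1}(\cdots)$ into a divisor sum over $k$. Since the summand depends on $h$ only through $g=\gcd(k,h)$ and $k\mid m$, the same grouping argument used at the end of the proof of Theorem \ref{t3.1} gives $\frac{1}{m}\sum_{h=0}^{m-1}f(\gcd(k,h))=\frac{1}{k}\sum_{h=0}^{k-1}f(\gcd(k,h))=\frac{1}{k}\sum_{h\mid k}\varphi\big(\frac{k}{h}\big)f(h)$. Applying this to $s_{t_1,t_2}$ produces the third summand of the asserted formula, and collecting the three $\frac{1}{k}\sum_{h\mid k}\varphi(k/h)(\cdots)$ pieces into a single sum yields the stated count of $\langle\mu_q,\rho,\sigma_\xi\rangle$-orbits of $\mathcal{C}^*$. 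The bound on the number of non-zero weights of $\mathcal{C}$ and the equality criterion are then immediate from Lemma \ref{l2.1}, exactly as in Theorem \ref{t3.3}. There is no genuine obstacle; the only point demanding care is tracking the cancellations in the gcd expressions when $k_{t_1}=k_{t_2}$ is imposed, and applying the divisor-sum reindexing consistently to all three pieces.
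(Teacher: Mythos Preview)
Your proposal is correct and follows essentially the same route as the paper's proof: specialize Theorem \ref{t3.3} with $k_{t_1}=k_{t_2}=k$, simplify the inner gcd in $s_{t_1,t_2}$ using the resulting cancellations, and then reindex $\frac{1}{m}\sum_{h=0}^{m-1}$ as $\frac{1}{k}\sum_{h\mid k}\varphi(k/h)$ exactly as in Theorem \ref{t3.1}. The equality criterion via Lemma \ref{l2.1} is likewise handled identically.
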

	\begin{proof}
		According to Theorem \ref{t3.3} and its proof, we have
		$\big|\langle\mu_{q},\rho,\sigma_{\xi} \rangle\backslash \mathcal{C}^{*}\big|=s_{t_{1}}+s_{t_{2}}+s_{t_{1},t_{2}},$
		where
		$$s_{t_{i}}=\frac{1}{k}\sum_{h\mid k}\varphi\big(\frac{k}{h}\big){\rm gcd}\big(q^{h}-1,\frac{q^{k}-1}{q-1},\frac{(1+ra_{t_{i}})(q^{k}-1)}{rn}\big)~~{\rm for}~i=1,2,$$
		and
		\begin{small}
			\begin{align*}
				s_{t_{1},t_{2}}&=\frac{1}{m}\sum_{r_{1}=0}^{m-1}
				{\rm gcd}\Big((q^{{\rm gcd}(k,r_{1})}-1)
				\!\cdot\!
				{\rm gcd}\big(q^{{\rm gcd}(k,r_{1})}-1,\frac{q^{k}-1}{q-1},
				\frac{(1+ra_{t_{1}})(q^{k}-1)}{rn},\frac{(1+ra_{t_{2}})(q^{k}-1)}{rn}\big), \\
				&\quad~\frac{(a_{t_2}-a_{t_1})(q^{k}-1)^{2}}{n(q-1)}\Big)\\
				&=\frac{1}{k}\sum_{h|k}\varphi\big(\frac{k}{h}\big){\rm gcd}\Big((q^{h}\!-\!1)
				\!\cdot\!{\rm gcd}\big(q^{h}\!-\!1,\frac{q^{k}\!-\!1}{q\!-\!1},
				\frac{(1\!+\!ra_{t_{1}})(q^{k}\!-\!1)}{rn},\frac{(1\!+\!ra_{t_{2}})(q^{k}\!-\!1)}{rn}\big), \frac{(a_{t_2}-a_{t_1})(q^{k}\!-\!1)^{2}}{n(q\!-\!1)}\Big).
			\end{align*}
		\end{small}
		The proof is then completed.
	\end{proof}
	
	\begin{Example}{\rm
			Take $q=3$, $n=20$ and $\lambda=-1$.
			Let $\ell$ be the number of non-zero weights of the negacyclic code $\mathcal{C}=\mathcal{R}^{(q)}_{n,\lambda}\varepsilon_{1}\bigoplus\mathcal{R}^{(q)}_{n,\lambda}\varepsilon_{5}$,
			where the primitive idempotents $\varepsilon_{1}$ and $\varepsilon_{5}$ correspond to the $3$-cyclotomic cosets $\Gamma_{1}=\{5, 15 \}$ and $\Gamma_{5}=\{25, 35\}$, respectively.
			By Equation (\ref{E.2}), we have $\ell\leq 10$. Using Corollary \ref{c3.4}, we have $\ell\leq 7.$ After using Magma \cite{4}, we know that the weight enumerator of $\mathcal{C}$ is $1+8x^{5}+24x^{10}+32x^{15}+16x^{20}$, which implies that $\ell=4$.}
	\end{Example}

	\subsection{New upper bounds on the number of non-zero weights of two special classes of $\lambda$-constacyclic codes}
	For some special type of $\lambda$-constacyclic code $\mathcal{C}$, we can find a subgroup $\mathcal{G}$ of the automorphism group $\rm{Aut}(\mathcal{C})$ that is larger than $\langle \mu_{q}, \rho, \sigma_{\xi} \rangle$.
	According to Burnside's lemma, it is possible to obtain a smaller upper bound than $|\langle\mu_{q},\rho,\sigma_{\xi} \rangle\backslash \mathcal{C}^{*}|$ on the number of non-zero weights of $\mathcal{C}$ by counting the number of $\mathcal{G}$-orbits of $\mathcal{C}^{*}$. In this subsection, two classes of such $\lambda$-constacyclic codes are presented.

	\subsubsection{New upper bound on the number of non-zero weights of the negacyclic code $\mathcal{C}=\mathcal{R}^{(q)}_{n,\lambda}\varepsilon_{t}\bigoplus \mu_{-1}(\mathcal{R}^{(q)}_{n,\lambda}\varepsilon_{t})$}
	Assume that $q$ is a power of an odd prime and $\lambda = -1$, that is, $r=2$.
	For $0\leq t\leq s$, suppose that the irreducible $[n,k]$ negacyclic code $\mathcal{R}^{(q)}_{n,\lambda}\varepsilon_{t}$ corresponds to the $q$-cyclotomic coset $\Gamma_{t}=\{1+ra_{t},(1+ra_{t})q,\cdots,(1+ra_{t})q^{k-1}\}$.
	Since $-1\in \mathbb{Z}^{*}_{rn} \cap (1+r\mathbb{Z}_{rn})$, $\mu_{-1}$ is an $\mathbb{F}_{q}$-vector space automorphism of $\mathcal{R}^{(q)}_{n,\lambda}$.
	One can check that $\mu_{-1}(\mathcal{R}^{(q)}_{n,\lambda}\varepsilon_{t})$ is also an irreducible negacyclic code, and the primitive idempotent generating $\mu_{-1}(\mathcal{R}^{(q)}_{n,\lambda}\varepsilon_{t})$ corresponds to the $q$-cyclotomic coset $-\Gamma_{t}=\{-(1+ra_t),-(1+ra_t)q,\cdots,-(1+ra_t)q^{k-1}\}$
	(\cite{24}).
	Therefore,
	$$\mu_{-1}(\mathcal{R}^{(q)}_{n,\lambda}\varepsilon_{t})=\Big\{\sum_{j=0}^{k-1}\big( \sum_{v=0}^{k-1}c_{v}'\zeta^{-{v}(1+ra_t)q^{j}}\big)e_{-(1+ra_t)q^{j}}~\Big|~c_{v}'\in \mathbb{F}_{q}, 0\leq v \leq k-1\Big\}.$$
	
	Suppose $-(1+ra_t)\notin \Gamma_{t}$, then
	$\mu_{-1}(\mathcal{R}^{(q)}_{n,\lambda}\varepsilon_{t})\cap \mathcal{R}^{(q)}_{n,\lambda}\varepsilon_{t}=\{{\bf 0}\}$ and $\mu_{-1}^{2}(\mathcal{R}^{(q)}_{n,\lambda}\varepsilon_{t})=\mathcal{R}^{(q)}_{n,\lambda}\varepsilon_{t}$.
	Let	
	$$\mathcal{C}=\mathcal{R}^{(q)}_{n,\lambda}\varepsilon_{t}\bigoplus \mu_{-1}(\mathcal{R}^{(q)}_{n,\lambda}\varepsilon_{t}).$$
	It is easy to see that $\mu_{-1}\in {\rm Aut}(\mathcal{C})$ and the subgroup $\langle \mu_{-1}\rangle$ of ${\rm Aut}(\mathcal{C})$ is of order $2$.
	Obviously, $\langle\mu_{q},\rho,\sigma_{\xi} \rangle$ is a subgroup of $\langle\mu_{-1}, \mu_{q},\rho,\sigma_{\xi} \rangle$, so the number of $\langle\mu_{-1},\mu_{q},\rho,\sigma_{\xi} \rangle$-orbits of $\mathcal{C}^{*}=\mathcal{C}\backslash\{0\}$ is less than or equal to the number of $\langle\mu_{q},\rho,\sigma_{\xi} \rangle$-orbits of $\mathcal{C}^{*}$.
	In the following, the number of $\langle\mu_{-1},\mu_{q},\rho,\sigma_{\xi} \rangle$-orbits of $\mathcal{C}^{*}$ is given.
	
	\begin{lem}\label{3.16}
		With the notation given above. The subgroup $\langle\mu_{-1},\mu_{q},\rho,\sigma_{\xi} \rangle$ of ${\rm Aut}(\mathcal{C})$ is of order $2mn(q-1)$, and each element of $\big\langle \mu_{-1},\mu_{q},\rho,\sigma_{\xi} \big\rangle$ can be written uniquely as a product $\mu_{-1}^{r_{0}}\mu_{q}^{r_{1}}\rho^{r_{2}}\sigma_{\xi}^{r_{3}}$, where $0 \leq r_{0} \leq 1$, $0 \leq r_{1} \leq m-1$, $0 \leq r_{2} \leq n-1$ and $0 \leq r_{3} \leq q-2$.
	\end{lem}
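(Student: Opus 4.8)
The plan is to follow the template of the proof of Lemma \ref{l2.2}, realizing $\langle\mu_{-1},\mu_q,\rho,\sigma_\xi\rangle$ as the internal product $\langle\mu_{-1}\rangle\langle\mu_q,\rho,\sigma_\xi\rangle$ of the cyclic group $\langle\mu_{-1}\rangle$ (which has order $2$ since $\mu_{-1}^2=\mu_1=id$) with the group $\langle\mu_q,\rho,\sigma_\xi\rangle$ of order $mn(q-1)$ already described in Lemma \ref{l2.2}. First I would record the commutation relations: $\mu_{-1}\mu_q=\mu_q\mu_{-1}$ (multipliers compose by $\mu_a\mu_b=\mu_{ab}$), $\mu_{-1}\sigma_\xi=\sigma_\xi\mu_{-1}$ (a scalar multiplication commutes with any monomial permutation), and $\mu_{-1}\rho=\rho^{-1}\mu_{-1}$ (the case $a=-1$ of the identity $\mu_a\rho=\rho^a\mu_a$ used in Lemma \ref{l2.2}). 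From these, every $g\in\langle\mu_q,\rho,\sigma_\xi\rangle$ satisfies $\mu_{-1}g=g'\mu_{-1}$ for some $g'\in\langle\mu_q,\rho,\sigma_\xi\rangle$, hence $\langle\mu_{-1}\rangle\langle\mu_q,\rho,\sigma_\xi\rangle=\langle\mu_q,\rho,\sigma_\xi\rangle\langle\mu_{-1}\rangle$ is a subgroup of ${\rm Aut}(\mathcal{C})$ and therefore coincides with $\langle\mu_{-1},\mu_q,\rho,\sigma_\xi\rangle$.

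The main point --- and the only place the hypothesis $-(1+ra_t)\notin\Gamma_t$ is used --- is to prove $\langle\mu_{-1}\rangle\cap\langle\mu_q,\rho,\sigma_\xi\rangle=\{id\}$, i.e. that $\mu_{-1}\notin\langle\mu_q,\rho,\sigma_\xi\rangle$. Here I would argue via the action of these maps on the minimal ideal summands of $\mathcal{C}$: each of $\mu_q$, $\rho$ and $\sigma_\xi$ maps $\mathcal{R}^{(q)}_{n,\lambda}\varepsilon_t$ onto itself (for $\mu_q$ because $\Gamma_t$ is closed under multiplication by $q$, so $\mu_q(\varepsilon_t)=\varepsilon_t$; for $\rho$ and $\sigma_\xi$ because they are automorphisms of the irreducible constacyclic code $\mathcal{R}^{(q)}_{n,\lambda}\varepsilon_t$), hence so does every element of $\langle\mu_q,\rho,\sigma_\xi\rangle$. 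By contrast $\mu_{-1}$ carries $\mathcal{R}^{(q)}_{n,\lambda}\varepsilon_t$ onto $\mu_{-1}(\mathcal{R}^{(q)}_{n,\lambda}\varepsilon_t)$, whose generating idempotent corresponds to $-\Gamma_t$; since $-(1+ra_t)\notin\Gamma_t$ we have $-\Gamma_t\neq\Gamma_t$, so these are distinct minimal ideals and $\mathcal{R}^{(q)}_{n,\lambda}\varepsilon_t\cap\mu_{-1}(\mathcal{R}^{(q)}_{n,\lambda}\varepsilon_t)=\{{\bf 0}\}$. Choosing any ${\bf c}\in\mathcal{R}^{(q)}_{n,\lambda}\varepsilon_t\setminus\{{\bf 0}\}$ and any $g\in\langle\mu_q,\rho,\sigma_\xi\rangle$, one has $\mu_{-1}({\bf c})\in\mu_{-1}(\mathcal{R}^{(q)}_{n,\lambda}\varepsilon_t)\setminus\{{\bf 0}\}$ and $g({\bf c})\in\mathcal{R}^{(q)}_{n,\lambda}\varepsilon_t\setminus\{{\bf 0}\}$, so $\mu_{-1}({\bf c})\neq g({\bf c})$ and thus $\mu_{-1}\neq g$. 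Combined with Lemma \ref{l2.2}, this gives $|\langle\mu_{-1},\mu_q,\rho,\sigma_\xi\rangle|=2\cdot mn(q-1)$.

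Uniqueness of the factorization is then formal: any element of $\langle\mu_{-1}\rangle\langle\mu_q,\rho,\sigma_\xi\rangle$ can be written $\mu_{-1}^{r_0}\mu_q^{r_1}\rho^{r_2}\sigma_\xi^{r_3}$ with $r_0\in\{0,1\}$ and $(r_1,r_2,r_3)$ in the stated ranges, using Lemma \ref{l2.2} for the $\mu_q^{r_1}\rho^{r_2}\sigma_\xi^{r_3}$ part. If $\mu_{-1}^{r_0}\mu_q^{r_1}\rho^{r_2}\sigma_\xi^{r_3}=\mu_{-1}^{r_0'}\mu_q^{r_1'}\rho^{r_2'}\sigma_\xi^{r_3'}$, then $\mu_{-1}^{r_0-r_0'}=\mu_q^{r_1'-r_1}\rho^{r_2'-r_2}\sigma_\xi^{r_3'-r_3}$ lies in $\langle\mu_{-1}\rangle\cap\langle\mu_q,\rho,\sigma_\xi\rangle=\{id\}$; since $r_0,r_0'\in\{0,1\}$ this forces $r_0=r_0'$, and the uniqueness clause of Lemma \ref{l2.2} then yields $r_1=r_1'$, $r_2=r_2'$, $r_3=r_3'$. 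I expect the intersection step to be the only genuine obstacle, and it hinges precisely on $-(1+ra_t)\notin\Gamma_t$, which guarantees that $\mu_{-1}$ nontrivially permutes the two irreducible summands of $\mathcal{C}$ while every element of $\langle\mu_q,\rho,\sigma_\xi\rangle$ fixes each of them; the remaining arguments are routine bookkeeping parallel to Lemma \ref{l2.2}.
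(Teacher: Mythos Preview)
Your proposal is correct and follows exactly the template the paper intends: the paper's own proof is simply ``The proof is similar to that of Lemma \ref{l2.2} and thus omitted here.'' Your argument realises $\langle\mu_{-1},\mu_q,\rho,\sigma_\xi\rangle$ as the product $\langle\mu_{-1}\rangle\langle\mu_q,\rho,\sigma_\xi\rangle$ via the commutation relations, exactly as Lemma~\ref{l2.2} does for $\langle\mu_a\rangle\langle\rho,\sigma_\xi\rangle$.

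The one place your argument diverges slightly from a literal rerun of Lemma~\ref{l2.2} is the trivial-intersection step. A direct imitation of Lemma~\ref{l2.2} would evaluate a putative equality $\mu_{-1}=\mu_q^{r_1}\rho^{r_2}\sigma_\xi^{r_3}$ at $f(x)=1$ to force $r_2=r_3=0$, and then observe that $\mu_{-1}=\mu_q^{r_1}$ would give $-1\equiv q^{r_1}\pmod{rn}$, hence $-(1+ra_t)\equiv(1+ra_t)q^{r_1}\in\Gamma_t$, contradicting the hypothesis. You instead argue structurally via the action on the two irreducible summands of $\mathcal{C}$: every element of $\langle\mu_q,\rho,\sigma_\xi\rangle$ preserves each summand, while $\mu_{-1}$ swaps them. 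Both arguments are valid and both rest on $-(1+ra_t)\notin\Gamma_t$; yours has the advantage of making transparent why the hypothesis is needed, while the evaluation-at-$1$ route stays closer to the mechanical pattern of Lemma~\ref{l2.2}.
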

	\begin{proof}
		The proof is similar to that of Lemma \ref{l2.2} and thus omitted here.
	\end{proof}
	Note that the group $\big\langle \mu_{-1},\mu_{q},\rho,\sigma_{\xi} \big\rangle$ can act on the sets $\mathcal{C}':=\big(\mathcal{R}^{(q)}_{n,\lambda}\varepsilon_{t}\backslash \{{\bf 0}\}\big)\cup \big(\mu_{-1}(\mathcal{R}^{(q)}_{n,\lambda}\varepsilon_{t})\backslash \{{\bf 0}\}\big)$ and $\mathcal{C}^{\sharp}:=\mathcal{R}^{(q)}_{n,\lambda}\varepsilon_{t}\backslash \{{\bf 0}\}\bigoplus \mu_{-1}(\mathcal{R}^{(q)}_{n,\lambda}\varepsilon_{t})\backslash \{{\bf 0}\}$, respectively.
	The numbers of $\big\langle \mu_{-1},\mu_{q},\rho,\sigma_{\xi} \big\rangle$-orbits of $\mathcal{C}'$ and $\mathcal{C}^{\sharp}$ are given as below.
	\begin{lem}\label{x3.2}
		With the notation given above, then the number of $\big\langle \mu_{-1},\mu_{q},\rho,\sigma_{\xi} \big\rangle$-orbits of $\mathcal{C}'$ is equal to
		\begin{align*}
			\frac{1}{k}\sum_{h|k}\varphi\big(\frac{k}{h}\big){\rm gcd}\big(q^{h}-1,\frac{q^{k}-1}{q-1},\frac{(1+ra_t)(q^{k}-1)}{rn}\big).
		\end{align*}
	\end{lem}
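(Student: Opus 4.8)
The plan is to identify the $\big\langle \mu_{-1},\mu_{q},\rho,\sigma_{\xi} \big\rangle$-orbits of $\mathcal{C}'$ with the $\langle \mu_{q},\rho,\sigma_{\xi} \rangle$-orbits of the single irreducible constituent $\mathcal{R}^{(q)}_{n,\lambda}\varepsilon_{t}\backslash \{{\bf 0}\}$, and then to invoke Theorem \ref{t3.1}. Write $G=\big\langle \mu_{-1},\mu_{q},\rho,\sigma_{\xi} \big\rangle$, $H=\langle \mu_{q},\rho,\sigma_{\xi} \rangle$, $X_{1}=\mathcal{R}^{(q)}_{n,\lambda}\varepsilon_{t}\backslash \{{\bf 0}\}$ and $X_{2}=\mu_{-1}(\mathcal{R}^{(q)}_{n,\lambda}\varepsilon_{t})\backslash \{{\bf 0}\}$, so that by the standing hypothesis $-(1+ra_{t})\notin \Gamma_{t}$ we have a \emph{disjoint} union $\mathcal{C}'=X_{1}\cup X_{2}$, and by Lemma \ref{3.16} the subgroup $H$ has index $2$ in $G$ with $G=H\cup \mu_{-1}H$.

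First I would record two structural facts. \textbf{(i)} $H$ stabilises each of $X_{1}$ and $X_{2}$: indeed $\mu_{q},\rho,\sigma_{\xi}$ are automorphisms of the irreducible negacyclic code $\mathcal{R}^{(q)}_{n,\lambda}\varepsilon_{t}$, and, since $\mu_{-1}(\mathcal{R}^{(q)}_{n,\lambda}\varepsilon_{t})$ is again a negacyclic code, they are automorphisms of it as well. \textbf{(ii)} $\mu_{-1}$ interchanges $X_{1}$ and $X_{2}$, because $\mu_{-1}^{2}(\mathcal{R}^{(q)}_{n,\lambda}\varepsilon_{t})=\mathcal{R}^{(q)}_{n,\lambda}\varepsilon_{t}$. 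Granting these, take any $G$-orbit $O\subseteq \mathcal{C}'$. By (ii), $O$ meets $X_{1}$. I claim $O\cap X_{1}$ is a single $H$-orbit: if $x,y\in O\cap X_{1}$ then $y=gx$ for some $g\in G$; writing $g=h$ or $g=\mu_{-1}h$ with $h\in H$, the second case gives $y=\mu_{-1}(hx)$ with $hx\in X_{1}$ by (i), hence $y\in X_{2}$ by (ii), contradicting $y\in X_{1}$; so $g\in H$. Conversely, every $H$-orbit $O'\subseteq X_{1}$ spans the single $G$-orbit $O'\cup \mu_{-1}(O')\subseteq \mathcal{C}'$, which meets $X_{1}$ in exactly $O'$. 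Thus $O\mapsto O\cap X_{1}$ is a bijection from $G\backslash \mathcal{C}'$ onto $H\backslash X_{1}$.

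Finally, $X_{1}=\mathcal{R}^{(q)}_{n,\lambda}\varepsilon_{t}\backslash \{{\bf 0}\}$ is precisely the punctured code $\mathcal{C}^{*}$ of the $[n,k]$ irreducible $\lambda$-constacyclic code whose generating idempotent corresponds to $\{1+ra_{t},(1+ra_{t})q,\cdots,(1+ra_{t})q^{k-1}\}$, so Theorem \ref{t3.1} yields $|H\backslash X_{1}|=\frac{1}{k}\sum_{h\mid k}\varphi\big(\frac{k}{h}\big){\rm gcd}\big(q^{h}-1,\frac{q^{k}-1}{q-1},\frac{(1+ra_{t})(q^{k}-1)}{rn}\big)$, which is the asserted value. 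Equivalently one could run Burnside's lemma (Equation (\ref{e2.1})) directly on $G$: by Lemma \ref{3.16} every element of $G$ with a nontrivial $\mu_{-1}$-part swaps $X_{1}$ and $X_{2}$ and therefore fixes no point of $\mathcal{C}'$, while the elements with trivial $\mu_{-1}$-part reassemble, after reindexing the $\rho$-exponent via $\mu_{-1}\rho\mu_{-1}^{-1}=\rho^{-1}$, into twice the Burnside sum already evaluated in the proof of Theorem \ref{t3.1}, and the factor $2$ cancels against $|G|=2\,|H|$.

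There is no serious obstacle here: the only point demanding care is the verification that $X_{1}$ and $X_{2}$ are genuinely disjoint and that $H$ cannot mix them, both of which follow from $-(1+ra_{t})\notin \Gamma_{t}$ together with the explicit Fourier description of $\mu_{-1}(\mathcal{R}^{(q)}_{n,\lambda}\varepsilon_{t})$ recalled just before the lemma. All of the content is thus the elementary bookkeeping of an index-$2$ extension, after which Theorem \ref{t3.1} does the remaining work.
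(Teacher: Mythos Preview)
Your proof is correct. The paper's own argument proceeds by a direct application of Burnside's lemma on $G=\langle \mu_{-1},\mu_{q},\rho,\sigma_{\xi}\rangle$: it observes that any element with $r_{0}=1$ sends $X_{1}$ into $X_{2}$ (and vice versa) and hence has no fixed points on $\mathcal{C}'$, while for $r_{0}=0$ the fixed-point count on each of $X_{1}$ and $X_{2}$ is exactly the sum computed in the proof of Theorem~\ref{t3.1}; the two equal contributions then cancel the extra factor $2$ in $|G|=2mn(q-1)$. This is precisely the ``equivalently'' alternative you sketch at the end.

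Your primary route is different and a bit cleaner: rather than recomputing fixed points, you exploit the index-$2$ structure $G=H\cup\mu_{-1}H$ together with the facts that $H$ preserves each $X_{i}$ and $\mu_{-1}$ swaps them, to set up a bijection $O\mapsto O\cap X_{1}$ between $G\backslash\mathcal{C}'$ and $H\backslash X_{1}$, after which Theorem~\ref{t3.1} applies verbatim. This avoids re-running Burnside and makes transparent why the answer coincides with the irreducible case. The paper's approach, by contrast, is more mechanical but has the virtue of being uniform with the harder computation in Lemma~\ref{x3.3}, where no such clean bijection is available. Either way the content is the same; your argument simply isolates the elementary group-theoretic reason behind the cancellation.
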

	\begin{proof}
		Note that
		$\mathcal{C}'=\big(\mathcal{R}^{(q)}_{n,\lambda}\varepsilon_{t}\backslash \{{\bf 0}\}\big)\cup \big(\mu_{-1}(\mathcal{R}^{(q)}_{n,\lambda}\varepsilon_{t})\backslash \{{\bf 0}\}\big)$
		is a disjoint union. Then it follows from Equation (\ref{e2.1}) and Lemma \ref{3.16} that
		\begin{align*}
			\big|\langle \mu_{-1},\mu_{q},\rho,\sigma_{\xi} \rangle\backslash \mathcal{C}'\big|
			&=\frac{1}{2mn(q-1)}\Big( \sum_{r_{0}=0}^{1}\sum_{r_{1}=0}^{m-1}\sum_{r_{2}=0}^{n-1}\sum_{r_{3}=0}^{q-2}\Big|\big\{{\bf c}\in \mathcal{R}^{(q)}_{n,\lambda}\varepsilon_{t}\backslash \{{\bf 0}\}~\big|~\mu_{-1}^{r_{0}}\mu_{q}^{r_{1}}\rho^{r_{2}}\sigma_{\xi}^{r_{3}}({\bf c})={\bf c}\big\}\Big|\\
			&\quad +\sum_{r_{0}=0}^{1}\sum_{r_{1}=0}^{m-1}\sum_{r_{2}=0}^{n-1}\sum_{r_{3}=0}^{q-2}\Big|\big\{{\bf c}\in \mu_{-1}(\mathcal{R}^{(q)}_{n,\lambda}\varepsilon_{t})\backslash \{{\bf 0}\}~\big|~\mu_{-1}^{r_{0}}\mu_{q}^{r_{1}}\rho^{r_{2}}\sigma_{\xi}^{r_{3}}({\bf c})={\bf c}\big\}\Big|\Big) .
		\end{align*}
		
		Let ${\bf c}\in \mathcal{R}^{(q)}_{n,\lambda}\varepsilon_{t}\backslash \{\bf 0\}$.
		If $r_{0}=1$, then $\mu_{-1}^{r_{0}}\mu_{q}^{r_{1}}\rho^{r_{2}}\sigma_{\xi}^{r_{3}}({\bf c})=\mu_{-1}\mu_{q}^{r_{1}}\rho^{r_{2}}\sigma_{\xi}^{r_{3}}({\bf c})\in \mu_{-1}(\mathcal{R}^{(q)}_{n,\lambda}\varepsilon_{t})\backslash \{{\bf 0}\}$, and so $\mu_{-1}^{r_{0}}\mu_{q}^{r_{1}}\rho^{r_{2}}\sigma_{\xi}^{r_{3}}({\bf c})\neq {\bf c}$.
		If $r_{0}=0$, then $\mu_{-1}^{r_{0}}\mu_{q}^{r_{1}}\rho^{r_{2}}\sigma_{\xi}^{r_{3}}({\bf c})=\mu_{q}^{r_{1}}\rho^{r_{2}}\sigma_{\xi}^{r_{3}}({\bf c})\in \mathcal{R}^{(q)}_{n,\lambda}\varepsilon_{t}\backslash \{\bf 0\}$.
		Combining these facts and the proof of Theorem \ref{t3.1}, we have
		\begin{align*}
			&\sum_{r_{0}=0}^{1}\sum_{r_{1}=0}^{m-1}\sum_{r_{2}=0}^{n-1}\sum_{r_{3}=0}^{q-2}
			\Big|\big\{{\bf c}\in \mathcal{R}^{(q)}_{n,\lambda}\varepsilon_{t}\backslash \{{\bf 0}\}~\big|
			~\mu_{-1}^{r_{0}}\mu_{q}^{r_{1}}\rho^{r_{2}}\sigma_{\xi}^{r_{3}}({\bf c})={\bf c}\big\}\Big|\\
			=&\sum_{r_{1}=0}^{m-1}\sum_{r_{2}=0}^{n-1}\sum_{r_{3}=0}^{q-2}
			\Big|\big\{{\bf c}\in \mathcal{R}^{(q)}_{n,\lambda}\varepsilon_{t}\backslash \{{\bf 0}\}~\big|
			~\mu_{q}^{r_{1}}\rho^{r_{2}}\sigma_{\xi}^{r_{3}}({\bf c})={\bf c}\big\}\Big|\\
			=&\sum_{r_{1}=0}^{m-1}{\rm gcd}\Big(n(q-1)(q^{{\rm gcd}(k,r_{1})}-1),n(q^{k}-1),\frac{(1+r{a}_t)(q-1)(q^{k}-1)}{r}\Big).
		\end{align*}
		Similar discussion as above shows that		
		\begin{align*}
			&\sum_{r_{0}=0}^{1}\sum_{r_{1}=0}^{m-1}\sum_{r_{2}=0}^{n-1}\sum_{r_{3}=0}^{q-2}\Big|\big\{{\bf c}\in \mu_{-1}(\mathcal{R}^{(q)}_{n,\lambda}\varepsilon_{t})\backslash \{{\bf 0}\}~\big|
			~\mu_{-1}^{r_{0}}\mu_{q}^{r_{1}}\rho^{r_{2}}\sigma_{\xi}^{r_{3}}({\bf c})={\bf c}\big\}\Big|\\
			=&\sum_{r_{1}=0}^{m-1}\sum_{r_{2}=0}^{n-1}\sum_{r_{3}=0}^{q-2}
			\Big|\big\{{\bf c}\in \mu_{-1}(\mathcal{R}^{(q)}_{n,\lambda}\varepsilon_{t})\backslash \{{\bf 0}\}~\big|
			~\mu_{q}^{r_{1}}\rho^{r_{2}}\sigma_{\xi}^{r_{3}}({\bf c})={\bf c}\big\}\Big|\\
			=&\sum_{r_{1}=0}^{m-1}{\rm gcd}\Big(n(q-1)(q^{{\rm gcd}(k,r_{1})}-1),n(q^{k}-1),\frac{(1+r{a}_t)(q-1)(q^{k}-1)}{r}\Big).
		\end{align*}
		Therefore, we conclude that
		\begin{align*}
			\big|\langle\mu_{-1},\mu_{q},\rho,\sigma_{\xi} \rangle\backslash \mathcal{C}'\big|
			=&\frac{2}{2mn(q-1)}\sum_{r_{1}=0}^{m-1}
			{\rm gcd}\Big(n(q-1)(q^{{\rm gcd}(k,r_{1})}-1),n(q^{k}-1),\frac{(1+r{a}_t)(q-1)(q^{k}-1)}{r}\Big)\\
			=&\frac{1}{k}\sum_{h\mid k}\varphi\big(\frac{k}{h}\big){\rm gcd}\Big(q^{h}-1,\frac{q^{k}-1}{q-1},\frac{(1+r{a}_t)(q^{k}-1)}{rn}\Big).
		\end{align*}
		Then the proof is completed.
	\end{proof}
	
	\begin{lem}\label{x3.3}
		With the notation given above, then the number of $\langle\mu_{-1},\mu_{q},\rho,\sigma_{\xi} \rangle$-orbits of $\mathcal{C}^{\sharp}$ is equal to
		\begin{align*}
			&\frac{1}{2m}\sum_{h=0}^{m-1}\left({\rm gcd}\big(q^{{\rm gcd}(k,2h)}-1,
			\frac{2(q^{k}-1)}{q-1},\frac{(1+ra_t)(q^{h}-1)(q^{k}-1)}{rn}\big)\right. \\
			&+\left. {\rm gcd}\Big((q^{{\rm gcd}(k,h)}-1)\cdot
			{\rm gcd}\big(q^{{\rm gcd}(k,h)}-1,\frac{q^{k}-1}{q-1},\frac{(1+ra_t)(q^{k}-1)}{rn}\big),
			\frac{2(1+ra_t)(q^{k}-1)^{2}}{rn(q-1)}\Big)\right).
		\end{align*}
	\end{lem}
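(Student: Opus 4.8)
The plan is to compute $\big|\langle\mu_{-1},\mu_{q},\rho,\sigma_{\xi}\rangle\backslash\mathcal{C}^{\sharp}\big|$ by Burnside's lemma. By Lemma \ref{3.16} the acting group has order $2mn(q-1)$ and every element is uniquely $\mu_{-1}^{r_{0}}\mu_{q}^{r_{1}}\rho^{r_{2}}\sigma_{\xi}^{r_{3}}$ with $0\le r_{0}\le 1$, $0\le r_{1}\le m-1$, $0\le r_{2}\le n-1$, $0\le r_{3}\le q-2$. I would split the Burnside sum according to $r_{0}=0$ and $r_{0}=1$ and show that, after dividing by $2mn(q-1)$, these two parts yield respectively the second and the first of the two ${\rm gcd}$-expressions sitting inside $\frac{1}{2m}\sum_{h=0}^{m-1}(\cdots)$ in the statement.

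When $r_{0}=0$, the automorphism $g=\mu_{q}^{r_{1}}\rho^{r_{2}}\sigma_{\xi}^{r_{3}}$ stabilises each of $\mathcal{R}^{(q)}_{n,\lambda}\varepsilon_{t}$ and $\mu_{-1}(\mathcal{R}^{(q)}_{n,\lambda}\varepsilon_{t})$, so a codeword ${\bf c}_{1}+{\bf c}_{2}\in\mathcal{C}^{\sharp}$ is fixed iff ${\bf c}_{1}$ and ${\bf c}_{2}$ are both fixed; by Burnside applied to the subgroup $\langle\mu_{q},\rho,\sigma_{\xi}\rangle$ (of order $mn(q-1)$) the $r_{0}=0$ portion of the sum equals $mn(q-1)$ times $\big|\langle\mu_{q},\rho,\sigma_{\xi}\rangle\backslash\big(\mathcal{R}^{(q)}_{n,\lambda}\varepsilon_{t}\backslash\{{\bf 0}\}\bigoplus\mu_{-1}(\mathcal{R}^{(q)}_{n,\lambda}\varepsilon_{t})\backslash\{{\bf 0}\}\big)\big|$. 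Since $\mu_{-1}(\mathcal{R}^{(q)}_{n,\lambda}\varepsilon_{t})$ is the irreducible negacyclic code of the $q$-cyclotomic coset $-\Gamma_{t}$, which again has $k$ elements and parameter $-(1+ra_{t})$, this orbit count is precisely the quantity $s_{t_{1},t_{2}}$ of Corollary \ref{c3.4} evaluated at $1+ra_{t_{1}}=1+ra_{t}$, $1+ra_{t_{2}}\equiv-(1+ra_{t})\pmod{rn}$ and (using $r=2$) $a_{t_{2}}-a_{t_{1}}\equiv-(1+ra_{t})\pmod{n}$; routine ${\rm gcd}$-manipulations, using $r=2$ and $q^{{\rm gcd}(k,h)}-1\mid q^{k}-1$, make the second of the two $\frac{(1+ra_{t_{i}})(q^{k}-1)}{rn}$-entries redundant and turn $\frac{(a_{t_{2}}-a_{t_{1}})(q^{k}-1)^{2}}{n(q-1)}$ into $\frac{2(1+ra_{t})(q^{k}-1)^{2}}{rn(q-1)}$, so that dividing by $2mn(q-1)$ leaves exactly the second summand in the statement.

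The $r_{0}=1$ part is the heart of the proof. For $g=\mu_{-1}\mu_{q}^{r_{1}}\rho^{r_{2}}\sigma_{\xi}^{r_{3}}$ the multiplier $\mu_{-1}$ interchanges the two summands, so for ${\bf c}={\bf c}_{1}+{\bf c}_{2}\in\mathcal{C}^{\sharp}$ the condition $g({\bf c})={\bf c}$ is equivalent to ${\bf c}_{2}=g({\bf c}_{1})$ together with $g^{2}({\bf c}_{1})={\bf c}_{1}$; hence ${\bf c}\mapsto{\bf c}_{1}$ gives a bijection ${\rm Fix}_{\mathcal{C}^{\sharp}}(g)\to{\rm Fix}_{\mathcal{R}^{(q)}_{n,\lambda}\varepsilon_{t}\backslash\{{\bf 0}\}}(g^{2})$. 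Using the relations of Lemma \ref{l2.2} (together with $\mu_{a}\mu_{b}=\mu_{ab}$) --- in particular $\mu_{-1}\rho\mu_{-1}=\rho^{-1}$, $\rho^{-r_{2}}\mu_{q}^{r_{1}}=\mu_{q}^{r_{1}}\rho^{-r_{2}q^{-r_{1}}}$, and the centrality of $\sigma_{\xi}$ --- I obtain $g^{2}=\mu_{q}^{2r_{1}}\rho^{r_{2}(1-q^{-r_{1}})}\sigma_{\xi}^{2r_{3}}$, whose relevant exponent satisfies $q^{2r_{1}}\!\cdot\!r_{2}(1-q^{-r_{1}})\equiv(q^{2r_{1}}-q^{r_{1}})r_{2}\pmod{rn}$. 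Feeding $a=2r_{1}$, $b=r_{2}(1-q^{-r_{1}})$, $c=2r_{3}$ into the fixed-point count of Theorem \ref{t3.1} gives $\big|{\rm Fix}(g^{2})\big|\in\{0,\,q^{{\rm gcd}(k,2r_{1})}-1\}$, the nonzero value occurring exactly when $\zeta^{-(1+ra_{t})q^{r_{1}}(q^{r_{1}}-1)r_{2}}\in\xi^{2r_{3}}\langle\theta^{q^{2r_{1}}-1}\rangle$, where $\theta$ generates $\mathbb{F}_{q^{k}}^{*}$. I then fix $r_{1}$, count the admissible $r_{2}$ --- there are ${\rm gcd}\big(n,\frac{(1+ra_{t})(q^{r_{1}}-1)|\langle\xi^{2}\rangle\langle\theta^{q^{2r_{1}}-1}\rangle|}{r}\big)$ of them, the coprime factor $q^{r_{1}}$ dropping out of the ${\rm gcd}$ --- and, for each, the admissible $r_{3}$: here the key new phenomenon, responsible for the $\frac{2(q^{k}-1)}{q-1}$ appearing in the statement, is that $\xi^{2r_{3}}$ runs only over $\langle\xi^{2}\rangle$, which has order $\frac{q-1}{2}$ since $q$ is odd, so each admissible coset of $\langle\theta^{q^{2r_{1}}-1}\rangle$ is reached by exactly $2\,\big|\langle\xi^{2}\rangle\cap\langle\theta^{q^{2r_{1}}-1}\rangle\big|$ values of $r_{3}$.

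Multiplying the two counts by $q^{{\rm gcd}(k,2r_{1})}-1$ and carrying out the ${\rm gcd}$-bookkeeping --- absorbing factors via $c\cdot{\rm gcd}(\cdots)={\rm gcd}(c\,\cdots)$ and using $2\,{\rm gcd}\big(\frac{q-1}{2},\frac{q^{k}-1}{q^{{\rm gcd}(k,2r_{1})}-1}\big)={\rm gcd}\big(q-1,\frac{2(q^{k}-1)}{q^{{\rm gcd}(k,2r_{1})}-1}\big)$ --- collapses the $r_{1}$-term of the Burnside sum to $n(q-1)\,{\rm gcd}\big(q^{{\rm gcd}(k,2r_{1})}-1,\,\frac{2(q^{k}-1)}{q-1},\,\frac{(1+ra_{t})(q^{r_{1}}-1)(q^{k}-1)}{rn}\big)$; summing over $r_{1}$, dividing by $2mn(q-1)$ and relabelling $r_{1}$ as $h$ produces the first summand, and adding the $r_{0}=0$ contribution gives the claimed identity. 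I expect the main obstacle to be exactly this $r_{0}=1$ bookkeeping: computing $g^{2}$ correctly through the $\mu_{-1}$--$\rho$ twist, and then tracking all the factors of $2$ generated by the even exponents $2r_{1}$ and $2r_{3}$ without losing or gaining one --- precisely the more delicate kind of calculation flagged in the introduction.
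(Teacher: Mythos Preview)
Your proposal is correct and follows essentially the same route as the paper: Burnside's lemma, a split according to $r_{0}\in\{0,1\}$, the $r_{0}=0$ part handled via the computation behind Corollary~\ref{c3.4}, and the $r_{0}=1$ part reduced to counting $\alpha\in\mathbb{F}_{q^{k}}^{*}$ with $\xi^{2r_{3}}\alpha^{q^{2r_{1}}-1}=\zeta^{-(1+ra_{t})(q^{r_{1}}-1)q^{r_{1}}r_{2}}$, followed by the same $S''(r_{1})$, $R''(r_{1},r_{2})$ bookkeeping. The one cosmetic difference is that you reach this key equation via the bijection ${\rm Fix}_{\mathcal{C}^{\sharp}}(g)\cong{\rm Fix}_{\mathcal{R}^{(q)}_{n,\lambda}\varepsilon_{t}\setminus\{\mathbf{0}\}}(g^{2})$ and Theorem~\ref{t3.1}, whereas the paper obtains it by expanding $\mu_{-1}\mu_{q}^{r_{1}}\rho^{r_{2}}\sigma_{\xi}^{r_{3}}({\bf c})$ directly; both land on the identical equation and the subsequent simplifications coincide.
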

	\begin{proof}
		According to Equation (\ref{e2.1}), Lemmas \ref{l2.2} and \ref{3.16}, we have
		$$\big|\langle\mu_{-1},\mu_{q},\rho,\sigma_{\xi} \rangle\backslash \mathcal{C}^{\sharp}\big|==\frac{1}{2mn(q-1)}\sum_{r_{0}=0}^{1}\sum_{r_{1}=0}^{m-1}\sum_{r_{2}=0}^{n-1}\sum_{r_{3}=0}^{q-2}\big| {\rm Fix}\big(\mu_{-1}^{r_{0}}\mu_{q}^{r_{1}}\rho^{r_{2}}\sigma_{\xi}^{r_{3}}\big)\big|,$$
		where $ {\rm Fix}\big(\mu_{-1}^{r_{0}}\mu_{q}^{r_{1}}\rho^{r_{2}}\sigma_{\xi}^{r_{3}}\big)=\big\{{\bf c}\in \mathcal{C}^{\sharp}~\big|~\mu_{-1}^{r_{0}}\mu_{q}^{r_{1}}\rho^{r_{2}}\sigma_{\xi}^{r_{3}}({\bf c})={\bf c}\big\}.$
		
		Take ${\bf c}={\bf c}_{t}+{\bf c}_{t}'\in \mathcal{C}^{\sharp}$, where ${\bf c}_{t}\in \mathcal{R}^{(q)}_{n,\lambda}\varepsilon_{t}\backslash \{{\bf 0}\}$ and ${\bf c}_{t}'\in \mu_{-1}(\mathcal{R}^{(q)}_{n,\lambda}\varepsilon_{t})\backslash \{{\bf 0}\}$. Let
		$${\bf c}_{t}=\sum_{j=0}^{k-1}\big(\sum_{v=0}^{k-1}c_{v}\zeta^{v (1+ra_t)q^{j}}\big)e_{(1+ra_t)q^{j}},~~{\bf c}_{t}'=\sum_{j=0}^{k-1}\big(\sum_{v=0}^{k-1}c_{v}'\zeta^{-v (1+ra_t)q^{j}}\big)e_{-(1+ra_t)q^{j}}.$$
		Suppose $r_{0}=1$.
		One can easily check that
		$\mu_{-1}(e_{(1+ra_t)q^j})=e_{-(1+ra_t)q^j}$.
		Thus
		\begin{align*}
			\mu_{-1}\mu_{q}^{r_{1}}\rho^{r_{2}}\sigma_{\xi}^{r_{3}}({\bf c})
			=&\mu_{-1}\mu_{q}^{r_{1}}\rho^{r_{2}}\sigma_{\xi}^{r_{3}}({\bf c}_{t})
			+\mu_{-1}^{r_{0}}\mu_{q}^{r_{1}}\rho^{r_{2}}\sigma_{\xi}^{r_{3}}({\bf c}_{t}')\\
			=&\sum_{j=0}^{k-1}\xi^{r_{3}}\zeta^{(1+ra_t)q^{r_{1}+j}r_{2}}\big(\sum_{v=0}^{k-1}
			c_{v}\zeta^{v (1+ra_t)q^{j}}\big)^{q^{r_{1}}}e_{-(1+ra_t)q^{j}}\\
			&+\sum_{j=0}^{k-1}\xi^{r_{3}}\zeta^{-(1+ra_t)q^{r_{1}+j}r_{2}}\big(\sum_{v=0}^{k-1}c_{v}'\zeta^{-v (1+ra_t)q^{j}}\big)^{q^{r_{1}}}e_{(1+ra_t)q^{j}}.
		\end{align*}	
		Then
		\begin{align*}
			\mu_{-1}\mu_{q}^{r_{1}}\rho^{r_{2}}\sigma_{\xi}^{r_{3}}({\bf c})={\bf c}~
			&\Leftrightarrow~
			\mu_{-1}\mu_{q}^{r_{1}}\rho^{r_{2}}\sigma_{\xi}^{r_{3}}({\bf c}_{t})={\bf c}_{t}'
			~{\rm and}~\mu_{-1}\mu_{q}^{r_{1}}\rho^{r_{2}}\sigma_{\xi}^{r_{3}}({\bf c}_{t}')={\bf c}_{t}\\
			&\Leftrightarrow~\xi^{r_{3}}\zeta^{(1+ra_t)q^{r_{1}}r_{2}}\big(\sum_{v=0}^{k-1}c_{v}\zeta^{v (1+ra_t)}\big)^{q^{r_{1}}} =\sum_{v=0}^{k-1}c_{v}'\zeta^{-v(1+ra_t)}\\
			&\qquad {\rm and}~ \xi^{r_{3}}\zeta^{-(1+ra_t)q^{r_{1}}r_{2}}\big(\sum_{v=0}^{k-1}c_{v}'\zeta^{-v (1+ra_t)}\big)^{q^{r_{1}}}=\sum_{v=0}^{k-1}c_{v}\zeta^{v(1+ra_t)}.
		\end{align*}
		Hence
		\begin{align*}
			&\big| {\rm Fix}\big(\mu_{-1}\mu_{q}^{r_{1}}\rho^{r_{2}}\sigma_{\xi}^{r_{3}}\big)\big|\\
			=&\Big|\big\{(\alpha,\beta)\in \mathbb{F}_{q^{k}}^{*} \times \mathbb{F}_{q^{k}}^{*}~\big|~ \xi^{r_{3}}\zeta^{(1+ra_t)q^{r_{1}}r_{2}}\alpha^{q^{r_{1}}}=\beta, ~\xi^{r_{3}}\zeta^{-(1+ra_t)q^{r_{1}}r_{2}}\beta^{q^{r_{1}}}=\alpha \big\}\Big|\\
			=&\Big|\big\{\alpha \in \mathbb{F}_{q^{k}}^{*}~\big|~\xi^{2r_{3}}\alpha^{q^{2r_{1}}-1}=\zeta^{-(1+ra_t)(q^{r_{1}}-1)q^{r_{1}}r_{2}}\big\}\Big|.
		\end{align*}
		It's easy to verify that $\big| {\rm Fix}\big(\mu_{-1}\mu_{q}^{r_{1}}\rho^{r_{2}}\sigma_{\xi}^{r_{3}}\big)\big|=0$ or $q^{{\rm gcd}(k,2r_{1})}-1$.
		Next, fixing $r_1~(0 \leq r_1 \leq m-1)$, we count the number of number pairs $(r_2,r_3)$ such that $\big|{\rm Fix}\big(\mu_{-1}\mu_{q}^{r_{1}}\rho^{r_{2}}\sigma_{\xi}^{r_{3}}\big)\big|\neq 0$ , where $0 \leq r_{2} \leq n-1$, $0 \leq r_{3} \leq q-2$. Let $\mathbb{F}_{q^{k}}^{*}=\langle \theta \rangle$ and denote
		$$S''(r_{1})=\big\{0\leq z\leq n-1~\big|~ \zeta^{-(1+ra_t)(q^{r_{1}}-1)q^{r_{1}}z}\in \langle \xi^2 \rangle\langle \theta^{q^{2r_{1}}-1}\rangle\big\}.$$
		Assume that $r_{2}\in S''(r_{1})$ and denote
		$$R''(r_{1},r_{2})=\big\{0\leq z\leq q-2~\big|~\zeta^{-(1+ra_t)(q^{r_{1}}-1)q^{r_{1}}r_{2}}\in \xi^{2z}\langle \theta^{q^{2r_{1}}-1}\rangle\big\}.$$
		Similar to the proof of Theorem \ref{t3.1}, we have
		$$|S''(r_{1})|={\rm gcd}\Big(n,\frac{(1+ra_t)(q^{r_{1}}-1)|\langle \xi^2 \rangle \langle \theta^{q^{2r_{1}}-1} \rangle|}{r}\Big),~
		|R''(r_{1},r_{2})|={\rm gcd}(q-1,2)\!\cdot\!|\langle \xi^2 \rangle \cap \langle \theta^{q^{2r_{1}}-1}\rangle|.$$
		It follows that
		\begin{small}
			\begin{align*}
				&\sum_{r_{1}=0}^{m-1}\sum_{r_{2}=0}^{n-1}\sum_{r_{3}=0}^{q-2}
				\Big|{\rm Fix}\big(\mu_{-1}\mu_{q}^{r_{1}}\rho^{r_{2}}\sigma_{\xi}^{r_{3}}\big)\Big|\\
				=&\sum_{r_{1}=0}^{m-1}\sum_{{\small r_{2}\in S''(r_1)}}\sum_{{\small r_{3}\in R''(r_1,r_2)}}(q^{{\rm gcd}(k,2r_{1})}-1)\\
				=&\sum_{r_{1}=0}^{m-1}|S''(r_1)|\!\cdot\!|R''(r_1,r_2)|(q^{{\rm gcd}(k,2r_{1})}-1)\\
				=&\sum_{r_{1}=0}^{m-1}{\rm gcd}(q-1,2)\cdot {\rm gcd}\Big(n|\langle \xi^2 \rangle \cap \langle \theta^{q^{2r_{1}}-1}\rangle|,\frac{(1+ra_t)(q^{r_1}-1)|\langle \xi^2 \rangle |\cdot| \langle \theta^{q^{2r_{1}}-1}\rangle|}{r}\Big)(q^{{\rm gcd}(k,2r_{1})}-1)\\
				=&\sum_{r_{1}=0}^{m-1}{\rm gcd}(q-1,2)\cdot{\rm gcd}\Big(\frac{n(q-1)}{{\rm gcd}(q-1,2)},\frac{n(q^k-1)}{q^{{\rm gcd}(k,2r_{1})}-1},\frac{(1+ra_t)(q^{r_1}-1)(q-1)(q^k-1)}{r\cdot {\rm gcd}(q-1,2)(q^{{\rm gcd}(k,2r_{1})}-1)}\Big)(q^{{\rm gcd}(k,2r_{1})}-1)\\
				=&\sum_{r_{1}=0}^{m-1}{\rm gcd}\Big(n(q-1),\frac{n(q-1)(q^k-1)}{q^{{\rm gcd}(k,2r_{1})}-1},\frac{2n(q^k-1)}{q^{{\rm gcd}(k,2r_{1})}-1},\frac{(1+ra_t)(q^{r_1}-1)(q-1)(q^k-1)}{r(q^{{\rm gcd}(k,2r_{1})}-1)}\Big)(q^{{\rm gcd}(k,2r_{1})}-1)\\
				=&\sum_{r_{1}=0}^{m-1}{\rm gcd}\Big(n(q-1)(q^{{\rm gcd}(k,2r_{1})}-1),2n(q^{k}-1),\frac{(1+ra_t)(q^{r_{1}}-1)(q-1)(q^{k}-1)}{r}\Big).
			\end{align*}
		\end{small}
		
		\noindent Suppose $r_{0}=0$, then $\mu_{-1}^{r_{0}}\mu_{q}^{r_{1}}\rho^{r_{2}}\sigma_{\xi}^{r_{3}}({\bf c})=\mu_{q}^{r_{1}}\rho^{r_{2}}\sigma_{\xi}^{r_{3}}({\bf c}_{t})+\mu_{q}^{r_{1}}\rho^{r_{2}}\sigma_{\xi}^{r_{3}}({\bf c}_{t}')$. This leads to
		$$\mu_{-1}^{r_{0}}\mu_{q}^{r_{1}}\rho^{r_{2}}\sigma_{\xi}^{r_{3}}({\bf c})={\bf c}~
		\Leftrightarrow~ \mu_{q}^{r_{1}}\rho^{r_{2}}\sigma_{\xi}^{r_{3}}({\bf c}_{t})={\bf c}_{t}~{\rm and}~\mu_{q}^{r_{1}}\rho^{r_{2}}\sigma_{\xi}^{r_{3}}({\bf c}_{t}')={\bf c}_{t}'$$
		We deduce from the proof of Corollary \ref{c3.4} that
		\begin{small}
			\begin{align*}
				&\sum_{r_{1}=0}^{m-1}\sum_{r_{2}=0}^{n-1}\sum_{r_{3}=0}^{q-2}
				\Big| {\rm Fix}\big(\mu_{q}^{r_{1}}\rho^{r_{2}}\sigma_{\xi}^{r_{3}}\big)\Big|\\
				=&\sum_{r_{1}=0}^{m-1}{\rm gcd}\Big(\!n(q-1)(q^{{\rm gcd}(k,r_{1})}-1)\!\cdot\!{\rm gcd}\big(q^{{\rm gcd}(k,r_{1})}-1,\!\frac{q^{k}-1}{q-1},\!\frac{(1+ra_t)(q^{k}-1)}{rn}\big),\!\frac{(1+ra_t)(q^{k}-1)^{2}}{r}\Big).
			\end{align*}
		\end{small}		
		
		To sum up, we have
		\begin{small}
			\begin{align*}
				\qquad &\big|\langle\mu_{-1},\mu_{q},\rho,\sigma_{\xi} \rangle\backslash \mathcal{C}^{\sharp}\big|\\
				=&\frac{1}{2mn(q-1)}\left(\sum_{r_{1}=0}^{m-1}{\rm gcd}\big(n(q-1)(q^{{\rm gcd}(k,2 r_{1})}-1),
				2n(q^{k}-1),\frac{(1+ra_t)(q^{r_{1}}-1)(q-1)(q^{k}-1)}{r}\big)\right. \\
				&+\left.\!\sum_{r_{1}=0}^{m-1}{\rm gcd}\Big(n(q-1)(q^{{\rm gcd}(k,r_{1})}-1)\!\cdot\!{\rm gcd}\big(q^{{\rm gcd}(k,r_{1})}-1,\!\frac{q^{k}-1}{q-1},\!\frac{(1+ra_t)(q^{k}-1)}{rn}\big),\!\frac{(1+ra_t)(q^{k}-1)^{2}}{r}\Big)\!\!\right)\\
				=&\frac{1}{2m}\sum_{r_{1}=0}^{m-1}\left({\rm gcd}\big(q^{{\rm gcd}(k,2r_{1})}-1,
				\frac{2(q^{k}-1)}{q-1},\frac{(1+ra_t)(q^{r_{1}}-1)(q^{k}-1)}{rn}\big)\right. \\
				&+\left. {\rm gcd}\Big((q^{{\rm gcd}(k,r_{1})}-1)\!\cdot\!
				{\rm gcd}\big(q^{{\rm gcd}(k,r_{1})}-1,\frac{q^{k}-1}{q-1},\frac{(1+ra_t)(q^{k}-1)}{rn}\big),
				\frac{(1+ra_t)(q^{k}-1)^{2}}{rn(q-1)}\Big)\right).
			\end{align*}
		\end{small}
		
		\noindent The proof is then completed.
	\end{proof}
	
	By virtue of Lemmas \ref{x3.2} and \ref{x3.3}, the number of $\langle\mu_{-1},\mu_{q},\rho,\sigma_{\xi} \rangle$-orbits of $\mathcal{C}^{*}=\mathcal{C}\backslash \{\bf 0\}$ can be immediately obtained.
	
	\begin{Theorem}\label{t3.4}
		Suppose that $\lambda=-1$ {\rm (}i.e., $r=2${\rm )}, the primitive idempotent $\varepsilon_{t}$ {\rm (}$0 \leq t \leq s${\rm )} corresponds to the $q$-cyclotomic coset $\Gamma_{t}=\{1+ra_t, (1+ra_t)q,\cdots,(1+ra_t)q^{k-1}\}$ and $-(1+ra_t)\notin \Gamma_{t}$. Let $\mathcal{C}=\mathcal{R}^{(q)}_{n,\lambda}\varepsilon_{t}\bigoplus\mu_{-1}(\mathcal{R}^{(q)}_{n,\lambda}\varepsilon_{t})$.
		Then the number of $\langle\mu_{-1},\mu_{q},\rho,\sigma_{\xi} \rangle$-orbits of $\mathcal{C}^{*}=\mathcal{C}\backslash \{\bf 0\}$ is equal to
		\begin{small}
			\begin{align*}
				&\frac{1}{k}\sum_{h|k}\varphi\big(\frac{k}{h}\big){\rm gcd}\big(q^{h}-1,\frac{q^{k}-1}{q-1},\frac{(1+ra_t)(q^{k}-1)}{rn}\big)\\
				&+\frac{1}{2m}\sum_{h=0}^{m-1}\left({\rm gcd}\big(q^{{\rm gcd}(k,2h)}\!-\!1,
				\frac{2(q^{k}\!-\!1)}{q\!-\!1},\frac{(1\!+\!ra_t)(q^{h}\!-\!1)(q^{k}\!-\!1)}{rn}\big)\right.\\
				&+\left. {\rm gcd}\Big((q^{{\rm gcd}(k,h)}-1)\!\cdot\!
				{\rm gcd}\big(q^{{\rm gcd}(k,h)}-1,\frac{q^{k}-1}{q-1},\frac{(1+ra_t)(q^{k}-1)}{rn}\big),
				\frac{(1+ra_t)(q^{k}-1)^{2}}{rn(q-1)}\Big)\!\right).
			\end{align*}
		\end{small}
		
		\noindent In particular, the number of non-zero weights of $\mathcal{C}$ is less than or equal to the number of $\langle\mu_{-1},\mu_{q},\rho,\sigma_{\xi} \rangle$-orbits of $\mathcal{C}^{*}$, with equality if and only if for any two codewords ${\bf c}_{1},{\bf c}_{2}\in \mathcal{C}^{*}$ with the same weight, there exist integers $j_{0}$, $j_{1}$, $j_{2}$ and $j_{3}$ such that
		$\mu_{-1}^{j_{0}}\mu_{q}^{j_{1}}\rho^{j_{2}}{\sigma_{\xi}}^{j_{3}}({\bf c}_{1})={\bf c}_{2}$,
		where $0 \leq j_{0} \leq 1$, $0 \leq j_{1} \leq m-1$, $0 \leq j_{2} \leq n-1$ and $0 \leq j_{3} \leq q-2$..
	\end{Theorem}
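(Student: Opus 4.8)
The plan is to deduce the orbit count from Lemmas \ref{x3.2} and \ref{x3.3} by splitting $\mathcal{C}^{*}$ into two pieces that are each a union of orbits, and then to obtain the last assertion directly from Lemma \ref{l2.1}.

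First I would use the internal direct sum $\mathcal{C}=\mathcal{R}^{(q)}_{n,\lambda}\varepsilon_{t}\bigoplus\mu_{-1}(\mathcal{R}^{(q)}_{n,\lambda}\varepsilon_{t})$, which makes sense because the hypothesis $-(1+ra_t)\notin\Gamma_{t}$ forces $\mu_{-1}(\mathcal{R}^{(q)}_{n,\lambda}\varepsilon_{t})\cap\mathcal{R}^{(q)}_{n,\lambda}\varepsilon_{t}=\{\mathbf{0}\}$ and $\mu_{-1}^{2}(\mathcal{R}^{(q)}_{n,\lambda}\varepsilon_{t})=\mathcal{R}^{(q)}_{n,\lambda}\varepsilon_{t}$. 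Writing $\mathbf{c}=\mathbf{c}_{t}+\mathbf{c}_{t}'$ with $\mathbf{c}_{t}\in\mathcal{R}^{(q)}_{n,\lambda}\varepsilon_{t}$ and $\mathbf{c}_{t}'\in\mu_{-1}(\mathcal{R}^{(q)}_{n,\lambda}\varepsilon_{t})$, and separating according to which of $\mathbf{c}_{t},\mathbf{c}_{t}'$ vanishes, one obtains the disjoint union $\mathcal{C}^{*}=\mathcal{C}'\sqcup\mathcal{C}^{\sharp}$, where $\mathcal{C}'$ and $\mathcal{C}^{\sharp}$ are precisely the sets appearing in Lemmas \ref{x3.2} and \ref{x3.3}. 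Then I would observe that $\langle\mu_{-1},\mu_{q},\rho,\sigma_{\xi}\rangle$ preserves each of $\mathcal{C}'$ and $\mathcal{C}^{\sharp}$: the subcodes $\mathcal{R}^{(q)}_{n,\lambda}\varepsilon_{t}$ and $\mu_{-1}(\mathcal{R}^{(q)}_{n,\lambda}\varepsilon_{t})$ are each invariant under $\mu_{q},\rho,\sigma_{\xi}$, and $\mu_{-1}$ interchanges them, so every group element either fixes or swaps the ordered pair of summands and hence preserves both the property ``exactly one summand is zero'' and the property ``both summands are nonzero''. Therefore each $\langle\mu_{-1},\mu_{q},\rho,\sigma_{\xi}\rangle$-orbit of $\mathcal{C}^{*}$ lies inside $\mathcal{C}'$ or inside $\mathcal{C}^{\sharp}$, so the number of orbits of $\mathcal{C}^{*}$ is the sum of the number of orbits of $\mathcal{C}'$ and the number of orbits of $\mathcal{C}^{\sharp}$; substituting the values furnished by Lemmas \ref{x3.2} and \ref{x3.3} yields the displayed formula, with the first line coming from $\mathcal{C}'$ and the remaining two lines from $\mathcal{C}^{\sharp}$.

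For the ``in particular'' part I would invoke Lemma \ref{3.16}, which gives that $\langle\mu_{-1},\mu_{q},\rho,\sigma_{\xi}\rangle$ is a subgroup of ${\rm Aut}(\mathcal{C})$ and that each of its elements is uniquely of the form $\mu_{-1}^{j_{0}}\mu_{q}^{j_{1}}\rho^{j_{2}}\sigma_{\xi}^{j_{3}}$ in the stated ranges, and then apply Lemma \ref{l2.1} to conclude that the number of non-zero weights of $\mathcal{C}$ is at most this orbit count, with equality exactly when any two codewords of the same weight are related by such an element. The argument is short because all the substantive counting has already been absorbed into Lemmas \ref{x3.2} and \ref{x3.3}; the only step that deserves a little care is the verification that the group action respects the decomposition $\mathcal{C}^{*}=\mathcal{C}'\sqcup\mathcal{C}^{\sharp}$, and that is exactly where the assumption $-(1+ra_t)\notin\Gamma_{t}$ is used.
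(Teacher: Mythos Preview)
Your proposal is correct and follows essentially the same approach as the paper: decompose $\mathcal{C}^{*}$ as the disjoint union $\mathcal{C}'\cup\mathcal{C}^{\sharp}$, observe that this decomposition is $\langle\mu_{-1},\mu_{q},\rho,\sigma_{\xi}\rangle$-stable so the orbit counts add, then invoke Lemmas \ref{x3.2} and \ref{x3.3} for the two pieces and Lemma \ref{l2.1} (together with Lemma \ref{3.16}) for the final assertion. If anything, you supply slightly more detail than the paper does in justifying that the group action respects the splitting.
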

	
	\begin{proof}
		Note that
		$\mathcal{C}\backslash \{{\bf 0}\}=\mathcal{C}'\cup \mathcal{C}^{\sharp}$
		is a disjoint union. Then
		$$\big|\langle\mu_{-1},\mu_{q},\rho,\sigma_{\xi} \rangle\backslash \mathcal{C}^{*}\big|=\big|\big\langle \mu_{-1},\mu_{q},\rho,\sigma_{\xi} \big\rangle\big\backslash \mathcal{C}'\big|+\big|\big\langle \mu_{-1},\mu_{q},\rho,\sigma_{\xi} \big\rangle\big\backslash \mathcal{C}^{\sharp}\big|.$$
		The rest of the proof is clear with the help of Lemmas \ref{x3.2} and \ref{x3.3}.
	\end{proof}
	
	\begin{Remark}{\rm
			Let $\mathcal{C}$ be the negacyclic code in Theorem \ref{t3.4}. It follows from Corollary \ref{c3.4} and Theorem \ref{t3.4} that
			\begin{align*}
				\big|\langle\mu_{q},\rho,\sigma_{\xi} \rangle\backslash \mathcal{C}^{*}\big|-\big|\langle\mu_{-1},\mu_{q},\rho,\sigma_{\xi} \rangle\backslash \mathcal{C}^{*}\big|\geq & \big|\langle\mu_{q},\rho,\sigma_{\xi} \rangle\backslash \mathcal{C}'\big|-\big|\langle\mu_{-1},\mu_{q},\rho,\sigma_{\xi} \rangle\backslash \mathcal{C}'\big|\\
				=&\frac{1}{k}\sum_{h|k}\varphi\big(\frac{k}{h}\big){\rm gcd}\big(q^{h}-1,\frac{q^{k}-1}{q-1},\frac{(1+ra_t)(q^{k}-1)}{rn}\big)).
			\end{align*}
			Hence the upper bound on the number of non-zero weights of $\mathcal{C}$ given by Theorem \ref{t3.4} is less than that given by Corollary \ref{c3.4}.}
	\end{Remark}

	\begin{Example}{\rm
			Take $q=3$, $n=40$ and $\lambda=-1$.
			Let $\ell$ be the number of non-zero weights of the negacyclic code $\mathcal{C}\!=\!
			\mathcal{R}_{n}\varepsilon_{n,\lambda}^{(q)}\!\bigoplus\!\mu_{-1}(\mathcal{R}_{n,\lambda}^{(q)}\varepsilon_{3})\!=\!\mathcal{R}_{n,\lambda}^{(q)}\varepsilon_{3}\!\bigoplus\!\mathcal{R}_{n,\lambda}^{(q)}\varepsilon_{6}$,
			where the primitive idempotents $\varepsilon_{3}$ and $\varepsilon_{6}$ correspond to the $3$-cyclotomic cosets $\Gamma_{3}=\{11, 19, 33, 57\}$ and $\Gamma_{6}=\{23, 47, 61, 69\}$, respectively.
			By Corollary \ref{c3.4}, we have $\ell \leq 25$; by Theorem \ref{t3.4}, we have $\ell \leq 19$.
			After using Magma \cite{4}, we know that the weight enumerator of $\mathcal{C}$ is $1+160x^{21}+560x^{22}+320x^{23}+640x^{24}+640x^{25}+1120x^{26}\!+\!480x^{27}+640x^{28}+400x^{29}+960x^{30}+320x^{31}+320x^{32}$, which implies that the exact value of $\ell$ is 12.}
	\end{Example}
	
	\subsubsection{New upper bound on the number of non-zero weights of the constacyclic code $\mathcal{C}_{l_{0}}=\mathcal{R}^{(q)}_{n,\lambda}\varepsilon_{t}\bigoplus \mu_{(-1)^{l_{0}}p^{\frac{e}{2}}}(\mathcal{R}^{(q)}_{n,\lambda}\varepsilon_{t})$}	
	Let $q=p^{e}$, where $p$ is a prime and $e$ is even.
	For $0\leq t\leq s$, let $\mathcal{R}^{(q)}_{n,\lambda}\varepsilon_{t}$ be an irreducible $\lambda$-constacyclic code of length $n$ over $\mathbb{F}_{q}$ whose generating idempotent corresponds to the $q$-cyclotomic coset $\Gamma_t\!=\!\{1+ra_t,(1+ra_t)q,\cdots,(1+ra_t)q^{k-1}\}$ and let $l_{0} \!\in\!\{0, 1\}$.
	In this subsection, we assume that $(-1)^{l_{0}}p^{\frac{e}{2}} \!\in\! \mathbb{Z}^{*}_{rn} \cap (1+r\mathbb{Z}_{rn})$,
	then clearly $r\,|\,p^{\frac{e}{2}}-1$ when $l_{0}\!=\!0$, and $r\,|\,p^{\frac{e}{2}}+1$ when $l_{0}\!=\!1$.
	One can check that $\mu_{(-1)^{l_{0}}p^{\frac{e}{2}}}(\mathcal{R}^{(q)}_{n,\lambda}\varepsilon_{t})$ is also an irreducible $\lambda$-constacyclic code whose generating idempotent corresponds to the $q$-cyclotomic coset $(-1)^{l_{0}}p^{-\frac{e}{2}}\{1+ra_{t},(1+ra_{t})q,\cdots,(1+ra_{t})q^{k-1}\}$ (see \cite{24}).
	Therefore, $$\mu_{(-1)^{l_{0}}p^{\frac{e}{2}}}(\mathcal{R}^{(q)}_{n,\lambda}\varepsilon_{t})=\Big\{\sum_{j=0}^{k-1}\big(\sum_{v=0}^{k-1}c_{v}'\zeta^{v(-1)^{l_{0}}p^{-\frac{e}{2}}(1+ra_{t})q^{j}}\big)e_{(-1)^{l_{0}}p^{-\frac{e}{2}}(1+ra_{t})q^{j}}~\Big|~c_{v}'\in \mathbb{F}_{q}, 0 \leq v \leq k-1\Big\}.$$
	
	Suppose $(-1)^{l_{0}}p^{-\frac{e}{2}}(1+ra_{t})\!\notin\! \Gamma_{t}$, \!then \!$\mu_{(-1)^{l_{0}}p^{\frac{e}{2}}}(\mathcal{R}^{(q)}_{n,\lambda}\varepsilon_{t})\cap \mathcal{R}^{(q)}_{n,\lambda}\varepsilon_{t}\!=\!\{{\bf 0}\}$ \!and \!$\mu_{(-1)^{l_{0}}p^{\frac{e}{2}}}^{2}(\mathcal{R}^{(q)}_{n,\lambda}\varepsilon_{t})
	\!=\!\mu_{q}(\mathcal{R}^{(q)}_{n,\lambda}\varepsilon_{t})$.
	Let
	$$\mathcal{C}_{l_{0}}=\mathcal{R}^{(q)}_{n,\lambda}\varepsilon_{t}\bigoplus \mu_{(-1)^{l_{0}}p^{\frac{e}{2}}}(\mathcal{R}^{(q)}_{n,\lambda}\varepsilon_{t}).$$
	It is easy to see that
	$\mu_{(-1)^{l_{0}}p^{\frac{e}{2}}}\!\in\!{\rm Aut}(\mathcal{C}_{l_{0}})$.
	\!Since $\mu_{q}\!=\!\mu_{(-1)^{l_{0}}p^{\frac{e}{2}}}^{2}$, $\langle\mu_{q},\!\rho,\!\sigma_{\xi} \rangle$ is a subgroup of $\langle\mu_{(-1)^{l_{0}}p^{\frac{e}{2}}},\!\rho,\!\sigma_{\xi} \rangle$.
	Hence the number of $\langle\mu_{(-1)^{l_{0}}p^{\frac{e}{2}}}, \rho, \sigma_{\xi} \rangle$-orbits of $\mathcal{C}_{l_{0}}^{*}=\mathcal{C}_{l_{0}}\backslash\{0\}$ is less than or equal to the number of $\langle\mu_{q}, \rho, \sigma_{\xi} \rangle$-orbits of $\mathcal{C}_{l_{0}}^{*}$.
	Further we will show that the former is strictly less than the latter.
	
	As a preparation, we first prove the following lemma.
	\begin{lem}\label{L3}
		Let	$m$ be the order of $q$ in $\mathbb{Z}_{rn}^{*}$ and let $m_{_{l_{0}}}$ be the order of $(-1)^{l_{0}}p^{\frac{e}{2}}$ in $\mathbb{Z}_{rn}^{*}$.
		Suppose that $(-1)^{l_{0}}p^{-\frac{e}{2}}(1+ra_{t})\notin \Gamma_{t}$, then $m_{_{l_{0}}}=2m$.
	\end{lem}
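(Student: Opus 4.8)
The plan is to exploit the single identity $\big((-1)^{l_0}p^{e/2}\big)^{2}=p^{e}=q$, which says that the image of $(-1)^{l_0}p^{\frac{e}{2}}$ in $\mathbb{Z}_{rn}^{*}$ squares to $q$. First I would put $\beta=(-1)^{l_0}p^{\frac{e}{2}}\in\mathbb{Z}_{rn}^{*}$ (a unit by the standing assumption $(-1)^{l_0}p^{\frac{e}{2}}\in\mathbb{Z}^{*}_{rn}\cap(1+r\mathbb{Z}_{rn})$) and record $\beta^{2}=q$. Applying the elementary order formula $\mathrm{ord}_{rn}(\beta^{2})=\mathrm{ord}_{rn}(\beta)\big/\gcd(\mathrm{ord}_{rn}(\beta),2)$ gives $m=m_{l_0}\big/\gcd(m_{l_0},2)$, so $m_{l_0}\in\{m,2m\}$ and $m_{l_0}=2m$ precisely when $m_{l_0}$ is even, equivalently when $\beta^{m}\not\equiv1\pmod{rn}$. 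Thus the whole statement reduces to proving $\beta^{m}\not\equiv1\pmod{rn}$.

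For this I would split on the parity of $m$. If $m$ is even, then $\beta^{m}=(\beta^{2})^{m/2}=q^{m/2}$, and since $0<\tfrac{m}{2}<m=\mathrm{ord}_{rn}(q)$ we get $q^{m/2}\not\equiv1\pmod{rn}$; note this case does not use the hypothesis on $\Gamma_{t}$ at all. If $m$ is odd, I argue by contradiction: assume $\beta^{m}\equiv1\pmod{rn}$ and write $m=2t+1$. Then $\beta^{-1}\equiv\beta^{m-1}\equiv(\beta^{2})^{t}\equiv q^{t}\pmod{rn}$, and multiplying by $1+ra_{t}$ gives $(-1)^{l_0}p^{-\frac{e}{2}}(1+ra_{t})\equiv(1+ra_{t})q^{t}\pmod{rn}$. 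Since $\Gamma_{t}$ has cardinality $k$, we have $(1+ra_{t})q^{k}\equiv1+ra_{t}\pmod{rn}$, so reducing the exponent modulo $k$ yields $(1+ra_{t})q^{t}\equiv(1+ra_{t})q^{j}\pmod{rn}$ with $0\le j\le k-1$ and $j\equiv t\pmod{k}$, and this element lies in $\Gamma_{t}$. This contradicts the assumption $(-1)^{l_0}p^{-\frac{e}{2}}(1+ra_{t})\notin\Gamma_{t}$. Hence $\beta^{m}\not\equiv1\pmod{rn}$ in all cases, and therefore $m_{l_0}=2m$.

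The only points requiring a little care are that $\beta$ and $q$ are genuine units modulo $rn$ (the former is assumed in the setup, the latter because $\gcd(rn,q)=1$) and the bookkeeping when reducing $q^{t}$ modulo the coset size $k$. I do not anticipate a real obstacle: the essential observation is that $\beta^{2}=q$ forces $m_{l_0}\in\{m,2m\}$, and the hypothesis $(-1)^{l_0}p^{-e/2}(1+ra_{t})\notin\Gamma_{t}$ is exactly what rules out $m_{l_0}=m$ in the remaining (odd $m$) case, while the even case is automatic.
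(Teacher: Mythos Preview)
Your proof is correct and rests on the same core identity $\beta^{2}=q$ that the paper uses; the paper simply organizes it a bit differently, showing directly that \emph{every} odd power of $\beta$ is $\not\equiv 1\pmod{rn}$ (since $\beta^{2l-1}\equiv 1$ would give $\beta^{-1}\equiv q^{l-1}$ and hence $(-1)^{l_0}p^{-e/2}(1+ra_t)\in\Gamma_t$), so $m_{l_0}$ is even, and then squeezes via $m\mid m_{l_0}/2$ and $m_{l_0}\mid 2m$, thereby avoiding your parity split on $m$. One cosmetic fix: in ``write $m=2t+1$'' you are reusing the symbol $t$, which already indexes $\Gamma_t$ and $a_t$; rename it (say to $s$) to avoid ambiguity.
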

	\begin{proof}
		Since $(-1)^{l_{0}}p^{-\frac{e}{2}}(1+ra_{t})\notin \{1+ra_{t}, (1+ra_{t})q,\cdots,(1+ra_{t})q^{k-1}\}$, we then have $(-1)^{l_{0}}p^{-\frac{e}{2}}q^{l}\not\equiv 1 \pmod{rn}$, or equivalently, $\big((-1)^{l_{0}}p^{\frac{e}{2}}\big)^{2l-1}\not\equiv 1 \pmod{rn}$ for any nonnegative integer $l$, and hence $m_{_{l_{0}}}$ is even. It is easy to see that $q^{\frac{m_{_{l_{0}}}}{2}}\equiv \big((-1)^{l_{0}}p^{\frac{e}{2}}\big)^{m_{_{l_{0}}}}\equiv 1 \pmod{rn}$ and $\big((-1)^{l_{0}}p^{\frac{e}{2}}\big)^{2m}\equiv q^{m}\equiv 1 \pmod{rn}$, and so $m \,|\, \frac{m_{_{l_{0}}}}{2}$ and $m_{_{l_{0}}}\,|\, 2m$, which implies that $m_{_{l_{0}}}=2m$.
	\end{proof}
	
	\begin{Theorem}\label{t3.6}
		Suppose that the primitive idempotent $\varepsilon_{t}~(0\leq t \leq s)$ corresponds to the $q$-cyclotomic coset $\Gamma_{t}=\{1+ra_t,(1+ra_t)q,\cdots,(1+ra_t)q^{k-1}\}$
		and $(-1)^{l_{0}}p^{-\frac{e}{2}}(1+ra_{t})\notin \Gamma_{t}$, where $l_0 \in \{0,1\}$.
		Let $$\mathcal{C}_{l_{0}}=\mathcal{R}^{(q)}_{n,\lambda}\varepsilon_{t}
		\bigoplus\mu_{(-1)^{l_{0}}p^{\frac{e}{2}}}(\mathcal{R}^{(q)}_{n,\lambda}\varepsilon_{t}).$$
		Then the number of $\langle\mu_{(-1)^{l_{0}}p^{\frac{e}{2}}},\rho,\sigma_{\xi} \rangle$-orbits of $\mathcal{C}_{l_{0}}^{*}=\mathcal{C}_{l_{0}}\backslash \{\bf 0\}$ is equal to
		\begin{small}
			\begin{align*}
				&\frac{1}{k}\sum_{h|k}\varphi(\frac{k}{h}){\rm gcd}\big(q^{h}-1,\frac{q^{k}-1}{q-1},\frac{(1+ra_t)(q^{k}-1)}{rn}\big) \\
				&+\frac{1}{2m}\sum_{h=0}^{m-1}\left({\rm gcd}\big(q^{{\rm gcd}(k,2h+1)}-1,\frac{2(q^{k}-1)}{q-1},\frac{[(-1)^{l_{0}}p^{-\frac{e}{2}}+q^{h}](1+ra_{t})(q^{k}-1)}{rn}\big)\right.\\
				&+\left. {\rm gcd}\Big((q^{{\rm gcd}(k,h)}-1){\rm gcd}\big(q^{{\rm gcd}(k,h)}-1,\frac{q^{k}-1}{q-1},\frac{(1+ra_{t})(q^{k}-1)}{rn}\big), \frac{[(-1)^{l_{0}}p^{-\frac{e}{2}}-1](1+ra_{t})(q^{k}-1)^{2}}{rn(q-1)}\Big)\right).
			\end{align*}
		\end{small}
		
		\noindent In particular, the number of non-zero weights of $\mathcal{C}_{l_{0}}$ is less than or equal to the number of $\langle\mu_{(-1)^{l_{0}}p^{\frac{e}{2}}},\!
		\rho,\!\sigma_{\xi} \rangle$-orbits of $\mathcal{C}_{l_{0}}^{*}$, with equality if and only if for any two codewords ${\bf c}_{1},{\bf c}_{2}\in \mathcal{C}_{l_{0}}^{*}$ with the same weight, there exist integers $j_{1}$, $j_{2}$ and $j_{3}$ such that $\mu_{(-1)^{l_{0}}p^{\frac{e}{2}}}^{j_{1}}\rho^{j_{2}}(\xi^{j_{3}}{\bf c}_{1})={\bf c}_{2}$,
		where $0 \leq j_{1} \leq 2m-1$, $0 \leq j_{2} \leq n-1$ and $0 \leq j_{3} \leq q-2$,.
	\end{Theorem}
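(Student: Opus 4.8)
The plan is to run, for $G:=\langle\mu_{(-1)^{l_{0}}p^{e/2}},\rho,\sigma_{\xi}\rangle$, the same scheme used to prove Theorem~\ref{t3.4}. Write $\eta:=(-1)^{l_{0}}p^{e/2}$, so that $\eta^{2}=q$ as integers and hence $\mu_{\eta}^{2}=\mu_{q}$, while by Lemma~\ref{L3} the order of $\eta$ in $\mathbb{Z}_{rn}^{*}$ is $2m$. Lemma~\ref{l2.2} applied with $a=\eta$ then gives $|G|=2mn(q-1)$ and that every element of $G$ has a unique expression $\mu_{\eta}^{r_{1}}\rho^{r_{2}}\sigma_{\xi}^{r_{3}}$ with $0\le r_{1}\le 2m-1$, $0\le r_{2}\le n-1$, $0\le r_{3}\le q-2$. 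Since the primitive idempotent of $\mu_{\eta}(\mathcal{R}^{(q)}_{n,\lambda}\varepsilon_{t})$ corresponds to $\eta^{-1}\Gamma_{t}$ and $\eta^{-2}\Gamma_{t}=q^{-1}\Gamma_{t}=\Gamma_{t}$, the automorphism $\mu_{\eta}$ interchanges $\mathcal{R}^{(q)}_{n,\lambda}\varepsilon_{t}$ and $\mu_{\eta}(\mathcal{R}^{(q)}_{n,\lambda}\varepsilon_{t})$, whereas the index-$2$ subgroup $H:=\langle\mu_{q},\rho,\sigma_{\xi}\rangle$ of $G$ (the elements with $r_{1}$ even) stabilises each of them. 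As for Theorem~\ref{t3.4}, I would split $\mathcal{C}_{l_{0}}^{*}=\mathcal{C}'\cup\mathcal{C}^{\sharp}$ into a disjoint union, with $\mathcal{C}'=\big(\mathcal{R}^{(q)}_{n,\lambda}\varepsilon_{t}\backslash\{{\bf 0}\}\big)\cup\big(\mu_{\eta}(\mathcal{R}^{(q)}_{n,\lambda}\varepsilon_{t})\backslash\{{\bf 0}\}\big)$ and $\mathcal{C}^{\sharp}=\big(\mathcal{R}^{(q)}_{n,\lambda}\varepsilon_{t}\backslash\{{\bf 0}\}\big)\bigoplus\big(\mu_{\eta}(\mathcal{R}^{(q)}_{n,\lambda}\varepsilon_{t})\backslash\{{\bf 0}\}\big)$, and evaluate $|G\backslash\mathcal{C}'|$ and $|G\backslash\mathcal{C}^{\sharp}|$ separately.

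For $\mathcal{C}'$: it is a disjoint union of two $H$-stable blocks that $G/H$ swaps, so each $G$-orbit in $\mathcal{C}'$ meets $\mathcal{R}^{(q)}_{n,\lambda}\varepsilon_{t}\backslash\{{\bf 0}\}$ in exactly one $H$-orbit; hence $|G\backslash\mathcal{C}'|=|H\backslash(\mathcal{R}^{(q)}_{n,\lambda}\varepsilon_{t}\backslash\{{\bf 0}\})|$, which is the quantity already evaluated in Theorem~\ref{t3.1} and gives the first displayed line of the statement (the analogue of Lemma~\ref{x3.2}). For $\mathcal{C}^{\sharp}$ I would apply Burnside's lemma~(\ref{e2.1}) over $G$: for ${\bf c}={\bf c}_{t}+{\bf c}_{t}'$ with ${\bf c}_{t}=\sum_{j}\alpha^{q^{j}}e_{(1+ra_{t})q^{j}}$ and ${\bf c}_{t}'=\sum_{j}(\alpha')^{q^{j}}e_{\eta^{-1}(1+ra_{t})q^{j}}$ (where $\alpha,\alpha'\in\mathbb{F}_{q^{k}}^{*}$), compute the fixed points of $\mu_{\eta}^{r_{1}}\rho^{r_{2}}\sigma_{\xi}^{r_{3}}$, splitting on the parity of $r_{1}$. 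When $r_{1}=2h$ the transformation $\mu_{q}^{h}\rho^{r_{2}}\sigma_{\xi}^{r_{3}}$ preserves each summand, so the fixed-point condition is the pair $\xi^{r_{3}}\alpha^{q^{h}-1}=\zeta^{-(1+ra_{t})q^{h}r_{2}}$ and $\xi^{r_{3}}(\alpha')^{q^{h}-1}=\zeta^{-\eta^{-1}(1+ra_{t})q^{h}r_{2}}$; this is exactly the situation of Corollary~\ref{c3.4} for the two size-$k$ cosets $\Gamma_{t}$ and $\eta^{-1}\Gamma_{t}=\{1+ra_{t'},\dots\}$, for which $r(a_{t'}-a_{t})\equiv(\eta^{-1}-1)(1+ra_{t})\pmod{rn}$, so this half contributes the second $\gcd$-summand of the statement with $(-1)^{l_{0}}p^{-e/2}-1$ in the role of $a_{t'}-a_{t}$. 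When $r_{1}=2h+1$ the transformation interchanges the two summands, so the fixed-point condition becomes $\mu_{\eta}^{r_{1}}\rho^{r_{2}}\sigma_{\xi}^{r_{3}}({\bf c}_{t})={\bf c}_{t}'$ together with $\mu_{\eta}^{r_{1}}\rho^{r_{2}}\sigma_{\xi}^{r_{3}}({\bf c}_{t}')={\bf c}_{t}$; writing these out on the Fourier basis (carefully tracking which $e_{(1+ra_{t})q^{j}}$ each Frobenius twist lands on, and the index shifts induced by $\mu_{\eta}^{2h+1}$) and eliminating $\alpha'$ collapses them, using $\xi^{q^{h+1}}=\xi$, to the single equation $\xi^{2r_{3}}\alpha^{q^{2h+1}-1}=\zeta^{-q^{h+1}[(-1)^{l_{0}}p^{-e/2}+q^{h}](1+ra_{t})r_{2}}$, whose number of solutions $\alpha\in\mathbb{F}_{q^{k}}^{*}$ is $0$ or $q^{{\rm gcd}(k,2h+1)}-1$; counting the admissible pairs $(r_{2},r_{3})$ by the same arguments with the cyclic groups $\langle\xi\rangle$, $\langle\xi^{2}\rangle$ and $\langle\theta^{q^{r_{1}}-1}\rangle$ inside $\mathbb{F}_{q^{m}}^{*}$ as in Theorem~\ref{t3.1} and Lemma~\ref{x3.3} yields the first $\gcd$-summand, the factor $\frac{2(q^{k}-1)}{q-1}$ arising from $\xi^{2r_{3}}$ (and from ${\rm gcd}(q-1,2)$).

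Finally I would add $|G\backslash\mathcal{C}_{l_{0}}^{*}|=|G\backslash\mathcal{C}'|+|G\backslash\mathcal{C}^{\sharp}|$, divide the Burnside sums by $|G|=2mn(q-1)$, and reorganise the $\mathcal{C}'$-part into a sum over the divisors of $k$ weighted by $\varphi(k/h)$ (using $k\mid m$ together with $\sum_{0\le r_{1}\le k-1}f({\rm gcd}(k,r_{1}))=\sum_{h\mid k}\varphi(k/h)f(h)$), while leaving the two $\mathcal{C}^{\sharp}$-sums as sums over $0\le h\le m-1$ divided by $2m$; this yields the asserted formula, and the statement about non-zero weights together with the equality condition is then immediate from Lemma~\ref{l2.1}. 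I expect the main obstacles to be twofold. First, the exact exponent bookkeeping in the odd-$r_{1}$ case: obtaining the coefficient $(-1)^{l_{0}}p^{-e/2}+q^{h}$ demands care about how $\mu_{\eta}^{2h+1}$ permutes the $e_{(1+ra_{t})q^{j}}$ and how the two Frobenius twists coming from the two swap conditions combine. Second, the chain of $\gcd$ identities needed to pass from the raw Burnside expressions (written via $|\langle\xi\rangle\cap\langle\theta^{q^{r_{1}}-1}\rangle|$, $|\langle\xi\rangle\langle\theta^{q^{r_{1}}-1}\rangle|$ and their $\xi^{2}$-analogues, and using that $\eta$ is coprime to $rn$ and ${\rm gcd}(\eta^{-1},\eta^{-1}-1)=1$) to the clean $\gcd$-shapes in the statement — the same delicate but routine simplification already carried out in the proofs of Theorem~\ref{t3.1}, Corollary~\ref{c3.4} and Lemma~\ref{x3.3}.
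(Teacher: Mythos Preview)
Your plan is correct and follows essentially the same route as the paper: decompose $\mathcal{C}_{l_{0}}^{*}=\mathcal{C}'\cup\mathcal{C}^{\sharp}$, recover the Theorem~\ref{t3.1} count on $\mathcal{C}'$, and on $\mathcal{C}^{\sharp}$ apply Burnside's lemma splitting on the parity of the $\mu_{\eta}$-exponent, reducing the even case to the computation underlying Corollary~\ref{c3.4} and the odd case to a single equation $\xi^{2r_{3}}\alpha^{q^{2h+1}-1}=\zeta^{-q^{h+1}[(-1)^{l_{0}}p^{-e/2}+q^{h}](1+ra_{t})r_{2}}$ exactly as in Lemma~\ref{x3.3}. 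The only cosmetic difference is that for $\mathcal{C}'$ you use the direct orbit argument ($G/H$ swaps the two $H$-stable blocks, hence $|G\backslash\mathcal{C}'|=|H\backslash(\mathcal{R}^{(q)}_{n,\lambda}\varepsilon_{t}\setminus\{\mathbf{0}\})|$) instead of redoing the Burnside count as the paper does; both yield the same value.
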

	
	\begin{proof}
		Denote
		$$\mathcal{C}_{l_{0}}'=\big(\mathcal{R}^{(q)}_{n,\lambda}\varepsilon_{t}\backslash \{{\bf 0}\}\big)\cup \big(\mu_{(-1)^{l_{0}}p^{\frac{e}{2}}}(\mathcal{R}^{(q)}_{n,\lambda}\varepsilon_{t})\backslash \{{\bf 0}\}\big)
		~{\rm and}~\mathcal{C}_{l_{0}}^{\sharp}=\mathcal{R}^{(q)}_{n,\lambda}\varepsilon_{t}\backslash \{{\bf 0}\}\bigoplus\mu_{(-1)^{l_{0}}p^{\frac{e}{2}}}(\mathcal{R}^{(q)}_{n,\lambda}\varepsilon_{t})\backslash \{{\bf 0}\}.$$
		Note that $\mathcal{C}_{l_{0}}\backslash \{{\bf 0}\}=\mathcal{C}_{l_{0}}'\cup \mathcal{C}_{l_{0}}^{\sharp}$
		is a disjoint union, and hence
		$$\big|\langle\mu_{(-1)^{l_{0}}p^{\frac{e}{2}}},\rho,\sigma_{\xi} \rangle\backslash \mathcal{C}_{l_{0}}^{*}\big|=\big|\langle\mu_{(-1)^{l_{0}}p^{\frac{e}{2}}},\rho,\sigma_{\xi} \rangle\backslash \mathcal{C}_{l_{0}}'\big|+\big|\langle\mu_{(-1)^{l_{0}}p^{\frac{e}{2}}},\rho,\sigma_{\xi} \rangle\backslash \mathcal{C}_{l_{0}}^{\sharp}\big|.$$
		
		For $0\leq r_1\leq 2m-1$,	it is easy to see that $\mu_{(-1)^{l_{0}}p^{\frac{e}{2}}}^{r_{1}}=\mu_{(-1)^{l_{0}}p^{\frac{e}{2}}}\mu_{q}^{\frac{r_{1}-1}{2}}$ if $r_1$ is odd and $\mu_{(-1)^{l_{0}}p^{\frac{e}{2}}}^{r_{1}}=\mu_{q}^{\frac{r_{1}}{2}}$ if $r_1$ is even; in addition, $\mu_{(-1)^{l_{0}}p^{\frac{e}{2}}}(e_{(1+ra_t)q^j})=e_{(-1)^{l_{0}}p^{-\frac{e}{2}}(1+ra_t)q^j}$. Similar calculations as in Lemmas \ref{x3.2} and \ref{x3.3} show that
		\begin{align*}
			&\big|\langle\mu_{(-1)^{l_{0}}p^{\frac{e}{2}}},\rho,\sigma_{\xi} \rangle\backslash \mathcal{C}_{l_0}'\big|\\
			=&\frac{1}{2mn(q-1)}\Big(\sum_{r_{1}=0}^{2m-1}\sum_{r_{2}=0}^{n-1}\sum_{r_{3}=0}^{q-2}\Big|\big\{{\bf c}\in \mathcal{R}_{n}\varepsilon_{t}\backslash \{{\bf 0}\}~\big|~\mu_{(-1)^{l_{0}}p^{\frac{e}{2}}}^{r_{1}}\rho^{r_{2}}\sigma_{\xi}^{r_{3}}({\bf c})={\bf c}\big\}\Big|\\
			& +\sum_{r_{1}=0}^{2m-1}\sum_{r_{2}=0}^{n-1}\sum_{r_{3}=0}^{q-2}\Big|\big\{{\bf c}\in \mu_{(-1)^{l_{0}}p^{\frac{e}{2}}}(\mathcal{R}_{n}\varepsilon_{t})\backslash \{{\bf 0}\}~\big|~\mu_{(-1)^{l_{0}}p^{\frac{e}{2}}}\rho^{r_{2}}\sigma_{\xi}^{r_{3}}({\bf c})={\bf c}\big\}\Big|\Big)\\
			=&\frac{2}{2mn(q-1)}\sum_{r_{1}=0}^{m-1}{\rm gcd}\big(n(q-1)(q^{{\rm gcd}(k,r_{1})}-1),
			n(q^{k}-1),\frac{(1+ra_t)(q-1)(q^{k}-1)}{r}\big)\\
			=&\frac{1}{m}\sum_{r_{1}=0}^{m-1}{\rm gcd}\big(q^{{\rm gcd}(k,r_{1})}-1,\frac{q^{k}-1}{q-1},\frac{(1+ra_t)(q^{k}-1)}{rn}\big)\\
			=&\frac{1}{k}\sum_{h|k}\varphi(\frac{k}{h}){\rm gcd}\big(q^{h}-1,\frac{q^{k}-1}{q-1},\frac{(1+ra_t)(q^{k}-1)}{rn}\big),
		\end{align*}
		and
		\begin{small}
			\begin{align*}
				&\big|\langle\mu_{(-1)^{l_{0}}p^{\frac{e}{2}}},\rho,\sigma_{\xi} \rangle\backslash \mathcal{C}_{l_{0}}^{\sharp}\big|\\
				=&\frac{1}{2mn(q-1)}\sum_{r_{1}=0}^{2m-1}\sum_{r_{2}=0}^{n-1}\sum_{r_{3}=0}^{q-2}\Big|\big\{{\bf c}\in \mathcal{C}_{l_{0}}^{\sharp}~\big|~\mu_{(-1)^{l_{0}}p^{\frac{e}{2}}}^{r_{1}}\rho^{r_{2}}\sigma_{\xi}^{r_{3}}({\bf c})={\bf c}\big\}\Big|\\
				=&\frac{1}{2m}\sum_{r_{1}=0}^{m-1}\!\left(\! {\rm gcd}\big(q^{{\rm gcd}(k,2r_{1}+1)}-1,\frac{2(q^{k}-1)}{q-1},\frac{[(-1)^{l_{0}}p^{-\frac{e}{2}}+q^{r_{1}}](1+ra_t)(q^{k}-1)}{rn}\big)\right.\\
				&+\!\left. {\rm gcd}\Big(\!(q^{{\rm gcd}(k,r_{1})}-1)\!\cdot\!{\rm gcd}
				\big(q^{{\rm gcd}(k,r_{1})}-1,\frac{q^{k}-1}{q-1},\frac{(1+ra_t)(q^{k}-1)}{rn}\!\big), \frac{[(-1)^{l_{0}}p^{-\frac{e}{2}}-1](1+ra_t)(q^{k}-1)^{2}}{rn(q-1)}\Big)\!\!\right).
			\end{align*}
		\end{small}
		
		\noindent The desired result then follows immediately.
	\end{proof}

	\begin{Remark}{\rm
			Let $\mathcal{C}_{l_{0}}$ be the $\lambda$-constacyclic code in Theorem \ref{t3.6}. It follows from Corollary \ref{c3.4} and Theorem \ref{t3.6} that
			\begin{align*}
				\big|\langle\mu_{q},\rho,\sigma_{\xi} \rangle\backslash \mathcal{C}_{l_{0}}^{*}\big|-\big|\langle\mu_{(-1)^{l_{0}}p^{\frac{e}{2}}},\rho,\sigma_{\xi} \rangle\backslash \mathcal{C}_{l_{0}}^{*}\big|
				\geq &\big|\langle\mu_{q},\rho,\sigma_{\xi} \rangle\backslash \mathcal{C}_{l_{0}}^{'}\big|-\big|\langle\mu_{(-1)^{l_{0}}p^{\frac{e}{2}}},\rho,\sigma_{\xi} \rangle\backslash \mathcal{C}_{l_{0}}^{'}\big|\\
				=&\frac{1}{k}\sum_{h|k}\varphi\big(\frac{k}{h}\big){\rm gcd}\big(q^{h}-1,\frac{q^{k}-1}{q-1},\frac{(1+ra_{t})(q^{k}-1)}{rn}\big).
			\end{align*}
			
			Therefore, the upper bound on the number of non-zero weights of $\mathcal{C}_{l_{0}}$ given by Theorem \ref{t3.6} is less than that given by Corollary \ref{c3.4}.}
	\end{Remark}
	
	\begin{Example}{\rm
			Take $q=9$, $n=40$ and $\lambda=2$.
			We first consider the the $\lambda$-constacyclic code $\mathcal{C}_{0}=\mathcal{R}_{n,\lambda}^{(q)}\varepsilon_{0}\bigoplus \mu_{3}(\mathcal{R}_{n,\lambda}^{(q)}\varepsilon_{0})=\mathcal{R}_{n,\lambda}^{(q)}\varepsilon_{0}\bigoplus \mathcal{R}_{n,\lambda}^{(q)}\varepsilon_{1}$, where the primitive idempotents $\varepsilon_{0}$ and $\varepsilon_{1}$ correspond to the $9$-cyclotomic cosets $\Gamma_{0}=\{1,9\}$ and $\Gamma_{1}=\{3,27\}$, respectively.
			Let $\ell_{0}$ be the number of non-zero weights of $\mathcal{C}_{0}$. By Corollary \ref{c3.4}, we have $\ell_{0}\leq 14.$ However,  using Theorem \ref{t3.6} we have $\ell_{0}\leq 9$.	 After using Magma \cite{4}, we know that the weight enumerator of $\mathcal{C}_{0}$ is $1+320x^{27}+6240x^{36}$, which implies that $\ell_{0}=2$.
			
			In addition, consider the $\lambda$-constacyclic code $\mathcal{C}_{1}=\mathcal{R}_{n,\lambda}^{(q)}\varepsilon_{0}\bigoplus \mu_{-3}(\mathcal{R}_{n,\lambda}^{(q)}\varepsilon_{0})
			=\mathcal{R}_{n,\lambda}^{(q)}\varepsilon_{0}\bigoplus \mathcal{R}_{n,\lambda}^{(q)}\varepsilon_{17}$, where the primitive idempotents $\varepsilon_{0}$ and $\varepsilon_{17}$ correspond to the $9$-cyclotomic cosets $\Gamma_{0}=\{1,9\}$ and $\Gamma_{17}=\{53,77\}$, respectively. Let $\ell_{1}$ be the number of non-zero weights of  $\mathcal{C}_{1}$. One can verify that by Corollary \ref{c3.4} we have $\ell_{1}\leq 26$, and by Theorem \ref{t3.6} we have $\ell_{1}\leq 14$. After using Magma \cite{4}, we see that the weight enumerator of $\mathcal{C}_{1}$ is $1+160x^{28}+1280x^{32}+800x^{34}+2720x^{36}+640x^{38}+960x^{40}$, which implies that $\ell_{1}=6$. }
	\end{Example}
	
	\section{Concluding remarks and future works}
	
	In this paper,
	we improve the upper bounds in \cite{21} on the number of non-zero weights of any simple-root $\lambda$-constacyclic code by replacing $\langle\rho,\sigma_{\xi} \rangle$ with larger subgroups of the automorphism group of the code.
	Firstly, by calculating the number of $\langle\mu_{q},\rho,\sigma_{\xi} \rangle$-orbits of $\mathcal{C}\backslash\{\bf 0\}$, we present an explicit upper bound on the number of non-zero weights of $\mathcal{C}$ and propose a necessary and sufficient condition for $\mathcal{C}$ to meet the upper bound.
	Many examples in this paper show that our upper bound is tight, and in some cases, it is strictly smaller than the one presented in \cite{21} (see subsections 3.1 and 3.2).
	In addition, for the constacyclic code $\mathcal{C}$ belonging to two special types, we obtain a smaller upper bound on the number of non-zero weights of $\mathcal{C}$ by substituting $\langle\mu_{q},\rho,\sigma_{\xi} \rangle$ with a larger subgroup of ${\rm Aut}(\mathcal{C})$ (see subsection 3.3).
	
	A possible direction for future work is to find new few-weight constacyclic codes based on the main results presented in this paper. It would be valuable to investigate tight upper bounds on the number of Hamming weights of repeated-root constacyclic codes.

	\noindent{\bf Acknowledgements.}
	
	\!This work was supported by National Natural Science Foundation of China under Grant
	Nos.\,12271199, 12171191 and supported by self-determined research funds of CCNU from the colleges' basic research and operation of MOE CCNU22JC001.

\end{document}